\documentclass[sort&compress,preprint,12pt]{elsarticle}





\usepackage{amsthm,amssymb,amsmath,amsfonts,amsthm,amscd,graphicx} 
\usepackage[abbrev]{amsrefs}
\usepackage{tikz,mathtools,relsize}
\usepackage{mathtools}
\usepackage{upgreek}
\usepackage{comment}
\usepackage{lipsum}
\usepackage[normalem]{ulem}
\usepackage{mwe}
\usepackage[font=small]{caption} 
\usepackage[font=small]{subcaption}
\usepackage[export]{adjustbox}
\usepackage{multicol,color}
\usepackage{lineno}
\usepackage{xcolor}
\usepackage{tikz}

\usetikzlibrary{arrows}

\tikzset{
    vertex/.style={circle,draw,minimum size=2.5em},
    edge/.style={->,> = latex'}
}

\usepackage{hyperref}
\graphicspath{ {./Latex_source/} }

\allowdisplaybreaks

\newtheorem{lemma}{Lemma}
\newtheorem{theorem}{Theorem}
\newtheorem{corollary}{Corollary}[theorem]
\newtheorem{remark}{Remark}
\newtheorem{assumption}{Assumption}
\newtheorem{definition}{Definition}[section]

\newcommand{\be}{\begin{equation}}
\newcommand{\ee}{\end{equation}}
\newcommand{\bea}{\begin{eqnarray}}
\newcommand{\eea}{\end{eqnarray}}
\newcommand{\beas}{\begin{eqnarray*}}
\newcommand{\eeas}{\end{eqnarray*}}
\newcommand{\ba}{\begin{array}}
\newcommand{\ea}{\end{array}}

\def\al#1\eal{\begin{align}#1\end{align}}
\def\als#1\eals{\begin{align*}#1\end{align*}}

\newcommand{\norm}[1]{\ensuremath{\left\|{#1}\right\|}}

\newcommand{\trinorm}[1]{{\left\vert\kern-0.15ex\left\vert\kern-0.15ex\left\vert #1 
    \right\vert\kern-0.15ex\right\vert\kern-0.15ex\right\vert}}
\newcommand{\pare}[1]{\left({}#1\right)}
\newcommand{\p}[1]{\left({}#1\right)}
\newcommand{\curly}[1]{\left\{{}#1\right\}}
\newcommand{\abs}[1]{\ensuremath{\left\lvert{#1}\right\rvert}}

\newcommand{\inv}{^{-1}}

\def\div{\nabla \cdot}
\newcommand{\divergence}{\nabla \cdot}
\newcommand{\Grad}{\ensuremath{\nabla}}

\newcommand{\Rey}{\ensuremath{\mathrm{Re}}}
\newcommand{\ohm}{ \Omega }

\newcommand{\bfe}{\ensuremath{\mathbf{e}}}
\newcommand{\bff}{\ensuremath{\mathbf{f}}}

\newcommand{\bfx}{\ensuremath{\mathbf{x}}}

\newcommand{\bfu}{\ensuremath{\mathbf{u}}}
\newcommand{\bfv}{\ensuremath{\mathbf{v}}}

\newcommand{\bfw}{\ensuremath{\mathbf{w}}}

\newcommand{\ud}{\ensuremath{\mathbf{u}}_d}
\newcommand{\ur}{\ensuremath{\mathbf{u}}_r}

\newcommand{\vr}{\ensuremath{\mathbf{v}_r}}

\renewcommand{\wr}{\ensuremath{\mathbf{w}}_r}
\newcommand{\delt}{\ensuremath{\Delta t}}

\newcommand{\bfphi}{\ensuremath{\boldsymbol{\phi}}}

\newcommand{\phir}{\ensuremath{\bfphi_r}}
\newcommand{\bfeta}{\ensuremath{\boldsymbol{\eta}}}

\def\bu{{\bf u}}
\def\bw{{\bf w}}

\def\uwr{{\underline {w}}_r} 
 
\def\uzero{{\underline {0}}}

\def\bX{{\bf X}}
\def\bx{{\bf x}}

\def\bv{{\bf v}}
\newcommand{\bphi}{\boldsymbol{\varphi}}

\def\Ll2{\Lambda^r_{L^2}}  
\def\Lh10{\Lambda^r_{H^1_0}}  
\def\sqrtLl2h10{\sqrt{\Ll2\Lh10}}  
\def\epl2{\varepsilon_{L^2}}
\def\eph10{\varepsilon_{H^1_0}}
\def\etal2{\eta_{L^2}}
\newcommand{\ROMError}{\varepsilon_{\text{ROM}}}
\newcommand{\ClosureError}{\varepsilon_{\text{closure}}}

\newcommand{\etalimit}{\eta_{\text{consistency}}}
\newcommand{\eplimit}{\varepsilon_{\text{consistency}}}

\def\bu{{\bf u}}

\def\bw{{\bf w}}
\def\uzero{{\underline {0}}}

\def\bX{{\bf X}}
\def\bx{{\bf x}}

\def\bv{{\bf v}}
\newcommand{\de}[2]{\frac{d #1}{d #2} }
\def\ua{{\underline a}}
\def\ub{{\underline b}}
\def\uzero{{\underline 0}}
\def\wtua{\widetilde{\underline a}}
\def\wtA{{\widetilde{A}}}

\journal{ ArXiv}

\begin{document}

\begin{frontmatter}



\title{ Verifiability and Limit Consistency of
Eddy Viscosity 
Large Eddy Simulation Reduced Order Models}

\author[VT]{Jorge Reyes\corref{cor}}
\ead{reyesj@vt.edu}
\author[VT]{Ping-Hsuan Tsai}
\ead{pinghsuan@vt.edu@}
\author[VT]{Ian Moore}
\ead{ianm9123@vt.edu}
\author[VT]{Honghu Liu}
\ead{hhliu@vt.edu}
\author[VT]{Traian Iliescu}
\ead{iliescu@vt.edu}

\cortext[cor]{Corresponding author}

\address[VT]{Department of Mathematics, Virginia Tech, Blacksburg,VA 24061, USA}

\begin{abstract}
Large eddy simulation reduced order models (LES-ROMs) are ROMs that leverage LES ideas (e.g., filtering and closure modeling) to construct accurate and efficient ROMs for convection-dominated (e.g., turbulent) flows.
Eddy viscosity (EV) ROMs (e.g., Smagorinsky ROM (S-ROM)) 
are LES-ROMs whose closure model consists of a diffusion-like operator in which the viscosity depends on the ROM velocity. 
We propose the Ladyzhenskaya ROM (L-ROM), which is a generalization of the S-ROM.
Furthermore, we prove two fundamental numerical analysis results for the new L-ROM and the classical S-ROM:
(i) We prove the verifiability of the L-ROM and S-ROM, i.e, that the ROM error is bounded (up to a constant) by the ROM closure error.
(ii) We introduce the concept of ROM limit consistency (in a discrete sense), and prove that the L-ROM and S-ROM are limit consistent, i.e., that as the ROM dimension approaches the rank of the snapshot matrix, $d$, and the ROM lengthscale goes to zero, the ROM solution converges to the 
\emph{``true solution"}, i.e., the solution of the $d$-dimensional ROM.
Finally, we illustrate numerically the verifiability and limit consistency of the new L-ROM and S-ROM in two under-resolved convection-dominated problems that display sharp gradients:
(i) the 1D Burgers equation with a small diffusion coefficient; and 
(ii) the 2D lid-driven cavity flow at Reynolds number $Re=15,000$.

\end{abstract}

\begin{keyword}
Reduced order model \sep Large eddy simulation \sep Eddy viscosity \sep Verifiability \sep Limit consistency 


\end{keyword}

\end{frontmatter}

\section{Introduction} \label{S:1}

Reduced order models (ROMs) reduce the computational cost of full order models (FOMs) (i.e., classical numerical discretizations, e.g., finite element or finite volume methods) by orders of magnitude.
Galerkin ROMs (G-ROMs) leverage the available data to construct a relatively low-dimensional basis $\{ \bphi_1, \ldots, \bphi_r \}$, and use this data-driven basis in a Galerkin framework to build a ROM whose dimension is dramatically lower than the dimension of the corresponding FOM.

These ROMs have been very successful in providing an efficient and relatively accurate numerical simulation of {\it laminar} flows, e.g., 2D flow past a circular cylinder at a low Reynolds number (e.g., $Re=100$).
For these laminar flows, ROMs can yield velocity and pressure approximations that are similar to those produced by FOMs.
The ROM dimension, however, is dramatically lower than the FOM dimension.
For example, for the 2D flow past a circular cylinder at $Re=100$, the FOM dimension is $\mathcal{O}(10^5)$, whereas the ROM dimension is $\mathcal{O}(10)$.
This dimension reduction enables ROMs to reduce the FOM computational cost by orders of magnitude.

Despite their success for laminar flows, ROMs have encountered significant challenges in the numerical simulation of {\it turbulent} (i.e., high Reynolds number) flows, which are generally central to engineering and scientific applications.
Turbulent flows are challenging to approximate because they are {\it convection-dominated, chaotic}, and {\it multiscale}.
Thus, a low-dimensional ROM basis (e.g., $r =\mathcal{O}(10)$ as in the case of the 2D flow past a circular cylinder at $Re=100$) is not enough for an accurate representation of the turbulent flow dynamics.
What is needed is a much higher-dimensional ROM basis, e.g., $r =\mathcal{O}(10^3)$ or even higher~\cite{ahmed2021closures}.
These high-dimensional ROMs, however, would be impractical because their computational cost could be even higher than the FOM cost \cite{tsai2023accelerating}.
Thus, in practical applications, ROMs are used in the {\it under-resolved regime}:
To ensure a low computational cost, low-dimensional ROMs are utilized. 
Unfortunately, these low-dimensional ROMs generally yield inaccurate results, usually in the form of spurious numerical oscillations.

To tackle this inaccurate behavior in convection-dominated (e.g., turbulent) flows, under-resolved ROMs generally employ two types of strategies, which often overlap:
(i) {\it ROM stabilizations}, or 
(ii) {\it ROM closures}.  
ROM stabilizations 
(see \cite{parish2024residual} for a review) increase the ROM numerical stability and mitigate/eliminate the spurious numerical oscillations that standard ROMs yield in the under-resolved regime.
ROM closures
(see \cite{ahmed2021closures} for a review) add correction terms to the standard ROMs to increase their accuracy.

There is a wide spectrum of ROM closures, reviewed in \cite{ahmed2021closures}.
The {\it eddy viscosity (EV)} ROM closures yield some of the most successful ROMs in the under-resolved simulation of realistic turbulent flows.
The EV ROM closures leverage the fact that the main role of the discarded ROM modes in under-resolved simulations is to dissipate energy from the system \cite{CSB03}, and add numerical dissipation to the standard ROM \cite[Section IV.A]{ahmed2021closures}. 
Thus, there is a close connection between EV ROM closures and ROM stabilizations because both aim at increasing the numerical stability of the ROM (albeit by different strategies).

Just as in full order modeling \cite{sagaut2006large}, the {\it Smagorinsky ROM (S-ROM)} is one of the most popular EV ROM closures \cite{wang2012proper}.
The S-ROM adds artificial viscosity that is proportional to the Frobenius norm of the deformation tensor of the ROM velocity.
Thus, the S-ROM adds a large amount of artificial viscosity in the flow regions displaying large velocity gradients, and a negligible amount of artificial viscosity in the rest of the domain. 

One of the main contributions of this paper is the introduction of a new EV-ROM, the {\it Ladyzhenskaya ROM (L-ROM)}. 
The L-ROM is a generalization of the S-ROM that uses different powers of the Frobenius norm of the deformation tensor of the ROM velocity
and ROM lengthscale,  
which aims to allow a better treatment of sharp gradients in convection-dominated flows.  
As a first step in the assessment of the new L-ROM, we investigate it, together with the S-ROM, in the numerical simulation of two convection-dominated problems:
(i) the 1D Burgers equation with a sharp internal layer and low diffusion parameter; and
(ii) the 2D lid-driven cavity flow at Reynolds number $Re=15,000$.
We emphasize that both problems display sharp gradients that are difficult to represent with standard ROMs in the under-resolved regime. 

Another major contribution of this paper is performing a careful numerical analysis of both the new L-ROM and the classical S-ROM.
Specifically, we prove for the first time the {\it verifiability} of the L-ROM and S-ROM, i.e., we prove that the ROM error is bounded (up to a constant) by the ROM closure error.
Furthermore, we prove that the L-ROM and S-ROM are {\it limit consistent} (in a discrete sense), i.e., we prove that as the ROM dimension approaches the rank of the snapshot matrix, $ d$, and the 
ROM lengthscale goes to zero, the ROM solution converges to the
\emph{``true solution"}, i.e., the solution of the $d$-dimensional G-ROM. 
We emphasize that, to our knowledge, this is the first time that the concept of limit consistency is used at a ROM level.
We also note that, to our knowledge, verifiability has been proved for only one other ROM closure model~\cite{koc2022verifiability} (a data-driven ROM closure that is not of EV type). 

We
note that, in order to prove the verifiability and limit consistency of the L-ROM and S-ROM, we cast these two ROMs into a {\it large eddy simulation ROM (LES-ROM)}  framework (see \cite{Quaini2024Bridging} for a recent review of LES-ROMs).
In particular, we use ROM spatial filtering 
(chosen to be the ROM projection here) 
 to determine the ROM subfilter-scale stress tensor that the L-ROM and S-ROM closures aim to approximate. 
The ROM subfilter-scale stress tensor allows us to quantify the accuracy of the L-ROM and S-ROM closures, which is critical in proving the ROM verifiability.
We also note that the numerical investigation of the verifiability and limit consistency of the L-ROM and S-ROM consists of challenging under-resolved simulations of convection-dominated problems that display sharp gradients. 
 
The rest of the paper is organized as follows: 
In Section \ref{sec: Prelim_Not}, we include preliminaries and necessary background results needed for the development and numerical analysis of LES-ROMs.
In Section \ref{sec: Verifiability}, we prove the verifiability of the L-ROM and S-ROM.
In Section \ref{sec: Limit_Con}, we introduce the concept of limit consistency (in a discrete sense) in reduced order modeling.
Furthermore, we prove that both the L-ROM and S-ROM are limit consistent.
In Section \ref{sec-numres}, we investigate the verifiability and limit consistency of the L-ROM and S-ROM in the under-resolved numerical simulation of two convection-dominated problems:
(i) the 1D Burger equation with a small diffusion coefficient; and 
(ii) the 2D lid-driven cavity flow at Reynolds number $Re=15,000$.
In Section \ref{sec-Con}, we present the conclusions of our theoretical and numerical investigation, and outline research directions for future work.

\section{Notations and Preliminaries} \label{sec: Prelim_Not}

As a mathematical model, we consider the incompressible Navier-Stokes equations (NSE) given as 
\begin{align}
\bfu_t+ (\bfu \cdot \nabla) \bfu -\Rey\inv \Delta \bfu+\nabla p &=\bff
\qquad
\text {in } \ohm \times (0,T],\label{eq:Strong_NSE1}\\
\divergence \bfu&=0
\qquad
\text {in } \ohm \times (0,T],\label{eq:Strong_NSE2}
\end{align}
where $\bfu$ 
is the fluid velocity, 
$p$ 
the pressure, 
$\ohm$ the spatial domain, 
$[0,T]$ the time interval, 
$\bff$ 
the external force, and $ \Rey$ 
the Reynolds number. 
For simplicity, we consider homogeneous Dirichlet boundary conditions.
Appropriate 
initial conditions are needed to close the system.

We will use the following standard function spaces: $L^p(\Omega) $, $W^{k,p}(\Omega) $, and $ H^k(\Omega) = W^{k,2}(\Omega)$, where $ k \in \mathbb{N}$ and $1 \leq p \leq \infty $, for a bounded domain $\Omega \subset\mathbb{R}^\ell $ $(\ell=2,3)$.
 The $L^2(\Omega)$ norm is denoted as $\Vert \cdot \Vert$, with the corresponding inner product $(\cdot,\cdot)$. The $L^p(\Omega)$ norm is denoted by $\| \cdot \|_{p} $, while the Sobolev $ W^{k,p}(\Omega)$ norm is $ \Vert \cdot \Vert_{k,p} $. 

The solutions of the continuous NSE are traditionally sought in the functional spaces $ X= W_{0}^{1,2}(\ohm)$ and $Q= L_0^2(\ohm) = \curly{ q\in L^2 \mid \int_{\ohm} q d\ohm =0}$ for velocity and pressure, respectively.  

In the case of a continuous LES model, it is common to impose an additional regularity assumption on the velocity~\cite{berselli2006mathematics}. 
For the Smagorinsky and Ladyzhenskaya models, we consider $ X_s = W_0^{1,s+2}$

for a fixed parameter $s$. It is clear that when $s=0$ we have $ X_0 = X$. For the Ladyzhenskaya model in \cite{ladyzhenskaya1985boundary}, Ladyzhenskaya showed that solutions to her model are globally unique in time for any Reynolds number and any $ s \geq \frac12 $. This result was then improved by Du and Gunzburger in \cite{du1991analysis} to any $s \geq \frac15$. 

In this paper, we use the Ladyzhenskaya and Smagorinsky models only at the (discrete) ROM level 

and consider $s \geq 0$.
 
We also let $\delt$ denote the time step, 
$t^{n}
= n \delt$, $n = 0, 1, \dots, M$, the time instances where the ROM approximation is computed, and 
$T := M \delt$ the final time. 
We then define the following discrete norms:
\begin{eqnarray} \label{eq: triplebar}
 \trinorm{ \bfu }_{\infty, k} := \max_{0 \le n \le M} \|\bfu^{n}\|_{k,2} \,, \qquad
\trinorm{ \bfu }_{m,k} := \left(\sum_{n=0}^{M} \| \bfu^{n} \|^{m}_{k,2}\delt\right)^{1/m}.
\end{eqnarray}
For $ \bfu,\bfv,\bfw \in X_s $, we define the trilinear forms $ b, b^*: X_s \times X_s \times X_s 
\longrightarrow \mathbb{R}  $ by
\begin{eqnarray*}
 b(\bfu,\bfv,\bfw) &=& (\bfu\cdot\nabla \bfv,\bfw),\\
 b^{*}(\bfu,\bfv,\bfw)&=& (\bfu \cdot \Grad\bfv ,\bfw) + \frac{1}{2}((\Grad\cdot \bfu)\bfv,\bfw) = \frac{1}{2}(b(\bfu,\bfv,\bfw)-b(\bfu,\bfw,\bfv)).
 \end{eqnarray*}

We also use the following lemmas: 
\begin{lemma}
[\cite{layton2008introduction,temam2001navier}]
\label{TRIL} 
For $\bfu, \bfv, \bfw \in X_s $, the trilinear term $b^{*}(\bfu, \bfv, \bfw)$ can be bounded as follows:
\bea
 b^{*}(\bfu, \bfv, \bfw) & \leq &  C(\ohm) \left\|\bfu\right\|^{\frac{1}{2}} \left\|\Grad \bfu\right\|^{\frac{1}{2}} \left\|\Grad \bfv\right\|  \left\|\Grad \bfw\right\|, \\
 b^{*}(\bfu, \bfv, \bfw) & \leq & C(\ohm) \left\|\Grad \bfu\right\| \left\|\Grad \bfv\right\|  \left\|\Grad \bfw \right\|.
\eea
\end{lemma}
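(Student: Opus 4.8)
The plan is to reduce both inequalities to estimates on the basic trilinear form $b$ and then run a Hölder--Sobolev--Poincaré chain. First I would invoke the skew-symmetric representation built into the definition,
\[
b^{*}(\bfu,\bfv,\bfw) = \tfrac{1}{2}\big( b(\bfu,\bfv,\bfw) - b(\bfu,\bfw,\bfv) \big),
\]
so that by the triangle inequality it suffices to bound $|b(\bfu,\bfv,\bfw)|$ and $|b(\bfu,\bfw,\bfv)|$ by the same right-hand side. Since these two terms are interchanged by swapping the last two arguments, a single estimate handles both, and the identity also frees me from having to treat the divergence term $((\Grad\cdot\bfu)\bfv,\bfw)$ directly.

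For $b(\bfu,\bfv,\bfw) = (\bfu\cdot\Grad\bfv,\bfw)$ I would keep $\Grad\bfv$ in $L^2(\ohm)$ and split the remaining two factors with Hölder's inequality, placing $\bfu$ and $\bfw$ into conjugate Lebesgue spaces. The exponents here are dimension dependent: in two dimensions I would take $\bfu,\bfw \in L^{4}(\ohm)$, while in three dimensions I would take $\bfu \in L^{3}(\ohm)$ and $\bfw \in L^{6}(\ohm)$, so that in each case the three exponents sum to $1$.

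The next step converts these $L^p$ norms back into the norms appearing in the statement. For the factor carrying $\bfu$ I would use the Gagliardo--Nirenberg (Ladyzhenskaya) interpolation inequality, which gives $\|\bfu\|_{4} \le C\|\bfu\|^{1/2}\|\Grad\bfu\|^{1/2}$ in two dimensions and $\|\bfu\|_{3} \le C\|\bfu\|^{1/2}\|\Grad\bfu\|^{1/2}$ in three dimensions; in both dimensions the exponent pair $(1/2,1/2)$ is exactly what appears in the first claimed bound. For the factor carrying $\bfw$ I would instead combine the Sobolev embedding with the Poincaré inequality, yielding $\|\bfw\|_{4}\le C\|\Grad\bfw\|$ in 2D and $\|\bfw\|_{6}\le C\|\Grad\bfw\|$ in 3D; this is legitimate because $\bfw \in X_s = W_0^{1,s+2}(\ohm) \subset W_0^{1,2}(\ohm)$ vanishes on $\partial\ohm$. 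Combining these estimates gives
\[
|b(\bfu,\bfv,\bfw)| \le C(\ohm)\,\|\bfu\|^{1/2}\,\|\Grad\bfu\|^{1/2}\,\|\Grad\bfv\|\,\|\Grad\bfw\|,
\]
and the identical bound for $b(\bfu,\bfw,\bfv)$ produces the first inequality of the lemma. The second inequality then follows at once by applying Poincaré a second time, now to the $\bfu$ factor, absorbing $\|\bfu\|^{1/2}\|\Grad\bfu\|^{1/2}\le C\|\Grad\bfu\|$.

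The only genuine care is needed in the Hölder step: the exponents and the companion Sobolev embeddings differ between two and three dimensions, and one must confirm that the single interpolation exponent $1/2$ on the $\bfu$ factor is simultaneously compatible with both choices (which I verified above via scaling). This dimension bookkeeping is where the argument is sensitive; everything else is a mechanical chaining of classical inequalities, consistent with the fact that the result is quoted from \cite{layton2008introduction,temam2001navier}.
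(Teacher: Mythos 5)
Your argument is correct: the paper states this lemma without proof, citing \cite{layton2008introduction,temam2001navier}, and your H\"older--Ladyzhenskaya--Sobolev--Poincar\'e chain applied to the skew-symmetric representation $b^{*}=\tfrac12(b(\bfu,\bfv,\bfw)-b(\bfu,\bfw,\bfv))$ is precisely the standard proof found in those references, with the dimension-dependent exponent bookkeeping ($L^4$--$L^2$--$L^4$ in 2D, $L^3$--$L^2$--$L^6$ in 3D) handled correctly. No gaps.
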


\begin{lemma}[Strong monotonicity \cite{du1991analysis,minty1962monotone,lions1969quelques}] 
\label{lemma: Strong Mono}
For $ \bfu, \bfv  \in W^{1,s+2}(\Omega)$, there exists a 
positive constant 
$C$ depending on $\ell,s$, and $ \Omega$, such that the following inequality holds: 
\begin{gather}
\label{StrongMonotonicty}
    \left( \| \Grad \bfu  \|_F^{s} \Grad \bfu - \| \Grad \bfv  \|_F^{s} \Grad \bfv , \Grad(\bfu - \bfv)  \right) \geq C \| \Grad (\bfu - \bfv)  \|_{s+2}^{s+2},
\end{gather}
 where $\norm{\mathbf{A}}_F = \pare{\sum_{i,j=1}^{\ell}{a_{i,j}^2}}^{1/2}$ denotes the Frobenius norm of a square matrix $\mathbf{A}\in\mathbb{R}^{\ell \times \ell}$. 
\end{lemma}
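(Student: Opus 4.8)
The plan is to deduce the stated $L^2$–$L^{s+2}$ inequality from a single pointwise inequality for matrices and then integrate over $\Omega$. Writing $A = \Grad \bfu(\bx)$ and $B = \Grad \bfv(\bx)$, which are $\ell \times \ell$ matrices, and noting that the inner product $(\cdot,\cdot)$ on the left is the $L^2$ inner product of the matrix fields while the Frobenius norm equips $\mathbb{R}^{\ell \times \ell}$ with a Euclidean structure, it suffices to prove that with $p := s + 2 \geq 2$ there is a constant $C = C(s) > 0$ such that
\[
\left( \| A \|_F^{s} A - \| B \|_F^{s} B \right) : (A - B) \;\geq\; C\,\|A - B\|_F^{\,s+2}
\qquad \text{for all } A, B \in \mathbb{R}^{\ell\times\ell},
\]
where $:$ denotes the Frobenius inner product. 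Integrating this over $\Omega$ immediately yields \eqref{StrongMonotonicty}, since $\int_\Omega \|A-B\|_F^{s+2}\,d\Omega = \|\Grad(\bfu-\bfv)\|_{s+2}^{s+2}$.

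To establish the pointwise inequality, I would use a fundamental-theorem-of-calculus argument along the segment joining $B$ to $A$. Set $\sigma := A - B$ and $\xi(t) := B + t\,\sigma$ for $t \in [0,1]$, and let $\phi(\eta) := \|\eta\|_F^{s}\,\eta$ be the map whose monotonicity we study. A direct computation of the directional derivative gives
\[
D\phi(\eta)[\sigma] : \sigma \;=\; s\,\|\eta\|_F^{\,s-2}\,(\eta:\sigma)^2 + \|\eta\|_F^{s}\,\|\sigma\|_F^2 \;\geq\; \|\eta\|_F^{s}\,\|\sigma\|_F^2,
\]
where the first term is dropped using $s \ge 0$. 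Integrating $\tfrac{d}{dt}\phi(\xi(t))$ over $[0,1]$ and pairing with $\sigma = \xi'(t)$ then yields
\[
\left( \phi(A) - \phi(B) \right) : \sigma \;=\; \int_0^1 D\phi(\xi(t))[\sigma]:\sigma \,dt \;\geq\; \|\sigma\|_F^2 \int_0^1 \|\xi(t)\|_F^{\,s}\,dt .
\]

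The main obstacle is the remaining lower bound $\int_0^1 \|\xi(t)\|_F^{s}\,dt \geq c(s)\,\|\sigma\|_F^{s}$, which is delicate because the segment $\xi(t)$ may pass near the origin, where the integrand degenerates. I would handle this by normalization: since the full inequality is symmetric in $A$ and $B$, I may assume $\|A\|_F \geq \|B\|_F$, whence $\|A\|_F \geq \tfrac12\|\sigma\|_F$. Writing $\xi(t) = A - (1-t)\sigma$ and using the reverse triangle inequality, for $t \in [\tfrac34, 1]$ one gets $\|\xi(t)\|_F \geq \|A\|_F - \tfrac14\|\sigma\|_F \geq \tfrac14\|\sigma\|_F$. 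Restricting the integral to $[\tfrac34,1]$ and using $s \geq 0$ then gives $\int_0^1 \|\xi(t)\|_F^{s}\,dt \geq \tfrac14 (\tfrac14)^{s}\|\sigma\|_F^{s}$, which closes the argument with $C = C(s)$; the dependence on $\ell$ and $\Omega$ asserted in the statement is harmless, as the constant in fact depends only on $s$. A few standard technical points—continuity of $\phi$ and absolute continuity of $t \mapsto \phi(\xi(t))$ across the at-most-one value of $t$ where $\xi(t)=0$—justify the use of the fundamental theorem of calculus, but are routine. I note that this pointwise inequality is classical in the $p$-Laplacian literature, so one could alternatively cite \cite{du1991analysis} and the monotone-operator references \cite{minty1962monotone,lions1969quelques} directly.
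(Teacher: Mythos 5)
The paper does not actually prove Lemma~\ref{lemma: Strong Mono}: it is stated as a known result and attributed to \cite{du1991analysis,minty1962monotone,lions1969quelques}, so there is no in-paper argument to compare against. Your proposal supplies a correct, self-contained proof of exactly the standard type found in the $p$-Laplacian/monotone-operator literature: reduce to the pointwise matrix inequality $\left( \| A \|_F^{s} A - \| B \|_F^{s} B \right) : (A - B) \geq C(s)\,\|A - B\|_F^{\,s+2}$, prove it by integrating the derivative of $\phi(\eta)=\|\eta\|_F^{s}\eta$ along the segment from $B$ to $A$, drop the nonnegative term $s\|\eta\|_F^{s-2}(\eta:\sigma)^2$ (valid since $s\ge 0$), and handle the possible degeneracy of $\int_0^1\|\xi(t)\|_F^{s}\,dt$ near the origin by the normalization $\|A\|_F\ge\|B\|_F$, which gives $\|A\|_F\ge\tfrac12\|\sigma\|_F$ and hence $\|\xi(t)\|_F\ge\tfrac14\|\sigma\|_F$ on $[\tfrac34,1]$. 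All the individual steps check out, including the directional-derivative computation, the reverse-triangle-inequality estimate, and the final integration over $\Omega$ (the integrand on the left is in $L^1$ by H\"older since $\|\Grad\bfu\|_F^{s}\Grad\bfu\in L^{(s+2)/(s+1)}$ and $\Grad(\bfu-\bfv)\in L^{s+2}$). Your observation that the resulting constant, roughly $4^{-(s+1)}$, depends only on $s$ and not on $\ell$ or $\Omega$ is also correct; the lemma's statement is simply more permissive than necessary. This is a legitimate filling-in of a citation rather than a divergence from the paper's (nonexistent) proof, and nothing further is needed.
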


\begin{lemma}[Discrete Gronwall Lemma \cite{heywood1990finite}]
\label{lemma: discreteGronwall} Let $\Delta t$, H, and $a_{n},b_{n},c_{n},d_{n}$
(for integers $n \ge 0$) be finite nonnegative numbers such that
\begin{equation}
a_{l}+\Delta t \sum_{n=0}^{l} b_{n} \le \Delta t \sum_{n=0}^{l} d_{n}a_{n} +
\Delta t\sum_{n=0}^{l}c_{n} + H \ \ \mathrm{for} \ \ l\ge 0. \label{gronwall1}
\end{equation}

Suppose that $\Delta t \, d_n < 1 \; \forall n$. Then,
\begin{equation}
a_{l}+ \Delta t\sum_{n=0}^{l}b_{n} \le \exp\left( \Delta t\sum_{n=0}^{l} \frac{d_{n}}{1 - \Delta t d_n } \right) \left( \Delta t\sum_{n=0}^{l}c_{n} + H
\right)\ \ \mathrm{for} \ \ l \ge 0.
\end{equation}
\end{lemma}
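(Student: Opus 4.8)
The plan is to recognize this as the standard discrete analogue of Gronwall's inequality, whose one genuinely new feature relative to the continuous case is that the right-hand side contains the term $\Delta t\, d_l a_l$ at the \emph{same} endpoint index $l$ that appears on the left. The first step is therefore to isolate $a_l$: I would peel off the $n=l$ summand of $\Delta t \sum_{n=0}^{l} d_n a_n$ and move it across, writing $C_l := \Delta t \sum_{n=0}^{l} c_n + H$ (which is nondecreasing in $l$ since $c_n\ge 0$) to obtain
\[
(1 - \Delta t\, d_l)\, a_l + \Delta t \sum_{n=0}^{l} b_n \;\le\; \Delta t \sum_{n=0}^{l-1} d_n a_n + C_l .
\]
The hypothesis $\Delta t\, d_n < 1$ is precisely what makes $1 - \Delta t\, d_l > 0$, so this coefficient may be divided out without reversing the inequality; this is where the assumption is essential.

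Second, I would pass from $a_l$ to the full left-hand quantity $E_l := a_l + \Delta t \sum_{n=0}^{l} b_n$ that appears in the conclusion. Because $0 < 1 - \Delta t\, d_l \le 1$ and each $b_n \ge 0$, multiplying only the $a_l$ term by $(1-\Delta t\, d_l)$ dominates multiplying the whole of $E_l$ by it, and bounding $a_n \le E_n$ under the remaining sum yields the clean recursion
\[
E_l \;\le\; \frac{1}{1 - \Delta t\, d_l}\left(\Delta t \sum_{n=0}^{l-1} d_n E_n + C_l\right).
\]

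Third, I would run an induction on $l$ to prove $E_l \le C_l\,\pi_l$ with $\pi_l := \prod_{n=0}^{l}(1 - \Delta t\, d_n)^{-1}$. Using the inductive bounds $E_n \le C_n\pi_n$ together with $C_n \le C_l$ for $n\le l$, the inductive step collapses to the elementary identity $\Delta t \sum_{n=0}^{l-1} d_n \pi_n + 1 = \pi_{l-1}$, which is itself a one-line telescoping induction based on $(1-\Delta t\, d_m)\,\pi_m = \pi_{m-1}$. The base case $l=0$ follows directly from the isolated inequality above.

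Finally, to recover the stated exponential form I would invoke the scalar inequality $\tfrac{1}{1-x} \le \exp\!\big(\tfrac{x}{1-x}\big)$ for $0 \le x < 1$ (immediate from $-\ln(1-x)=\sum_{k\ge 1} x^k/k \le \sum_{k\ge 1} x^k = \tfrac{x}{1-x}$), applied termwise with $x = \Delta t\, d_n$; this turns $\pi_l$ into $\exp\!\big(\Delta t \sum_{n=0}^{l} \tfrac{d_n}{1-\Delta t\, d_n}\big)$ and closes the argument. I expect the main obstacle to be the first two steps rather than any hard estimate: correctly absorbing the implicit endpoint term $\Delta t\, d_l a_l$ and checking that the dissipative terms $\Delta t\sum b_n$ survive on the left with the correct constant. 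Everything after the recursion is bookkeeping.
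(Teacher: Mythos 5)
Your proof is correct, and it is essentially the classical argument for this lemma (the paper itself gives no proof, citing Heywood--Rannacher instead, where the same strategy is used): peel off the implicit endpoint term $\Delta t\, d_l a_l$, divide by $1-\Delta t\, d_l>0$, induct to get the product bound $E_l \le C_l \prod_{n=0}^{l}(1-\Delta t\, d_n)^{-1}$, and convert the product to the exponential via $\tfrac{1}{1-x}\le \exp\!\left(\tfrac{x}{1-x}\right)$ for $0\le x<1$. All the delicate points are handled properly: the strict inequality $\Delta t\, d_n<1$ is used exactly where it is needed, the dissipative sum $\Delta t\sum_{n=0}^{l} b_n$ is retained on the left by the observation $(1-\Delta t\, d_l)E_l\le (1-\Delta t\, d_l)a_l+\Delta t\sum_{n=0}^{l}b_n$, and the telescoping identity $\Delta t\sum_{n=0}^{l-1}d_n\pi_n+1=\pi_{l-1}$ is verified correctly. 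No gaps.
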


\subsection{LES background}\label{ssec: LES_back}

Numerical simulations of turbulent flows generally occur in the {\it under-resolved regime}, that is, when not enough resolution is available to represent the underlying complex dynamics down to the Kolmogorov scale.
{\it Large eddy simulation (LES)}  \cite{berselli2006mathematics,john2003large,sagaut2006large} aims at increasing the accuracy of these under-resolved simulations by adding correction terms, that is, models for the effect of the discarded scales.
In LES, the large, resolved scales and the small, unresolved scales are defined by using a spatial filter. 
Applying 
 
a spatial filter denoted with $ \overline{\, \cdot \,}$, the NSE \eqref{eq:Strong_NSE1}-\eqref{eq:Strong_NSE2} become the spatially filtered NSE,
given as
\begin{align}
\overline{\bfu}_t+ \divergence(\overline{\bfu~\bfu}) -\Rey\inv \Delta \overline{\bfu}+\nabla \overline{p} &=\overline{\bff},\label{eq:Strong_SFNSE1}\\
\divergence \overline{\bfu}&=0.\label{eq:Strong_SFNSE2}
\end{align}

We note that the system \eqref{eq:Strong_SFNSE1}-\eqref{eq:Strong_SFNSE2} is not \textit{closed} for the filtered velocity $ \overline{\bfu}$ due to $ \overline{\bfu~ \bfu} \neq \overline{\bfu}~ \overline{\bfu}$. 

Thus, the spatially filtered NSE \eqref{eq:Strong_SFNSE1}-\eqref{eq:Strong_SFNSE2} are commonly 
rewritten as 
\begin{align}
\overline{\bfu}_t+ \divergence(\overline{\bfu}~\overline{\bfu}) -\Rey\inv \Delta \overline{\bfu}+ \divergence \uptau(\bfu)+\nabla \overline{p} &=\overline{\bff},\label{eq:Strong_SFNSE3}\\
\divergence \overline{\bfu}&=0,\label{eq:Strong_SFNSE4}
\end{align}
where $ \uptau(\bfu) := \overline{\bfu ~ \bfu} -\overline{\bfu}~\overline{\bfu}$ is called the 
subfilter-scale stress tensor. 

Closure models then seek to approximate the true closure $\uptau(\bfu)$ with a computationally more efficient tensor that depends only on the filtered velocity $ \uptau^{\text{LES}}(\overline{\bfu})$. A perfect closure model has not yet been found, hence in general $ \uptau(\bfu) \neq \uptau^{\text{LES}}(\overline{\bfu})$. 
Thus, substituting $\uptau(\bfu)$ with $\uptau^{\text{LES}}(\overline{\bfu})$ in \eqref{eq:Strong_SFNSE3}, the solution to this new system can only be an approximation of $\overline{\bfu}$ at best.

The class of closure models that attempt to account for the energy lost due to the filtered small eddies by adding additional diffusion are known as eddy viscosity (EV) models 
\cite{sagaut2006large,berselli2006mathematics}. 
Classical examples of 
EV models include the Smagorinsky model \cite{smagorinsky1963general,lilly1967representation}, introduced in 1963, and the Ladyzhenskaya model \cite{ladyzhenskaya1985boundary,ladyzhenskaya1969mathematical}, proposed 
in the context of regularization. 
Other well-known 
EV closure models include the dynamic Smagorinsky model
\cite{GPMC91} and the variational multiscale (VMS) model \cite{HughesVMS1998}. 
LES is a highly developed field, with closure models of various types (surveyed, e.g., in~\cite{sagaut2006large}).

Denoting the approximations to $(\overline{\bfu},\overline{p})$ in \eqref{eq:Strong_SFNSE3}-\eqref{eq:Strong_SFNSE4} as $(\bfw,q)$ leads to the LES equations 
\begin{align}
{\bfw}_t+ \divergence({\bfw}~{\bfw}) -\Rey\inv \Delta {\bfw}+ \divergence \uptau^{\text{LES}}(\bfw)+\nabla {q} &= {\bff},\label{eq:Strong_LES1}\\
\divergence {\bfw}&=0.\label{eq:Strong_LES2}
\end{align}

This naturally leads to the question of 
quantifying and analyzing the LES modeling error,
$ \norm{ \overline{\bfu} - \bfw }$. 
In LES, these concepts were 
pioneered by Layton and his group  
(see, e.g., \cite{Kaya2002verifiability} and the review in \cite{berselli2006mathematics}).

\subsection{Galerkin ROM background}

In this section, we introduce the 
{Galerkin ROM (G-ROM)}. 
{To this end, we collect data as $M$ snapshots 
of FOM solutions, that is, $\curly{ \bfu_{\text{FOM}}^0, \dots, \bfu_{\text{FOM}}^M }$, where $\bfu^n_{\text{FOM}} = \bfu_{\text{FOM}}(t^n), 
\, n = 0, \ldots, M$. We follow the standard proper orthogonal decomposition (POD) procedure \cite{berkooz1993proper,volkwein2013proper} to construct the reduced basis functions.}
POD 
seeks a 
basis $\{\bphi_1,\ldots,\bphi_r\}$ in $L^2$ that optimally approximates the snapshots, that is, solves the following minimization problem: 
\begin{equation*}
    \min_{
    \{ \bphi_j \}_{j=1}^{r}} \ \frac{1}{M+1}\sum^M_{n=0}\norm{\bfu_{\text{FOM}}(\cdot,t^n)- \sum^r_{j=1}\left(\bfu_{\text{FOM}}(\cdot,t^n),\bphi_j(\cdot)\right)\bphi_j(\cdot)}^2, 
\end{equation*}
subject to the conditions $(\bphi_i,\bphi_j)
= \delta_{ij}$, for $1\le i,j \le r$, where $\delta_{ij}$ is the Kronecker delta. The minimization problem can be solved 
by considering the eigenvalue
problem $\mathcal{K}\underline{z}_j = \lambda_j \underline{z}_j$, for $j=1,\ldots,r$,
where $\mathcal{K} \in \mathbb{R}^{(M+1)\times(M+1)}$ is the snapshot Gramian matrix using the $L^2$ inner product (see, e.g.,~\cite{iollo2000stability,kaneko2020towards,
tsai2022parametric} for alternative strategies).
Denoting by $d$ the rank of the Gramian matrix, $\mathcal{K}$,  
the $d$-dimensional G-ROM is constructed by inserting the ROM basis expansion
\bea \label{eq: ud}
    \bfu_d (\bfx,t) =  \sum_{j=1}^{d} u_{d,j}(t)\bphi_j(\bfx)
\eea
into the weak form of the NSE \eqref{eq:Strong_NSE1}--\eqref{eq:Strong_NSE2}, 
followed by backward Euler discretization in time: 
For $ n=1, \dots, M$, find $\ud^n \in \bX_d$ such that, for all    $\bv_d \in \bX_d$,
\begin{eqnarray}
    && 
    \left(
        \frac{ \bu_d^{n} - \bu_d^{n-1} }{\delt} , \bv_d 
    \right)
    + \Rey^{-1} \, 
    \left( 
        \nabla \bu_d^n , 
        \nabla \bv_d
    \right)
    + b^*(\ud^n, \ud^n , \vr)
    = \pare{ \bff^n, \bv_d},\qquad
    \label{eq: gromu}
\end{eqnarray}
where $\bX_d:=\mathrm{span}\{\bphi_1,\cdots, \bphi_d\}$ is the $d$-dimensional ROM space, 
{
which spans the same subspace as the snapshot data.}

In 
engineering and scientific applications (e.g., turbulent flow simulations), the $d$-dimensional G-ROM is seldom used due to its relatively high computational cost \cite{tsai2023accelerating}. 
Instead, a further reduced dimension $ r \ll d$ is chosen,
and a lower-dimensional space $ \bX_r :=  \text{span} \curly{\bphi_1,\ldots,\bphi_r}$ is employed to construct an $r$-dimensional G-ROM: For $ n=1, \dots, M$, find $\ur^n \in \bX_r$ such that, for all    $\bv_r \in \bX_r$,

\begin{eqnarray}
    && 
    \left(
        \frac{\bu_r^n - \ur^{n-1}}{ \delt} , \bv_r 
    \right)
    + \Rey^{-1} \, 
    \left( 
        \nabla \bu_r^n , 
        \nabla \bv_r 
    \right)
    + b^*( \ur^n, \ur^n ,\vr)
    = \pare{\bff^n,\vr}. \qquad
    \label{eq: gromu-r}
\end{eqnarray}

\begin{remark} 
    Because the POD basis functions are a linear combination of the FOM generated snapshots, they satisfy the boundary conditions of the original PDE and inherit the FOM's weakly divergence-free 
    property. 
    Because the ROM velocity is 
    only weakly divergence-free, to 
    ensure stability, we equip the ROM with the skew-symmetric trilinear form $b^*$~\cite{layton2008introduction}. 
\end{remark}

\begin{definition}[ROM $L^2$ projection \cite{
KV01}]\label{def: ROM_Proj}
Let $ P_r: L^2(\ohm) \to \bX_r $ be the $ L^2$ projection onto $ \bX_r$ such that, $ \forall \bfu \in L^2(\ohm)$, $ P_r(\bfu)$ is the unique element of $ \bX_r$ such that
\begin{equation}
    (P_r(\bfu),\vr) = (\bfu,\vr), \qquad \forall \vr\in \bX_r.
    \label{eqn:rom-projection-definition}
\end{equation}
\end{definition}

\begin{definition}[Generic Constant C]
We denote with $C$ 
generic constants that do not depend on $r$, $ \delt$, $\Rey$, or the ROM lengthscale parameter, $\delta$, given in \eqref{eq: LZ_closure} below, but can depend on 
$\ohm$, $\bff$, and $\bu_d$. 
\end{definition} 

In Lemma \ref{lemma: Stability}, we present the unconditional stability of the traditional G-ROM.
The proof follows along the lines of 
the stability proofs 
of more complex models, as can be seen in, e.g.,   \cite{iliescu2014variational,xie2018numerical,reyes2024trrom}.  
\begin{lemma}\label{lemma: Stability}
For any fixed $ r = 1, \dots, d$, the solution 
of the G-ROM 
\eqref{eq: gromu-r} is unconditionally stable: 
For any $\Delta t>0$, the solution satisfies:
\begin{equation}\label{eq: stab}
   \trinorm{\ur}^2_{\infty,0}  + \Rey\inv |||\nabla \ur|||_{2,0}^2 \leq ||\ud^0||^2 
   + \Rey \, \delt \sum_{n=0}^{M}||\bff^{n}||_{-1}^2 := \Rey \  C.
\end{equation}
\end{lemma}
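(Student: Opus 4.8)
The plan is to derive a discrete energy estimate by choosing the test function $\vr = \ur^n$ in \eqref{eq: gromu-r}. The crucial structural feature is that the skew-symmetric trilinear form satisfies $b^*(\ur^n,\ur^n,\ur^n) = 0$ by construction, so the nonlinear convection term drops out entirely and no Gronwall argument is required --- this is precisely what yields \emph{unconditional} stability with no restriction on $\delt$. For the discrete time-derivative term, I would apply the polarization identity $\pare{\ur^n - \ur^{n-1}, \ur^n} = \tfrac{1}{2}\norm{\ur^n}^2 - \tfrac{1}{2}\norm{\ur^{n-1}}^2 + \tfrac{1}{2}\norm{\ur^n - \ur^{n-1}}^2$, so that after multiplying through by $\delt$ the identity reads
\begin{equation*}
\tfrac{1}{2}\norm{\ur^n}^2 - \tfrac{1}{2}\norm{\ur^{n-1}}^2 + \tfrac{1}{2}\norm{\ur^n - \ur^{n-1}}^2 + \delt\,\Rey\inv\norm{\Grad \ur^n}^2 = \delt\,\pare{\bff^n, \ur^n}.
\end{equation*}

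Next I would bound the forcing term using the dual norm and Young's inequality, $\pare{\bff^n,\ur^n} \le \norm{\bff^n}_{-1}\norm{\Grad \ur^n} \le \tfrac{\Rey}{2}\norm{\bff^n}_{-1}^2 + \tfrac{1}{2}\Rey\inv\norm{\Grad \ur^n}^2$, where the split is chosen so that exactly half of the viscous term is absorbed into the left-hand side. After multiplying by $2$, this leaves
\begin{equation*}
\norm{\ur^n}^2 - \norm{\ur^{n-1}}^2 + \norm{\ur^n - \ur^{n-1}}^2 + \delt\,\Rey\inv\norm{\Grad \ur^n}^2 \le \Rey\,\delt\,\norm{\bff^n}_{-1}^2.
\end{equation*}

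Finally, I would sum this inequality over $n = 1, \dots, l$; the first two terms telescope to $\norm{\ur^l}^2 - \norm{\ur^0}^2$, and the nonnegative dissipation terms $\norm{\ur^n - \ur^{n-1}}^2$ may simply be discarded. Using $\norm{\ur^0} \le \norm{\ud^0}$ (the initial datum being the $L^2$ projection of $\ud^0$ onto $\bX_r$, which is nonexpansive since $\bX_r \subset \bX_d$), then taking the maximum over $l$ in the first term and extending the forcing sum to all of $n = 0, \dots, M$ yields \eqref{eq: stab} once the discrete norms in \eqref{eq: triplebar} are recognized. I do not anticipate a genuine obstacle here: the argument is a textbook backward-Euler energy estimate, and the only points requiring mild care are the exact constant in Young's inequality (so the viscous term is fully absorbed rather than over- or under-counted) and the bookkeeping of the summation endpoints against the definitions in \eqref{eq: triplebar}.
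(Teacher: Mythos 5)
Your proof is correct and is precisely the standard backward--Euler energy argument that the paper itself invokes (the paper omits the proof, stating only that it ``follows along the lines of'' standard stability proofs): test with $\ur^n$, kill the convection term by skew-symmetry, use the polarization identity, absorb half the viscous term via Young's inequality, and telescope. The only cosmetic point is that bounding $\max_l \norm{\ur^l}^2$ and the full dissipation sum simultaneously by the same right-hand side formally costs a factor of two (and the gradient sum in \eqref{eq: triplebar} starts at $n=0$ while the estimate starts at $n=1$), both of which the paper's generic constant $C$ silently absorbs --- exactly the bookkeeping you flagged.
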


\subsection{LES-ROMs background}
\label{ssec: les_rom}

As mentioned in Section~\ref{S:1}, in convection-dominated (e.g., turbulent) flows, under-resolved ROMs generally yield inaccurate results, often in the form of spurious oscillations.
One popular strategy for alleviating this inaccurate behavior is to endow the ROMs with closure models.
In this paper, we analyze and investigate numerically two EV ROM closure models.
To perform the numerical analysis of these EV ROM closures, we cast them in an LES-ROM framework, which we briefly overview in this section.

LES-ROMs, which have been developed over the past decade (see the reviews in \cite{Quaini2024Bridging,ahmed2021closures}) adapt concepts from classical LES \cite{berselli2006mathematics,sagaut2006large} to construct ROM closures.
In particular, both LES-ROMs and classical LES models (i.e., LES-FOMs) leverage spatial filtering to separate the large and small flow scales at coarse resolutions - using coarse meshes in LES-FOMs and a limited number of basis functions in LES-ROMs.
Furthermore, both LES-ROMs and LES-FOMs approximate the large scales in the flow, and model the action of the small scales.

We emphasize, however, that LES-ROMs and classical LES-FOMs are fundamentally different in the type of spatial filtering used to construct the closure models:
Classical LES-FOMs often employ continuous spatial filters, e.g., the Gaussian or differential filters \cite{berselli2006mathematics,sagaut2006large}.
Specifically, the construction of LES-FOMs generally is a three-step process:
(i) filter the NSE~\eqref{eq:Strong_NSE1}--\eqref{eq:Strong_NSE2}; 
(ii) model the subfilter-scale stress tensor in the resulting spatially filtered NSE~\eqref{eq:Strong_SFNSE1}--\eqref{eq:Strong_SFNSE2}; and 
(iii) use standard numerical methods (e.g., the finite element method) to discretize the resulting LES-FOM~\eqref{eq:Strong_LES1}--\eqref{eq:Strong_LES2}. 
See \cite{agdestein2025discretize} for recent discussion on alternatives. 

In contrast, LES-ROMs use {\it discrete} spatial filters, e.g., the ROM $L^2$ projection, the ROM differential filter, or the ROM higher-order algebraic filter~\cite{tsai2025time}.
The reason for using a discrete (ROM) spatial filter is that the LES-ROM construction is 
generally a two-step process that starts with a discrete set of equations:
(i) filter the (discrete) $d$-dimensional G-ROM~\eqref{eq: gromu}, whose solution represents our \emph{``true solution"} \cite{koc2022verifiability}; and 
(ii) construct a closure model for the subfilter-scale stress tensor in the resulting spatially filtered equations. 
Next, we outline the main steps in the LES-ROM construction.

As our (discrete) ROM spatial filter, we use $P_r$, the ROM projection in Definition~\ref{def: ROM_Proj}.
Filtering the (discrete) $d$-dimensional G-ROM \eqref{eq: gromu} with $P_r$, we obtain the spatially-filtered $d$-dimensional G-ROM, which is the (discrete) ROM equivalent of the (continuous) spatially filtered NSE \eqref{eq:Strong_SFNSE1}-\eqref{eq:Strong_SFNSE2} used in classical LES-FOM construction:

For $ n=1, \dots, M$, find $P_r(\ud^n) \in \bX_r$ such that, for all $\bv_r \in \bX_r$,
\begin{eqnarray} 
   \frac{1}{\delt}\pare{ P_r(\bfu_d^{n}) - P_r(\bfu_d^{n-1}) , \vr } + \Rey\inv (\nabla P_r(\bfu_d^{n}) , \nabla \vr ) + \Rey\inv \pare{ \mathcal{E}^n(\bu_d^n),\Grad \vr} \nonumber \\
    + b^*(P_r(\bfu_d^{n}),P_r(\bfu_d^{n}),\vr) + \pare{ \uptau^{\mathrm{FOM}} (\bfu_d^{n}), \vr  } = (\bff(t^{n}),\vr), \qquad \label{eq: Filtered-ROM}
\end{eqnarray}
where the 
closure term, 
$\uptau^{\mathrm{FOM}}$, and the commutation error, $\mathcal{E}^n$, are defined as follows:  

\al
    \uptau^{\mathrm{FOM}}(\bfu_d^{n}) &:= P_r( \bfu_d^n \cdot \Grad \bfu_d^n  ) - P_r(\bfu_d^n) \cdot \Grad P_r(\bfu_d^n), \label{eq: tau_FOM}\\
    \mathcal{E}^n (\bu_d^n) &:= P_r(\Grad \bfu_d^n) - \Grad P_r(\bfu_d^n).\label{eq: commutation_error}
\eal
The commutation error $\mathcal{E}^n$ 
was shown in \cite{koc2019commutation} to be generally nonzero, but 
negligible for large $\Rey$. 
Because this paper focuses on under-resolved ROMs of convection-dominated flows, we follow the approach in \cite{koc2022verifiability} and 
neglect the commutation error.

We note that the spatially filtered $d$-dimensional G-ROM \eqref{eq: Filtered-ROM} suffers 
from 
a closure problem that is similar to the closure problem in the classical LES-FOM case (see \eqref{eq:Strong_SFNSE1}--\eqref{eq:Strong_SFNSE2}) because the 
subfilter-scale stress tensor, $\uptau^{\mathrm{FOM}}(\bfu_d^{n})$ in \eqref{eq: tau_FOM}, is not closed in terms of $P_r(\bfu_d^n)$. 
Thus, just as in the LES-FOM construction (Section~\ref{ssec: LES_back}), we model the subfilter-scale stress tensor, $\uptau^{\mathrm{FOM}}$, with a ROM closure model, $\uptau^{\mathrm{ROM}}$, to obtain the LES-ROM: For $ n=1, \dots, M$, find $\wr^n \in \bX_r$ such that

\begin{eqnarray} \label{eq: Les-ROM}
\begin{split}
   \frac{1}{\delt}\pare{\wr^{n} - \wr^{n-1} , \vr } + \Rey^{-1}(\nabla \wr^{n}, \nabla \vr)+ b^*(\wr^{n},\wr^{n},\vr)\\
    + \pare{ \uptau^{\mathrm{ROM}} (\wr^{n}), \vr } = (\bff(t^{n}),\vr),
\end{split} \quad \forall \vr\in \bX_r, \quad 
\end{eqnarray}

where $\wr^n$ is the LES-ROM approximation of $P_r(\bfu^n_d)$.

There is a wide variety of LES-ROMs, which are reviewed in \cite{ahmed2021closures,Quaini2024Bridging}.
In the next section, we 
prove the verifiability of a class of LES-ROMs and illustrate this strategy for two EV LES-ROMs: the Smagorinsky ROM and the new Ladyzhenskaya ROM.

\begin{remark}

To build the LES-ROM framework, we have used $P_r$, the ROM projection in Definition~\ref{def: ROM_Proj}.
Given the intrinsic hierarchical structure of the POD basis, the ROM projection has been popular in the construction of LES-ROMs (see, however, \cite{koc2019commutation,xie2017approximate,Quaini2024Bridging} for the use of ROM differential filters to construct LES-ROMs).
The ROM projection was used for the first time to develop the dynamic subgrid-scale ROM closure model~\cite{wang2012proper}.
It was later used 
for variational multiscale (VMS) ROM closures (e.g., \cite{bergmann2009enablers,iliescu2014variational,reyes2020projection,stabile2019reduced, wang2012proper}).
Recently, the ROM projection was leveraged 
to construct data-driven VMS-ROMs~\cite{koc2022verifiability,manti2025symbolic}.
\end{remark}

\begin{remark} 
In the numerical analysis of the LES-ROM framework developed in the next section, for simplicity, we consider the backward Euler time discretization for all ROMs. 
We believe, however, that these results can be extended to higher-order time discretizations as long as both the G-ROM and LES-ROM use the same time discretization (i.e., are consistent \cite{ingimarson2022full,strazzullo2022consistency} with respect to time discretization).
\end{remark}

\section{Verifiability of LES-ROMs } \label{sec: Verifiability}

The first goal of this section is to prove the verifiability of a certain class of LES-ROMs (see Theorem~\ref{thm:verifiability}).
The second goal is to leverage Theorem~\ref{thm:verifiability} to prove the verifiability of two EV LES-ROMs: the new Ladyzhenskaya ROM and the Smagorinsky ROM (see Corollary \ref{cor: verifiability} and Corollary \ref{cor: s_verifiability}, respectively).
Verifiability of closure models has been investigated for decades in classical CFD (see, e.g., \cite{berselli2006mathematics} and \cite{Kaya2002verifiability} for a survey of verifiability methods in LES).
In reduced order modeling, however, the study of verifiability for LES-ROMs has just begun.
Indeed, verifiability has been investigated only in \cite{koc2022verifiability} for a particular data-driven LES-ROM. 
In this paper, we continue the effort started in \cite{koc2022verifiability} and extend the verifiability theory to a larger class of LES-ROMs.
Furthermore, we prove verifiability for two new LES-ROMs of eddy viscosity type: the Smagorinsky and Ladyzhenskaya models.

First, we define verifiability and mean dissipativity following \cite{koc2022verifiability} (see also \cite{Kaya2002verifiability}).

\begin{definition}[Verifiability \cite{koc2022verifiability}]
    For a fixed number of snapshots, M, a ROM closure model is verifiable in the $L^2 $ norm 
    if there exists a constant $C$ such that, for all $ r \leq d$ and $ n = 1, \dots, M$, the following a priori error bound holds:
    \begin{equation}\label{eq: Verify}
        \norm{ P_r(\bfu_d^n) - \wr^n }^2 \leq C \frac{1}{M}\sum_{j=1}^{M} \norm{P_r(\uptau^{\mathrm{FOM}}(\bfu_d^j)) - \uptau^{\mathrm{ROM}}(P_r(\bfu_d^j))}^2. 
    \end{equation}
\end{definition}

\begin{definition}[Mean Dissipativity \cite{koc2022verifiability}]\label{def: mean_d}
    A ROM closure model satisfies the mean dissipativity condition if for all $\ur, \vr \in \bX_r $,
    \begin{equation}\label{eq: mean_dis}
        0 \leq ( \uptau^{\mathrm{ROM}}(\ur) - \uptau^{\mathrm{ROM}}(\vr), \ur - \vr ) .
    \end{equation}
\end{definition}

In  Theorem~\ref{thm:verifiability}, 
we prove that all LES ROMs \eqref{eq: Les-ROM} are verifiable if the closure model, $ \uptau^{\mathrm{ROM}}$,  satisfies the mean dissipativity condition and the time step $\delt$ is sufficiently small. 

\begin{theorem}
    The LES-ROM \eqref{eq: Les-ROM} that satisfies the \textit{mean dissipativity condition} \eqref{eq: mean_dis}  
    with the initial condition $ \wr^0 = P_r(\bfu_d^0)$ is verifiable: 
    If the time step $\delt$ is chosen such that $ \delt \, d_n \leq 1$ for all $ n= 1, \dots ,M$, then 
    \begin{equation}
        \trinorm{\bfe}_{\infty,0}^2 + \Rey\inv \trinorm{\Grad \bfe}_{2,0}^2 \leq K \trinorm{ P_r( \uptau^{\mathrm{FOM}}\pare{\bfu_d} - \uptau^{\mathrm{ROM}}\pare{P_r\pare{\bfu_d}} ) }_{2,0}^2, \label{eq:verifiability}
    \end{equation}
    where $ K = \exp\p{\delt \sum_{n=1}^{M}\p{ \frac{d_n}{1-\delt d_n} }}$, $ d_n = C \p{\Rey^{3} \norm{\Grad P_r \p{\bfu_d^{n}} }^4 + 1} $, and $ \bfe = P_r(\bfu_d) - \wr$. 
    \label{thm:verifiability}
\end{theorem}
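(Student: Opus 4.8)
The plan is to derive an error equation for $\bfe^n = P_r(\ud^n) - \wr^n$ by subtracting the LES-ROM \eqref{eq: Les-ROM} from the spatially-filtered $d$-dimensional G-ROM \eqref{eq: Filtered-ROM} (with the commutation error neglected), and then to test with $\vr = \bfe^n \in \bX_r$, producing a discrete energy inequality of the exact form required by the discrete Gronwall Lemma~\ref{lemma: discreteGronwall}. The time-difference term $\frac{1}{\delt}(\bfe^n - \bfe^{n-1}, \bfe^n)$ is handled by the polarization identity to yield $\frac{1}{2\delt}(\|\bfe^n\|^2 - \|\bfe^{n-1}\|^2)$, and the viscous term contributes $\Rey\inv \|\Grad \bfe^n\|^2$ on the left-hand side.

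The key algebraic step is the treatment of the difference of trilinear forms, $b^*(P_r(\ud^n), P_r(\ud^n), \bfe^n) - b^*(\wr^n, \wr^n, \bfe^n)$. I would add and subtract $b^*(\wr^n, P_r(\ud^n), \bfe^n)$ to rewrite this as $b^*(\bfe^n, P_r(\ud^n), \bfe^n) + b^*(\wr^n, \bfe^n, \bfe^n)$; the skew-symmetry of $b^*$ annihilates the second term, leaving only $b^*(\bfe^n, P_r(\ud^n), \bfe^n)$, which is governed by the true filtered solution $P_r(\ud^n)$ rather than by $\wr^n$. Bounding this via the first inequality of Lemma~\ref{TRIL} gives $C\|\bfe^n\|^{1/2}\|\Grad\bfe^n\|^{3/2}\|\Grad P_r(\ud^n)\|$, and a Young's inequality with conjugate exponents $4/3$ and $4$ splits off $\tfrac12 \Rey\inv \|\Grad\bfe^n\|^2$ (absorbed by the diffusion) at the cost of a term $C\,\Rey^{3}\|\Grad P_r(\ud^n)\|^4\|\bfe^n\|^2$; this is precisely the source of the $\Rey^3$ factor appearing in $d_n$.

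The closure contribution $(\uptau^{\mathrm{FOM}}(\ud^n) - \uptau^{\mathrm{ROM}}(\wr^n), \bfe^n)$ is split as $(\uptau^{\mathrm{FOM}}(\ud^n) - \uptau^{\mathrm{ROM}}(P_r(\ud^n)), \bfe^n) + (\uptau^{\mathrm{ROM}}(P_r(\ud^n)) - \uptau^{\mathrm{ROM}}(\wr^n), \bfe^n)$. The second bracket is exactly the mean dissipativity condition \eqref{eq: mean_dis} with $\ur = P_r(\ud^n)$ and $\vr = \wr^n$, hence nonnegative, and moves to the left-hand side with a favorable sign. For the first bracket, since $\bfe^n \in \bX_r$, the $L^2$-projection property \eqref{eqn:rom-projection-definition} lets me replace $\uptau^{\mathrm{FOM}}(\ud^n)$ by $P_r(\uptau^{\mathrm{FOM}}(\ud^n))$ in its pairing with $\bfe^n$; recombining the two $L^2$ pairings gives $(P_r(\uptau^{\mathrm{FOM}}(\ud^n)) - \uptau^{\mathrm{ROM}}(P_r(\ud^n)), \bfe^n)$, to which Cauchy--Schwarz followed by Young's inequality yields $\tfrac12\|\bfe^n\|^2 + \tfrac12\|P_r(\uptau^{\mathrm{FOM}}(\ud^n)) - \uptau^{\mathrm{ROM}}(P_r(\ud^n))\|^2$; the first piece supplies the ``$+1$'' in $d_n$ and the second is exactly the quantity appearing on the right-hand side of verifiability.

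Collecting terms, multiplying by $2\delt$, and summing over $n = 1, \dots, l$ (using $\bfe^0 = 0$, which holds by the initial condition $\wr^0 = P_r(\ud^0)$) casts the inequality in the form of Lemma~\ref{lemma: discreteGronwall} with $a_l = \|\bfe^l\|^2$, $b_n = \Rey\inv\|\Grad\bfe^n\|^2$, $c_n = \|P_r(\uptau^{\mathrm{FOM}}(\ud^n)) - \uptau^{\mathrm{ROM}}(P_r(\ud^n))\|^2$, $H=0$, and $d_n = C(\Rey^{3}\|\Grad P_r(\ud^n)\|^4 + 1)$. The hypothesis $\delt\, d_n \le 1$ is precisely the Gronwall condition, and taking the maximum over $l$ produces \eqref{eq:verifiability} with the stated constant $K = \exp\!\big(\delt\sum_{n=1}^{M}\tfrac{d_n}{1-\delt d_n}\big)$. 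I expect the main obstacle to be the bookkeeping in the trilinear splitting: the decomposition must be chosen so that skew-symmetry removes the term containing $\wr^n$ (over whose gradient we have no a priori control) and retains only the term controlled by $\|\Grad P_r(\ud^n)\|$, and the Young's inequality exponents must be calibrated so that the residual power of $\|\Grad\bfe^n\|$ is exactly $2$ and can be absorbed by the diffusion term, thereby fixing the $\Rey^3$ scaling of $d_n$.
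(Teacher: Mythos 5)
Your proposal is correct and follows essentially the same route as the paper's proof: the same trilinear splitting $b^*(\bfe^n,P_r(\bfu_d^n),\bfe^n)+b^*(\wr^n,\bfe^n,\bfe^n)$ with skew-symmetry killing the second term, the same use of mean dissipativity to discard $(\uptau^{\mathrm{ROM}}(P_r(\bfu_d^n))-\uptau^{\mathrm{ROM}}(\wr^n),\bfe^n)$, the same projection/Young treatment of the closure residual, and the same Young exponents $4/3$ and $4$ producing the $\Rey^3$ factor in $d_n$ before the discrete Gronwall step. No substantive differences to report.
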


\begin{proof}
    We subtract \eqref{eq: Les-ROM} from \eqref{eq: Filtered-ROM} and then add 
    $ ( \uptau^{ROM}(P_r(\bfu_d^n)), \vr)$ to both sides to get the error equation 
    \bea
    & &\frac{1}{\delt} \pare{\bfe^{n} - \bfe^{n-1},\vr} + \Rey\inv \pare{\Grad \bfe^{n}, \Grad \vr} + b^*\p{P_r\p{\bfu_d^{n}},P_r\p{\bfu_d^{n}},\vr} \nonumber\\
    & &- b^*\p{\wr^{n},\wr^{n},\vr} + \p{ \uptau^{ROM}(P_r(\bfu_d^n))- \uptau^{\mathrm{ROM}}(
    \wr^n), \vr } \nonumber\\
    &=& - \pare{\uptau^{\mathrm{FOM}}\pare{\bfu_d^{n}},\vr} +  \p{ \uptau^{\mathrm{ROM}}(P_r(\bfu_d^n)), \vr},
    \eea
    where $ \bfe^n = P_r(\bfu_d^n) - \wr^n$. 
    Note that the nonlinear terms can be rewritten as
    \begin{eqnarray*}
        b^{\ast}( P_r(\bfu_d^{n}), P_r(\bfu_d^{n}),\vr) - b^{\ast}( \wr^{n}, \wr^{n},\vr)
        = b^{\ast}(  \bfe^{n}, P_r(\bfu_d^{n}),\vr)  
        + b^{\ast}( \wr^{n},  \bfe^{n},\vr) .
    \end{eqnarray*}
    Using the above equality, 
    letting $\vr=  \bfe^{n}$, and noting that 
        $b^{\ast}( \wr^{n}, \bfe^{n}, \bfe^{n})=0$ because $ b^*$ is skew-symmetric with respect to its second and third arguments, we obtain 
   \bea
    & &\frac{1}{\delt} \pare{\bfe^{n} - \bfe^{n-1},\bfe^{n}} + \Rey\inv \norm{\Grad \bfe^{n}}^2 + b^*\p{\bfe^{n},P_r\p{\bfu_d^{n}},\bfe^{n}} \nonumber\\
    & &+ \p{ \uptau^{\mathrm{ROM}}(P_r(\bfu_d^n)) - \uptau^{\mathrm{ROM}}(
    \wr^n),\bfe^n } \nonumber \\
    &=& - \pare{\uptau^{\mathrm{FOM}}\pare{\bfu_d^{n}} -  \uptau^{\mathrm{ROM}}\p{P_r(\bfu_d^n)}   ,\bfe^{n}}.
    \eea
    Applying the Cauchy–Schwarz and Young's inequalities to the 
    first term results in
    \als
    &\frac{1}{2\delt} \pare{ \norm{\bfe^{n}}^2 - \norm{\bfe^{n-1}}^2 } + \Rey\inv \norm{\Grad \bfe^{n}}^2  \nonumber\\
    &+  \p{ \uptau^{\mathrm{ROM}}(P_r(\bfu_d^n)) - \uptau^{\mathrm{ROM}}(
    \wr^n) ,\bfe^n }  \nonumber \\
    \leq&  \abs{b^*\p{\bfe^{n},P_r\p{\bfu_d^{n}},\bfe^{n}} } 
    + \abs{ \pare{\uptau^{\mathrm{FOM}}\pare{\bfu_d^{n}} - \uptau^{\mathrm{ROM}}(P_r(\bfu_d^n))  ,\bfe^{n}}}.       
    \eals
    By the mean dissipativity of the closure  
    model, we have
    \begin{equation}\label{eq: mean-diss}
         \p{ \uptau^{\mathrm{ROM}}(P_r(\bfu_d^n)) - \uptau^{\mathrm{ROM}}(
         \wr^n),\bfe^n }  \geq 0.
    \end{equation} 
    Therefore, we can drop that term from left-hand side,
    yielding, 
    \bea
    \begin{split}
    \frac{1}{2\delt} \pare{ \norm{\bfe^{n}}^2 - \norm{\bfe^{n-1}}^2 } + \Rey\inv \norm{\Grad \bfe^{n}}^2 
    \leq  \abs{b^*\p{\bfe^{n},P_r\p{\bfu_d^{n}},\bfe^{n}} } \\
    + \abs{ \uptau^{\mathrm{FOM}}\pare{\bfu_d^{n}} - \uptau^{\mathrm{ROM}}(P_r(\bfu_d^n))  ,\bfe^{n}}. \label{eq: ver1}
    \end{split}
    \eea
    We then bound the two terms on the right-hand side in standard ways. 
    For the nonlinear term, we use Lemma \ref{TRIL}:
    \bea 
    & & \abs{b^*\p{\bfe^{n},P_r\p{\bfu_d^{n}},\bfe^{n}} } \leq C \norm{\Grad \bfe^{n}}^{3/2} \norm{\Grad P_r\p{\bfu_d^{n}}} \norm{\bfe^{n}}^{1/2} \nonumber \\
    & \leq& \frac{\Rey\inv}{2} \norm{\Grad \bfe^{n}}^2 + C \Rey^{3} \norm{\Grad P_r \p{\bfu_d^{n}} }^4 \norm{\bfe^{n}}^2. \label{eq: ver_bstar}
    \eea
    For the closure term, we use \eqref{eqn:rom-projection-definition} along with the Cauchy–Schwarz and Young's inequalities, which gives
    \bea 
    \abs{ \pare{\uptau^{\mathrm{FOM}}\pare{\bfu_d^{n}}  -  \uptau^{\mathrm{ROM}}(P_r(\bfu_d^n)),\bfe^{n}}} = \nonumber \\
    \abs{ \pare{ P_r( \uptau^{\mathrm{FOM}}\pare{\bfu_d^{n}}  -  \uptau^{\mathrm{ROM}}(P_r(\bfu_d^n)) ),\bfe^{n}}} \leq \nonumber \\
    \frac{1}{2} \norm{P_r( \uptau^{\mathrm{FOM}}\pare{\bfu_d^{n}} -  \uptau^{\mathrm{ROM}}(P_r(\bfu_d^n)))}^2 + \frac{1}{2}  \norm{\bfe^{n}}^2. \label{eq: ver_close}
    \eea
    Plugging in \eqref{eq: ver_bstar} and \eqref{eq: ver_close} into \eqref{eq: ver1} results in 
    \bea
    \begin{split}
    \frac{1}{2\delt} \pare{ \norm{\bfe^{n}}^2 - \norm{\bfe^{n-1}}^2 } + \frac{\Rey\inv}{2} \norm{\Grad \bfe^{n}}^2 \\
    \leq 
    C \p{\Rey^{3} \norm{\Grad P_r \p{\bfu_d^{n}} }^4 +\frac{1}{2}} \norm{\bfe^{n}}^2\\
    + \frac{1}{2} \norm{P_r(\uptau^{\mathrm{FOM}}\pare{\bfu_d^{n}} - \uptau^{\mathrm{ROM}}(P_r(\bfu_d^n)) )}^2. 
    \label{eqn:pre-gronwall}
    \end{split}
    \eea
    Because $ \wr^0 = P_r(\bfu_d^0)$, $ \norm{\bfe^0}^2 = \norm{ P_r(\bfu_d^0) - \wr^0 }^2= 0$.
    Thus, multiplying \eqref{eqn:pre-gronwall} by $ 2\delt$ and summing from $ n=1 $ to $ M$ gives 
    \bea
    \begin{split}
    \norm{\bfe^{M}}^2 + \Rey\inv  \delt \sum_{n=1}^{M}\norm{\Grad \bfe^{n}}^2 
    \leq  
    C  \delt \sum_{n=1}^M \p{\Rey^{3} \norm{\Grad P_r \p{\bfu_d^{n}} }^4 + 1} \norm{\bfe^{n}}^2\\
    + \delt \sum_{n=1}^{M} \norm{P_r(\uptau^{\mathrm{FOM}}\pare{\bfu_d^{n}} -\uptau^{\mathrm{ROM}}\pare{P_r\pare{\bfu_d^{n}} })}^2.
    \label{eqn:gronwall}
    \end{split}
    \eea
    Choosing $ \delt$ sufficiently small, $ \delt \leq C\p{\Rey^{3} \norm{\Grad P_r \p{\bfu_d^{n}} }^4 + 1}\inv $, and applying the discrete Gronwall's inequality (Lemma \ref{lemma: discreteGronwall})  to \eqref{eqn:gronwall} gives the result.
\end{proof}

\subsection{Ladyzhenskaya and Smagorinsky LES-ROMs}

In this section, we specialize the results obtained for general LES-ROMs (i.e., Theorem~\ref{thm:verifiability}) for two eddy viscosity LES-ROMs: the Ladyzhenskaya and the Smagorinsky models.
We emphasize that this is the first time that the Ladyzhenskaya model is used in reduced order modeling.

The {\it Ladyzhenskaya ROM (L-ROM)} with backward Euler time stepping 
is obtained by taking the closure term, $ \uptau^{\mathrm{ROM}}$, in \eqref{eq: Les-ROM} to be
\begin{equation}\label{eq: LZ_closure}
    \uptau^{\mathrm{ROM}} (\wr^{n}) = -(C_S\delta)^\mu \div( \norm{\Grad \wr^{n}}_F^s \Grad \wr^{n}),
\end{equation}
where $ C_S$ is the Smagorinsky constant,  $\delta$ is the ROM lengthscale, and $s$ and $\mu$ are tunable positive exponents. 
Using the above form of $\uptau^{\mathrm{ROM}}$ in \eqref{eq: Les-ROM} and performing integration by parts for the term $\pare{ \uptau^{\mathrm{ROM}} (\wr^{n}), \vr }$, we obtain the following L-ROM: 
For $n = 1, \ldots ,M$,  find 
$\wr^{n} \in \bX_r$ such that
\begin{eqnarray} \label{eq: LZ-ROM}
\begin{split}
   \pare{\frac{\wr^{n} - \wr^{n-1} }{\Delta t }, \vr } + \Rey\inv \pare{\nabla \wr^{n}, \nabla \vr} +
   (C_S \delta)^\mu ( \norm{ \nabla \wr^{n} }_F^s  \nabla \wr^{n}, \nabla \vr)  \\ + b^*(\wr^{n},\wr^{n},\vr)  = (\bff(t^{n}),\vr), \quad \forall \vr \in \bX_r.
\end{split} \quad
\end{eqnarray}

The Ladyzhenskaya model has been studied in the past in the continuous setting \cite{du1991analysis,beirao2005regularity} and for various FOMs \cite{du1990finite,reyesgsm,cao2022continuous}. To our knowledge, this is the first time that this model is examined in the ROM context.

The {\it Smagorinsky ROM (S-ROM)}
is obtained by setting $ \mu=2$ and $s=1$ in the L-ROM \eqref{eq: LZ-ROM}. Namely, the S-ROM can be 
written as follows: 
For $n = 1, \ldots, M$,  find 
$\wr^{n} \in \bX_r$ such that
\begin{eqnarray} \label{eq: S-ROM}
\begin{split}
   \pare{\frac{\wr^{n} - \wr^{n-1} }{\Delta t }, \vr } + \Rey\inv \pare{\nabla \wr^{n}, \nabla \vr} +
   (C_S \delta)^2 ( \norm{ \nabla \wr^{n} }_F  \nabla \wr^{n}, \nabla \vr)  \\ + b^*(\wr^{n},\wr^{n},\vr)  = (\bff(t^{n}),\vr), \quad \forall \vr \in \bX_r.
\end{split} \quad
\end{eqnarray}

We note that 
the S-ROM was first used in a POD setting \cite{noack2002low,ullmann2010pod,wang2012proper}, and then developed for reduced basis  \cite{rebollo2017certified} and variational multiscale methods \cite{ballarin2020certified,iliescu2014variational,rebollo2023certified}. 
As a generalization of the S-ROM, the L-ROM allows more flexibility in choosing the EV magnitude. 
This could alleviate the potential overdiffusivity of the Smagorinsky model, which is well documented in a FOM setting \cite{berselli2006mathematics,john2003large,rebollo2014mathematical,sagaut2006large}.

Next, we prove the verifiability of the 
L-ROM and S-ROM.

\begin{corollary} \label{cor: verifiability}
       The L-ROM  \eqref{eq: LZ-ROM} with initial condition $ \wr^0 = P_r(\bfu_d^0)$ is verifiable: If the time step $\delt$ is chosen such that $ \delt \, d_n \leq 1$ for all $ n= 1, \dots ,M$, then
    \begin{equation}
        \trinorm{\bfe}_{\infty,0}^2 + \Rey\inv \trinorm{\Grad \bfe}_{2,0}^2 \leq K \trinorm{ P_r( \uptau^{\mathrm{FOM}}\pare{\bfu_d} - \uptau^{\mathrm{ROM}}\pare{P_r\pare{\bfu_d} } )}_{2,0}^2,  \label{eq: L-verifiability}
    \end{equation}
    where $ K = \exp\p{\delt \sum_{n=1}^{M}\p{ \frac{d_n}{1-\delt d_n} }}$, $ d_n = C \p{\Rey^{3} \norm{\Grad P_r \p{\bfu_d^{n}} }^4 + 1} $, and $ \bfe = P_r(\bfu_d) - \wr$.
\end{corollary}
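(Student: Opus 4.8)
The plan is to recognize that the L-ROM~\eqref{eq: LZ-ROM} is nothing but the general LES-ROM~\eqref{eq: Les-ROM} specialized to the eddy-viscosity closure $\uptau^{\mathrm{ROM}}$ of~\eqref{eq: LZ_closure}. Consequently, the full conclusion~\eqref{eq: L-verifiability}, including the explicit expressions for $K$ and $d_n$, will follow verbatim from Theorem~\ref{thm:verifiability} the moment I verify its single structural hypothesis for this closure: the mean dissipativity condition~\eqref{eq: mean_dis}. Everything therefore reduces to checking that the Ladyzhenskaya closure is mean dissipative.

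To carry this out, first I would unpack the pairing $\p{\uptau^{\mathrm{ROM}}(\ur) - \uptau^{\mathrm{ROM}}(\vr), \ur - \vr}$ for $\ur,\vr\in\bX_r$. Inserting the divergence form~\eqref{eq: LZ_closure} and integrating by parts --- with the boundary terms vanishing because the POD basis functions inherit the homogeneous Dirichlet conditions (see the Remark after~\eqref{eq: gromu-r}) --- converts this $L^2$ pairing into the gradient form
\begin{equation*}
\p{\uptau^{\mathrm{ROM}}(\ur) - \uptau^{\mathrm{ROM}}(\vr), \ur - \vr} = (C_S\delta)^\mu \p{\norm{\Grad \ur}_F^s \Grad \ur - \norm{\Grad \vr}_F^s \Grad \vr,\; \Grad(\ur - \vr)}.
\end{equation*}
This is precisely the expression estimated from below by the strong monotonicity bound~\eqref{StrongMonotonicty} of Lemma~\ref{lemma: Strong Mono}, applied with the tunable exponent $s\ge 0$ (admissible at the ROM level, as noted in Section~\ref{sec: Prelim_Not}).

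Since $C_S,\delta,\mu>0$ give $(C_S\delta)^\mu>0$, multiplying Lemma~\ref{lemma: Strong Mono} by this positive constant yields
\begin{equation*}
\p{\uptau^{\mathrm{ROM}}(\ur) - \uptau^{\mathrm{ROM}}(\vr), \ur - \vr} \ge C\,(C_S\delta)^\mu \norm{\Grad(\ur - \vr)}_{s+2}^{s+2} \ge 0,
\end{equation*}
which is exactly~\eqref{eq: mean_dis}. With mean dissipativity established, I would simply invoke Theorem~\ref{thm:verifiability} to obtain the claimed bound, so no further estimation is needed.

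I do not expect any serious analytic obstacle here, since the corollary is in essence a verification that the abstract framework applies. The one place demanding care is the integration-by-parts identification: I must confirm that the weak-form contribution of $\uptau^{\mathrm{ROM}}$ written in~\eqref{eq: LZ-ROM} agrees with the $L^2$ pairing $\p{\uptau^{\mathrm{ROM}}(\cdot),\cdot}$ used in Definition~\ref{def: mean_d}, and that Lemma~\ref{lemma: Strong Mono} is indeed applicable for the whole admissible range of $s$. Once this bookkeeping is settled, strong monotonicity does all the real work, and the verifiability estimate~\eqref{eq: L-verifiability} is inherited directly from the general theorem.
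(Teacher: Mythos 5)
Your proposal is correct and follows essentially the same route as the paper's proof: establish mean dissipativity of the Ladyzhenskaya closure by converting the $L^2$ pairing to gradient form via the divergence theorem and then applying the strong monotonicity bound of Lemma~\ref{lemma: Strong Mono}, after which Theorem~\ref{thm:verifiability} delivers the stated estimate. Your extra care about the vanishing boundary terms and the consistency of the weak-form pairing is sound bookkeeping that the paper leaves implicit, but it does not change the argument.
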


\begin{proof}
    The 
    L-ROM uses an eddy viscosity closure model given as
    \begin{equation}
        \uptau^{\mathrm{ROM}} (\wr^{n}) = -(C_S\delta)^\mu \div( \norm{\Grad \wr^{n}}_F^s \Grad \wr^{n}).
    \end{equation}
To prove mean dissipativity of $ \uptau^{\mathrm{ROM}} $, 
we consider $ \ur, \vr \in \bX_r$, and notice that
    \bea
        \pare{ \uptau^{\mathrm{ROM}}(\ur) - \uptau^{\mathrm{ROM}}(\vr), \ur -\vr } = \nonumber \\
        -(C_S\delta)^\mu \pare{  \div\pare{ \norm{\Grad \ur^{n}}_F^s \Grad \ur^{n} - \norm{\Grad \vr^{n}}_F^s \Grad \vr^{n} } , \ur^n -\vr^n }.
    \eea
Using the divergence theorem, we obtain 
    \bea 
         -(C_S\delta)^\mu \pare{  \div\pare{ \norm{\Grad \ur^{n}}_F^s \Grad \ur^{n} - \norm{\Grad \vr^{n}}_F^s \Grad \vr^{n} } , \ur^n -\vr^n } = \nonumber \\
          (C_S\delta)^\mu \pare{  \norm{\Grad \ur^{n}}_F^s \Grad \ur^{n} - \norm{\Grad \vr^{n}}_F^s \Grad \vr^{n}  , \Grad( \ur^n -\vr^n) }.
    \eea    
    By the strong monotonicity Lemma \ref{lemma: Strong Mono}, we have
    \bea 
       (C_S\delta)^\mu \pare{  \norm{\Grad \ur^{n}}_F^s \Grad \ur^{n} - \norm{\Grad \vr^{n}}_F^s \Grad \vr^{n}  , \Grad( \ur^n -\vr^n) } \nonumber \\
        \geq C(C_S\delta)^\mu \norm{\Grad (\ur^n- \vr^n) }_{s+2}^{s+2}\geq 0. 
    \eea
    Thus, the closure term in 
    the L-ROM satisfies the mean dissipativity condition in Definition \ref{def: mean_d}. 
    Therefore, by Theorem \ref{thm:verifiability}, the 
    L-ROM is verifiable.    
\end{proof}

\begin{corollary}    \label{cor: s_verifiability}
       The 
       S-ROM with initial condition $ \wr^0 = P_r(\bfu_d^0)$ is verifiable: 
       If the time step $\delt$ is chosen such that $ \delt \, d_n \leq 1$ for all $ n= 1, \dots ,M$, then
    \begin{equation}
        \trinorm{\bfe}_{\infty,0}^2 + \Rey\inv \trinorm{\Grad \bfe}_{2,0}^2 \leq K \trinorm{P_r(\uptau^{\mathrm{FOM}}\pare{\bfu_d} - \uptau^{\mathrm{ROM}}\pare{P_r\pare{\bfu_d}})}_{2,0}^2,  \label{eq:s_verifiability}
    \end{equation}
    where $ K = \exp\p{\delt \sum_{n=1}^{M}\p{ \frac{d_n}{1-\delt d_n} }}$, $ d_n = C \p{\Rey^{3} \norm{\Grad P_r \p{\bfu_d^{n}} }^4 + 1} $, and $ \bfe = P_r(\bfu_d) - \wr$.
\end{corollary}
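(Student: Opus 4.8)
The plan is to recognize that the S-ROM \eqref{eq: S-ROM} is precisely the L-ROM \eqref{eq: LZ-ROM} with the parameter choice $\mu = 2$ and $s = 1$, so that the verifiability follows as an immediate specialization of Corollary~\ref{cor: verifiability}. First I would record that the S-ROM closure term,
\begin{equation*}
\uptau^{\mathrm{ROM}}(\wr^{n}) = -(C_S\delta)^2 \div\p{\norm{\Grad \wr^{n}}_F \Grad \wr^{n}},
\end{equation*}
is exactly the L-ROM closure \eqref{eq: LZ_closure} evaluated at $\mu = 2$, $s = 1$. Since both exponents are positive, every hypothesis used in the L-ROM argument remains in force, and the conclusion \eqref{eq:s_verifiability} is inherited verbatim, with the same constant $K$ and the same coefficients $d_n$.

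To keep the argument self-contained, I would nonetheless re-establish the mean dissipativity condition \eqref{eq: mean_dis} directly for the S-ROM closure. For $\ur, \vr \in \bX_r$ I would write
\begin{equation*}
\p{\uptau^{\mathrm{ROM}}(\ur) - \uptau^{\mathrm{ROM}}(\vr), \ur - \vr} = -(C_S\delta)^2 \p{\div\p{\norm{\Grad \ur}_F \Grad \ur - \norm{\Grad \vr}_F \Grad \vr}, \ur - \vr},
\end{equation*}
then apply the divergence theorem to move the divergence onto a gradient acting on the test function (the boundary term vanishes because the POD basis inherits the homogeneous Dirichlet condition of the FOM), and finally invoke the strong monotonicity Lemma~\ref{lemma: Strong Mono} with $s = 1$ to obtain the lower bound $C(C_S\delta)^2 \norm{\Grad(\ur - \vr)}_3^3 \geq 0$. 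This is the single ROM-specific input required, and it is a one-line consequence of Lemma~\ref{lemma: Strong Mono}, exactly as in the L-ROM case.

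With mean dissipativity in hand, I would invoke Theorem~\ref{thm:verifiability} to conclude the a priori bound \eqref{eq:s_verifiability}. There is essentially no obstacle here: all of the analytic difficulty---the error equation, the trilinear estimate from Lemma~\ref{TRIL}, and the discrete Gronwall step---was already carried out once and for all in Theorem~\ref{thm:verifiability}, and it applies unchanged because $d_n$ and $K$ depend only on $\Rey$ and on $\norm{\Grad P_r(\bfu_d^{n})}$, not on the closure exponents. The only point worth flagging is that the strong monotonicity constant now carries the specific dependence on $s = 1$ (together with $\ell$ and $\Omega$), but this is absorbed into the generic constant $C$ and does not alter the form of the estimate.
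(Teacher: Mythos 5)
Your proposal is correct and matches the paper's proof, which simply observes that the S-ROM is the L-ROM with $\mu=2$ and $s=1$ and invokes Corollary~\ref{cor: verifiability}. Your additional re-derivation of mean dissipativity via the divergence theorem and Lemma~\ref{lemma: Strong Mono} is just the specialization to $s=1$ of the argument already given there, so the route is essentially identical.
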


\begin{proof}
   The results follows directly from Corollary \ref{cor: verifiability} 
   after choosing $s=1$ and $\mu=2$.
\end{proof}

\begin{remark}
    We note that \eqref{eq: L-verifiability} can be slightly improved if proven directly. If 
    in \eqref{eq: mean-diss} we use Lemma \ref{lemma: Strong Mono} 
    and do not drop the resulting right-hand side,
    we arrive at the following inequality: 
    \bea
        & &\trinorm{\bfe}_{\infty,0}^2 + \Rey\inv \trinorm{\Grad \bfe}_{2,0}^2 + C(C_S\delta)^\mu \delt \sum_{n=1}^{M} \norm{\Grad \bfe^n}_{s+2}^{s+2} \nonumber \\
        &\leq& K \trinorm{P_r(\uptau^{\mathrm{FOM}}\pare{\bfu_d} - \uptau^{\mathrm{ROM}}\pare{P_r\pare{\bfu_d} }}_{2,0}^2.  \label{eq:verifiability_ex}
    \eea
\end{remark}

\section{Limit Consistency of LES-ROMs}
\label{sec: Limit_Con}

The limit consistency of closure models has a long history in classical CFD (see, e.g., \cite{Kaya2002verifiability} and the survey in \cite{berselli2006mathematics}). 
In this section, we introduce the concept of \emph{limit consistency} for the first time in the ROM context.
Specifically, we first define the limit consistency for LES-ROMs, and then prove that, under suitable assumptions, S-ROM and L-ROM are limit consistent.

In classical LES, 
the concept of limit consistency was introduced in \cite{Kaya2002verifiability} to identify additional conditions on the closure model that ensure \( \bfw \to \bfu \) as \( \delta \to 0 \). The term 
limit consistency was chosen because this property is a standard result in conventional filtering, i.e., \( \overline{\bfu} \to \bfu \) as \( \delta \to 0 \). Thus, if \( \bfw \) is a consistent approximation in the limit, it should exhibit the same behavior.

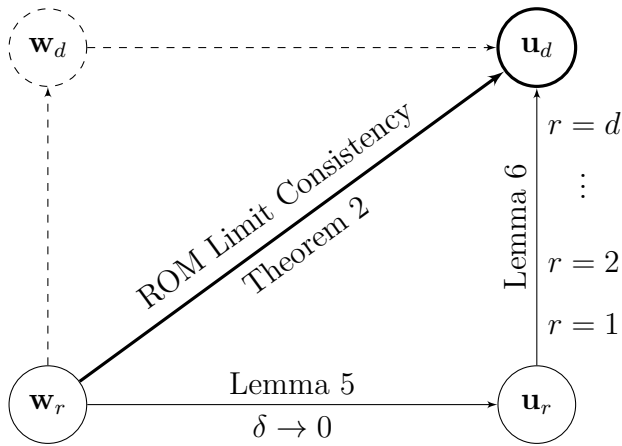
\begin{figure}[h!]
    \centering
    \begin{tikzpicture}
    \node[vertex] (1) at (0,0) {$\wr$};
    \node[vertex] (2) at (6.5,0) {$\ur$};
    \node[vertex,dashed] (3) at (0,4.75) {$\bfw_d$};
    \node[vertex,very thick] (4) at (6.5, 4.75) {$\bfu_d$};
    
    \draw[edge] (1) -- (2) node[midway, below] {$\delta \to 0 $} node[midway, above] {Lemma 5};
    \draw[edge,dashed] (1) -- (3);
    \draw[edge,very thick] (1) -- (4) node[pos=.5, above, sloped, rotate=0] {ROM Limit Consistency}
    node[pos=.5, below, sloped, rotate=0] {Theorem 2};
    \draw[edge] (2) -- (4)  node[pos=.5, above, rotate=90] {Lemma 6} 
    node[pos=.15, right, rotate=0] {$r= 1 $}
    node[pos=.375, right, rotate=0] {$r= 2$}
    node[pos=.675, right, rotate=0] {$ \quad \vdots $}
    node[pos=.875, right, rotate=0] {$r= d$};
    \draw[edge,dashed] (3) -- (4);
    \end{tikzpicture}
    \caption{ Schematic of limit consistency in the LES--ROM context.}
    \label{fig: LC}
\end{figure}

Extending 
the limit consistency concept to LES-ROMs is a nontrivial task 
because this introduces a new dimension needed to account for resolution of the discretization, as can be seen in Fig.~\ref{fig: LC}. Here, the horizontal component of the diagram represents a natural extension of limit consistency as the limit of 
ROM lengthscale $ \delta \geq 0$ can be studied for a fixed dimension $ 1 \leq r \leq d$. 
In Lemma \ref{lem: wr2ur}, we prove  
the convergence rate of $ \wr \to \ur$ as $ \delta \to 0$.
The vertical component of Fig.~\ref{fig: LC} represents the new bounds proven in Lemma \ref{lem: ur2ud} to account for the resolution of ROM
discretization.
Specifically, in Lemma \ref{lem: ur2ud} we prove that by increasing 
the reduced dimension $ r=
1,2,\dots,d
$, 
the solution of the $r$-dimensional G-ROM gets closer to the solution of the $d$-dimensional G-ROM. 
We note that this convergence appears naturally because, in a POD context, the convergence of the G-ROM solution to the FOM data is well documented~\cite{KV01,volkwein2013proper}.
Combining 
Lemmas \ref{lem: wr2ur} and \ref{lem: ur2ud}, we can 
then prove Theorem \ref{thm:LimitConsistency}, which yields 
the first  
bounds of a 
ROM limit consistency representing the diagonal path in Fig.~\ref{fig: LC}.

Next, we propose a definition of limit consistency for the LES-ROMs developed in Section~\ref{ssec: les_rom}.

\begin{definition}[LES-ROM Limit Consistency]
    The LES-ROM~\eqref{eq: Les-ROM} is limit consistent if 
    there exist $m$ and $k$ such that the LES-ROM 
    solution, $\bw_r$, satisfies an error bound of the form
    \begin{eqnarray}
        \trinorm{ \bu_d - \bw_r }_{m,k}
        \leq B_{LC}, 
        \label{eqn:definition-limit-consistency-1}
    \end{eqnarray}
    where $\bu_d$ is the ``true solution" (i.e., the solution of the $d$-dimensional G-ROM~\eqref{eq: gromu}), and the limit consistency bound, $B_{LC}$, which depends on the lengthscale, $\delta$, and the ROM dimension, $r$, satisfies the following two conditions:
    \begin{enumerate}
        \item[(1)] $B_{LC}$ 
        decreases as $\delta \longrightarrow 0$ and $r$ increases (i.e., $r=1, r=2, \ldots, r=d$).
        \item[(2)] When $r=d$, $B_{LC} \longrightarrow 0$ 
        when $\delta \longrightarrow 0$.  
    \end{enumerate}
    \label{definition:limit-consistency}
\end{definition}

\begin{remark}
    We emphasize that there are two differences between the LES-ROM concept of limit consistency %
    in Definition~\ref{definition:limit-consistency} (and illustrated in Fig.~\ref{fig: LC}) and the concept of classical LES limit consistency.
    The first difference is that in the classical LES framework limit consistency is studied with respect to only one parameter: the lengthscale, $\delta$.
    In contrast, in the LES-ROM framework limit consistency is studied with respect to two parameters: 
    The limit consistency bound, $B_{LC}$, in \eqref{eqn:definition-limit-consistency-1} depends on 
    the lengthscale, $\delta$, and the ROM dimension, $r$.
    The second difference is that, in LES-ROM limit consistency, the concept of ``limit'' is understood in a broad sense.
    Indeed, because the ROM dimension, $r$, belongs to a finite discrete set $\curly{ 1, \ldots, d}$, to study the ``limit'' behavior of $r$, we consider increasing $r$ values, i.e., $r=1, r=2, \ldots, r=d$ (see the continuous vertical line in Fig.~\ref{fig: LC}).
    \label{remark:limit-consistency}
\end{remark}

We also note that, as 
this is the first time the concept of limit consistency is investigated in the ROM context, there remain many open questions on the topic. A natural question would be on coupling  
the ROM lengthscale $ \delta$ to the reduced dimension $r$, 
as this would reduce the study of LES-ROM limit consistency to increasing the ROM dimension without requiring that $\delta \longrightarrow 0$ at the same time.
One possible strategy along these lines would be using the recently introduced energy-based ROM lengthscale~\cite{mou2023energy}.
The authors also posit an alternative direction for the proof of Theorem \ref{thm:LimitConsistency} as can be seen by the dotted lines in Fig.~\ref{fig: LC}: 
First, the LES-ROM is resolved, 
i.e., we prove that as $r$ increases, the $r$-dimensional LES-ROM solution, $\bw_r$, approaches the $d$-dimensional LES-ROM solution, $\bw_d$  
(the vertical dotted line in Fig.~\ref{fig: LC}). 
Then, Lemma \ref{lem: wr2ur} is used for $ r=d$ to prove that, as $\delta \longrightarrow 0$, $\bw_d \longrightarrow \bu_d$ (the horizontal dotted line in Fig.~\ref{fig: LC}). 

In Lemma~\ref{lem: wr2ur}, for a fixed ROM dimension, we prove the convergence of the LES-ROM solution to the G-ROM solution as the ROM lengthscale in the LES-ROM goes to zero.
We note that this corresponds to the horizontal continuous line in Fig.~\ref{fig: LC}.
To prove Lemma~\ref{lem: wr2ur}, we first make the following assumption, which
can be thought of as the discrete version of Assumption 3 in \cite{Kaya2002verifiability}.

\begin{assumption}\label{as: assumption 1}
Suppose that 
the  solution 
to the G-ROM
\eqref{eq: gromu-r}, 
$ \ur^n \in \bX_r$, 
satisifes:

$$ \delt \sum_{n=1}^{M} \norm{\uptau^{\mathrm{ROM}}(\ur^n)}^2 \to 0 \text{ as } \delta \to 0.$$
\end{assumption}

\begin{lemma}\label{lem: wr2ur}
For a fixed 
ROM dimension, r, suppose that $ \wr^n \in \bX_r$ is the solution to the LES-ROM \eqref{eq: Les-ROM}, and $ \ur^n \in \bX_r$ the solution to the G-ROM \eqref{eq: gromu-r}. If the closure model, $\uptau^{\mathrm{ROM}}$, satisfies the mean dissipativity condition \ref{def: mean_d} and Assumption \ref{as: assumption 1}, then 

 \bea 
 \trinorm{\ur - \wr}_{\infty,0}^2  
  \leq  K  \trinorm{\uptau^{\mathrm{ROM}}(\ur)}_{2,0}^2.  
\eea

In particular, $\trinorm{\ur - \wr}_{\infty,0}^2 \rightarrow 0$ as $ \delta \to 0$.

\end{lemma}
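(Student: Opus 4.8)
The plan is to mirror closely the proof of the verifiability result, Theorem~\ref{thm:verifiability}, since the error equation between the G-ROM~\eqref{eq: gromu-r} and the LES-ROM~\eqref{eq: Les-ROM} has the same structure and differs only in the right-hand side forcing. First I would set $\bfe^n = \ur^n - \wr^n$ and subtract~\eqref{eq: Les-ROM} from~\eqref{eq: gromu-r}, then add $(\uptau^{\mathrm{ROM}}(\ur^n),\vr)$ to both sides so that the closure contributions on the left assemble into the mean-dissipative combination $(\uptau^{\mathrm{ROM}}(\ur^n)-\uptau^{\mathrm{ROM}}(\wr^n),\vr)$, while the right-hand side collapses to the single forcing term $(\uptau^{\mathrm{ROM}}(\ur^n),\vr)$. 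As in the theorem, the nonlinear difference factors as $b^*(\ur^n,\ur^n,\vr)-b^*(\wr^n,\wr^n,\vr)=b^*(\bfe^n,\ur^n,\vr)+b^*(\wr^n,\bfe^n,\vr)$.

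Next I would take $\vr=\bfe^n$. The skew-symmetry of $b^*$ kills $b^*(\wr^n,\bfe^n,\bfe^n)=0$, and the mean dissipativity condition (Definition~\ref{def: mean_d}) gives $(\uptau^{\mathrm{ROM}}(\ur^n)-\uptau^{\mathrm{ROM}}(\wr^n),\bfe^n)\ge 0$, so this term is dropped from the left. Applying the polarization identity to the time-difference term, $(\bfe^n-\bfe^{n-1},\bfe^n)\ge\tfrac12(\norm{\bfe^n}^2-\norm{\bfe^{n-1}}^2)$, I would bound the surviving nonlinear term $b^*(\bfe^n,\ur^n,\bfe^n)$ by Lemma~\ref{TRIL} and Young's inequality, obtaining $\tfrac{\Rey\inv}{2}\norm{\Grad\bfe^n}^2+C\Rey^3\norm{\Grad\ur^n}^4\norm{\bfe^n}^2$, exactly as in~\eqref{eq: ver_bstar} but with $\ur^n$ in place of $P_r(\bfu_d^n)$. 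The forcing is handled by Cauchy--Schwarz and Young: $(\uptau^{\mathrm{ROM}}(\ur^n),\bfe^n)\le\tfrac12\norm{\uptau^{\mathrm{ROM}}(\ur^n)}^2+\tfrac12\norm{\bfe^n}^2$.

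With $\bfe^0=0$ (both $\ur$ and $\wr$ are started from $P_r(\bfu_d^0)$), I would multiply by $2\delt$, sum from $n=1$ to $M$, and invoke the discrete Gronwall lemma (Lemma~\ref{lemma: discreteGronwall}) with $d_n=C(\Rey^3\norm{\Grad\ur^n}^4+1)$, $c_n=\norm{\uptau^{\mathrm{ROM}}(\ur^n)}^2$, and $H=0$. This yields $\trinorm{\ur-\wr}_{\infty,0}^2\le K\trinorm{\uptau^{\mathrm{ROM}}(\ur)}_{2,0}^2$ with $K=\exp\p{\delt\sum_{n=1}^M \frac{d_n}{1-\delt d_n}}$, which is the asserted estimate.

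The decisive point for the ``in particular'' claim -- and the one place where this argument genuinely departs from Theorem~\ref{thm:verifiability} -- is that the G-ROM solution $\ur$ carries \emph{no} dependence on the lengthscale $\delta$, since~\eqref{eq: gromu-r} contains no closure term. Consequently $K$, which depends on $\delta$ only through $d_n=d_n(\ur)$, remains bounded uniformly as $\delta\to 0$, while Assumption~\ref{as: assumption 1} forces the right-hand side $\trinorm{\uptau^{\mathrm{ROM}}(\ur)}_{2,0}^2\to 0$; together these give $\trinorm{\ur-\wr}_{\infty,0}^2\to 0$. I expect the main difficulty here to be bookkeeping rather than hard analysis: one must check carefully that $K$ is $\delta$-independent and that the Gronwall hypothesis $\delt\,d_n\le 1$ can be secured with a $\delta$-uniform timestep, so that the limit is attained uniformly rather than merely pointwise in $\delta$.
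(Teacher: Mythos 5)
Your proposal is correct and follows essentially the same route as the paper's proof: form the error equation, add and subtract $\pare{\uptau^{\mathrm{ROM}}(\ur^n),\vr}$ so that mean dissipativity removes the closure difference, test with $\bfe^n$, bound $b^*(\bfe^n,\ur^n,\bfe^n)$ via Lemma~\ref{TRIL} and the forcing via Cauchy--Schwarz/Young, and conclude with the discrete Gronwall lemma and Assumption~\ref{as: assumption 1}. Your explicit remark that $K$ is $\delta$-independent because $\ur$ solves the closure-free G-ROM is a useful clarification that the paper leaves implicit.
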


\begin{proof}

Subtracting the LES-ROM \eqref{eq: Les-ROM} from the G-ROM \eqref{eq: gromu-r} yields 
\bea
\frac{1}{\delt} \pare{ \ur^n -\wr^n  - (\ur^{n-1} - \wr^{n-1}) ,\vr} + \Rey\inv \pare{\Grad (\ur^n - \wr^n), \Grad\vr} + b^*(\ur^n,\ur^n,\vr) \nonumber\\ 
- b^{*}(\wr^n,\wr^n,\vr) + \pare{\uptau^{\mathrm{ROM}}(\ur^n),\vr} - \pare{\uptau^{\mathrm{ROM}}(\wr^n),\vr} = \pare{\uptau^{\mathrm{ROM}}(\ur^n),\vr}. \qquad
\eea
Letting $ \ur^n - \wr^n = \bfe^n$ and 
choosing $ \vr = \bfe^n$ results in 
\bea
\frac{1}{\delt} \pare{ \bfe^n  - \bfe^{n-1} ,\bfe^n} + \Rey\inv \norm{\Grad \bfe^n}^2 + b^*(\ur^n,\ur^n,\bfe^n) - b^{*}(\wr^n,\wr^n,\bfe^n) \nonumber\\ 
 + \pare{\uptau^{\mathrm{ROM}}(\ur^n) - \uptau^{\mathrm{ROM}}(\wr^n),\bfe^n} = \pare{\uptau^{\mathrm{ROM}}(\ur^n),\bfe^n} .
\eea
Rewriting the trilinear terms as before, and 
applying the Cauchy-Schwarz and Young's inequalities to the time derivative term gives the following: 
\bea
\frac{1}{2\delt} \pare{ \norm{\bfe^n}^2 - \norm{\bfe^{n-1}}^2} + \Rey\inv \norm{\Grad \bfe^n}^2 + b^*(\bfe^n,\ur^n,\bfe^n) \nonumber\\ 
 + \pare{\uptau^{\mathrm{ROM}}(\ur^n) - \uptau^{\mathrm{ROM}}(\wr^n),\ur^n - \wr^n} \leq\pare{\uptau^{\mathrm{ROM}}(\ur^n),\bfe^n} .
\eea

Rewriting the equation, using the mean dissipativity \eqref{eq: mean_dis}, taking the absolute value, and using the triangle inequality gives 

\bea
\frac{\norm{\bfe^n}^2 - \norm{\bfe^{n-1}}^2}{2\delt} + \Rey\inv \norm{\Grad \bfe^n}^2  
  \leq \abs{ b^*(\bfe^n,\ur^n,\bfe^n) } + \abs{\pare{\uptau^{\mathrm{ROM}}(\ur^n),\bfe^n}}. \quad
\eea

We bound the nonlinear term on the right-hand side in the same fashion as before, and using Cauchy-Schwarz and Young's inequalities on the second term results in
\bea
\frac{1}{2\delt} \pare{ \norm{\bfe^n}^2 - \norm{\bfe^{n-1}}^2} + \frac{1}{2}\Rey\inv \norm{ \Grad \bfe^n}^2  
  &\leq& \pare{\frac{1}{2} + C \Rey^3 \norm{\Grad \ur^n}^4 }\norm{\bfe^n}^2 \nonumber \\
  &+& \frac{1}{2}\norm{\uptau^{\mathrm{ROM}}(\ur^n)}^2.
\eea
 Multiplying by $ 2\delt$ and summing from $ 1$ to $ M$ gives 
 \bea
 \norm{\bfe^M}^2 + \Rey\inv \delt \sum_{n=1}^{M} \norm{\Grad \bfe^n}^2  
  \leq \delt \sum_{n=1}^{M} \pare{ 1 + C \Rey^3 \norm{\Grad \ur^n}^4 }\norm{\bfe^n}^2 \nonumber \\
  + \delt \sum_{n=1}^{M} \norm{\uptau^{\mathrm{ROM}}(\ur^n)}^2,
\eea
because $ \norm{\bfe^0} = 0$.
Assuming a $ \delt$ sufficiently small, $ d_n \delt \leq 1 $, we apply the  discrete Gronwall's inequality, giving
 \bea \label{eq: LC_last}
 \norm{\bfe^M}^2 + \Rey\inv \delt \sum_{n=1}^{M} \norm{ \Grad \bfe^n}^2  
  \leq  K \delt \sum_{n=1}^{M} \norm{\uptau^{\mathrm{ROM}}(\ur^n)}^2, 
\eea
where $ K = \exp\p{\delt \sum_{n=1}^{M}\p{ \frac{d_n}{1-\delt d_n} }}$ and $ d_n = 1 +  C \Rey^{3} \norm{\Grad \bfu_d^{n} }^4 $.
Taking $ \delta \to 0$ completes the proof thanks to Assumption 1.
\end{proof}

In Lemma \ref{lem: ur2ud}, we 
quantify the error between G-ROMs of different dimensions.
We note that this corresponds to the vertical continuous line in Fig.~\ref{fig: LC}.

{

\begin{lemma}\label{lem: ur2ud}

Consider $ \ud^n \in \bX_d$, the solution to the d-dimensional G-ROM \eqref{eq: gromu}, and $ \ur^n \in \bX_r$, the solution to the r-dimensional G-ROM \eqref{eq: gromu-r}, for $ r = 1,\dots,d$, and $ n = 1,\dots,M$.
Then, for a sufficiently small time step, we have  
\bea
  \trinorm{\bfe}_{\infty,0}^2 
  \leq \trinorm{\bfeta}_{\infty,0}^2 + K C  \bigg( \Rey\inv \trinorm{\Grad \bfeta}_{2,0}^2  +  \Rey^3 \trinorm{\Grad \bfeta}_{\infty,0}^2  \bigg) ,
\eea
where $\bfe^n = \ud^n - \ur^n $, $\bfeta^n = \ud^n - P_r(\ud^n) $, $ K = \exp\p{\delt \sum_{n=1}^{M}\p{ \frac{d_n}{1-\delt d_n} }}$, and $ d_n = C \Rey^{3} \norm{\Grad  \bfu_d^{n} }^4 $. 
\end{lemma}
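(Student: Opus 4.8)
The plan is to follow the standard POD error-splitting argument \cite{KV01,volkwein2013proper}, decomposing the error through the $L^2$ projection $P_r$ and exploiting the orthogonality $(\bfeta^n,\vr)=0$ for all $\vr\in\bX_r$. I write $\bfe^n=\ud^n-\ur^n=\bfeta^n+\boldsymbol{\phi}_r^n$, where $\bfeta^n=\ud^n-P_r(\ud^n)$ is orthogonal to $\bX_r$ and $\boldsymbol{\phi}_r^n=P_r(\ud^n)-\ur^n\in\bX_r$. Since $\bX_r\subseteq\bX_d$, the $d$-dimensional G-ROM \eqref{eq: gromu} may be tested against any $\vr\in\bX_r$; subtracting \eqref{eq: gromu-r} then produces, for all $\vr\in\bX_r$, the error equation
\begin{equation*}
\frac{1}{\delt}(\bfe^n-\bfe^{n-1},\vr)+\Rey\inv(\Grad\bfe^n,\Grad\vr)+b^*(\ud^n,\ud^n,\vr)-b^*(\ur^n,\ur^n,\vr)=0.
\end{equation*}
Setting $\vr=\boldsymbol{\phi}_r^n$, the orthogonality eliminates the projection error from the discrete time-derivative term, so that $(\bfe^n-\bfe^{n-1},\boldsymbol{\phi}_r^n)=(\boldsymbol{\phi}_r^n-\boldsymbol{\phi}_r^{n-1},\boldsymbol{\phi}_r^n)\ge\tfrac12(\norm{\boldsymbol{\phi}_r^n}^2-\norm{\boldsymbol{\phi}_r^{n-1}}^2)$, and leaves $\Rey\inv\norm{\Grad\boldsymbol{\phi}_r^n}^2$ on the left together with the cross term $\Rey\inv(\Grad\bfeta^n,\Grad\boldsymbol{\phi}_r^n)$.

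Next, I rewrite the nonlinear difference as $b^*(\bfe^n,\ud^n,\boldsymbol{\phi}_r^n)+b^*(\ur^n,\bfe^n,\boldsymbol{\phi}_r^n)$, substitute $\bfe^n=\bfeta^n+\boldsymbol{\phi}_r^n$, and discard $b^*(\ur^n,\boldsymbol{\phi}_r^n,\boldsymbol{\phi}_r^n)=0$ by skew-symmetry. This leaves four terms to be bounded with Lemma \ref{TRIL} and Young's inequality, each absorbing a fraction of $\Rey\inv\norm{\Grad\boldsymbol{\phi}_r^n}^2$. The term $b^*(\boldsymbol{\phi}_r^n,\ud^n,\boldsymbol{\phi}_r^n)$ produces the Gronwall multiplier $d_n=C\Rey^3\norm{\Grad\ud^n}^4$ acting on $\norm{\boldsymbol{\phi}_r^n}^2$; the viscous cross term contributes $C\Rey\inv\norm{\Grad\bfeta^n}^2$, summing to $C\Rey\inv\trinorm{\Grad\bfeta}_{2,0}^2$. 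The essential term is $b^*(\ur^n,\bfeta^n,\boldsymbol{\phi}_r^n)$: the second bound of Lemma \ref{TRIL} together with Young yields $C\Rey\norm{\Grad\ur^n}^2\norm{\Grad\bfeta^n}^2$, and bounding $\norm{\Grad\bfeta^n}^2\le\trinorm{\Grad\bfeta}_{\infty,0}^2$ before summing lets me invoke the stability estimate of Lemma \ref{lemma: Stability}, namely $\delt\sum_n\norm{\Grad\ur^n}^2\le C\Rey^2$. This is precisely where the factor $\Rey^3\trinorm{\Grad\bfeta}_{\infty,0}^2$ is generated. The remaining term $b^*(\bfeta^n,\ud^n,\boldsymbol{\phi}_r^n)$ is handled in the same way, with $\norm{\Grad\ud^n}^2$ absorbed into the generic constant.

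Finally, I multiply by $2\delt$, sum from $n=1$ to $M$, and use the standard POD initialization $\ur^0=P_r(\ud^0)$, so that $\boldsymbol{\phi}_r^0=0$ and no initial data term survives. After moving the absorbed viscous fractions to the left, the discrete Gronwall Lemma \ref{lemma: discreteGronwall}, valid under the stated restriction $\delt\,d_n\le 1$, gives for each index $l$ (hence for the maximum over $l$) a bound on $\norm{\boldsymbol{\phi}_r^l}^2$ of the form $KC\big(\Rey\inv\trinorm{\Grad\bfeta}_{2,0}^2+\Rey^3\trinorm{\Grad\bfeta}_{\infty,0}^2\big)$. The Pythagorean identity $\norm{\bfe^n}^2=\norm{\bfeta^n}^2+\norm{\boldsymbol{\phi}_r^n}^2$, valid because $\bfeta^n\perp\bX_r$, then adds the single term $\trinorm{\bfeta}_{\infty,0}^2$ with coefficient exactly one upon taking the maximum over $n$, recovering the claimed estimate. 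I expect the main obstacle to be the bookkeeping of the $\Rey$ powers in the nonlinear terms; in particular, recognizing that the $\Rey^3\trinorm{\Grad\bfeta}_{\infty,0}^2$ contribution must be produced by feeding the a priori stability bound for $\ur$ (Lemma \ref{lemma: Stability}) into the cross-nonlinearity $b^*(\ur^n,\bfeta^n,\boldsymbol{\phi}_r^n)$, rather than by naively absorbing $\norm{\Grad\ur^n}$ into a constant.
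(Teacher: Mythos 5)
Your proposal is correct and follows essentially the same route as the paper's proof: the same splitting $\bfe^n=\bfeta^n+\boldsymbol{\phi}_r^n$ through $P_r$, testing with $\boldsymbol{\phi}_r^n$, the same four-term decomposition of the nonlinearity with $b^*(\ur^n,\boldsymbol{\phi}_r^n,\boldsymbol{\phi}_r^n)=0$, the bounds from Lemma \ref{TRIL}, the stability estimate of Lemma \ref{lemma: Stability} to generate the $\Rey^3\trinorm{\Grad\bfeta}_{\infty,0}^2$ term, and the discrete Gronwall lemma. Your explicit appeal to the Pythagorean identity $\norm{\bfe^n}^2=\norm{\bfeta^n}^2+\norm{\boldsymbol{\phi}_r^n}^2$ is in fact a slightly cleaner justification of the unit coefficient on $\trinorm{\bfeta}_{\infty,0}^2$ than the paper's closing ``triangle inequality.''
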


\begin{proof}
We subtract the $r$-dimensional G-ROM \eqref{eq: gromu-r} from the $d$-dimensional G-ROM \eqref{eq: gromu}: For all $\bv_r \in \bX_r$,
\bea
\hspace*{-0.6cm}
\pare{ \frac{\bfe^n  - \bfe^{n-1}}{\delt} ,\vr} + \Rey\inv \pare{\Grad \bfe^n, \Grad\vr} 
 + b^*(\ud^n,\ud^n,\vr) - b^{*}(\ur^n,\ur^n,\vr) = 0 .
\eea
Decomposing the error as $ \bfe^n = \ud^n - P_r(\ud^n) - (\ur^n - P_r(\ud^n)) := \bfeta^n - \phir^n $, and choosing $ \vr = \phir^n$ results in 
\bea
\frac{1}{\delt} \pare{ \phir^n  - \phir^{n-1} ,\phir^n} + \Rey\inv \norm{\Grad \phir^n}^2 = \frac{1}{\delt} \pare{ \bfeta^n - \bfeta^{n-1}, \phir^n } \nonumber \\
+ \Rey\inv \pare{\Grad \bfeta^n, \Grad \phir^n} + b^*(\ud^n,\ud^n,\phir^n) - b^{*}(\ur^n,\ur^n,\phir^n). \label{eq: udur_error1}
\eea
We note that $ \frac{1}{\delt} \pare{ \bfeta^n - \bfeta^{n-1}, \phir^n } = 0 $ from Definition \ref{def: ROM_Proj}. 
We rewrite the trilinear terms as before, and 
decompose the error as follows: 
\bea
& &b^{\ast}( \ud^{n}, \ud^{n},\phir^n) - b^{\ast}(   \ur^{n},  \ur^{n},\phir^n)
= b^{\ast}( \bfe^{n}, \ud^{n},\phir) + b^{\ast}(   \ur^{n},  \bfe^{n},\phir^n) = \nonumber \\
& &b^{\ast}( \bfeta^{n}, \ud^{n},\phir^n) 
- b^{\ast}( \phir^{n}, \ud^{n},\phir^n)
+ b^{\ast}( \ur^{n}, \bfeta^{n},\phir^n) 
- b^{\ast}(  \ur^{n},  \phir^{n},\phir^n). \label{eq: trilinear}
\eea 
Noting that $ b^*(\ur^{n},  \phir^{n},\phir^n) =0$,
substituting \eqref{eq: trilinear} into \eqref{eq: udur_error1}, 
and applying the Cauchy-Schwarz and Young's inequalities to the time derivative term leads to
\bea
\frac{1}{2\delt} \pare{ \norm{\phir^n}^2 - \norm{\phir^{n-1}}^2} + \Rey\inv \norm{\Grad \phir^n}^2 \leq  \Rey\inv \pare{\Grad \bfeta^n, \Grad \phir^n} \nonumber \\
+ b^{\ast}( \bfeta^{n}, \ud^{n},\phir^n) 
- b^{\ast}( \phir^{n}, \ud^{n},\phir^n)
+ b^{\ast}( \ur^{n}, \bfeta^{n},\phir^n).
\eea
Rewriting the equation, taking the absolute value, and using the triangle inequality gives
\bea
& &\frac{1}{2\delt} \pare{ \norm{\phir^n}^2 - \norm{\phir^{n-1}}^2} + \Rey\inv \norm{\Grad \phir^n}^2 \leq  \Rey\inv \abs{\pare{\Grad \bfeta^n, \Grad \phir^n} }\nonumber \\
& &+ |b^{\ast}( \bfeta^{n}, \ud^{n},\phir^n)| 
+ |b^{\ast}( \phir^{n}, \ud^{n},\phir^n)|
+ \abs{b^{\ast}( \ur^{n}, \bfeta^{n},\phir^n)} \nonumber \\
&=& |T_1|+|T_2|+|T_3|+|T_4|. \label{eq: udur_error2}
\eea
We then bound the four terms on the right-hand side using 
Lemma~\ref{TRIL} for the trilinear terms, as well as the Cauchy-Schwarz and Young's inequalities throughout:
\bea
|T_1| & \leq & \frac{\Rey\inv}{8}||\nabla   \phir^{n}||^2+C\Rey\inv||\nabla \bfeta^{n}||^2, \label{eq: t1_bound}\\
   |T_2| &\leq& C ||\nabla \bfeta^{n}|| ||\nabla \ud^{n}|| ||\nabla    \phir^{n}|| \nonumber \\
   &\leq &  \frac{\Rey\inv}{8}||\nabla   \phir^{n}||^2 + C\Rey ||\nabla \bfeta^{n}||^2 ||\nabla \ud^{n}||^2, \label{eq: t2_bound} \\
  |T_3| &\leq&  C \sqrt{||  \phir^{n}|| ||\nabla   \phir^{n}||}||\nabla \ud^{n}|| ||\nabla    \phir^{n}||  \nonumber \\
  & \leq &  \frac{
  \Rey\inv}{8}||\nabla   \phir^{n}||^2 + C \Rey^{3}||\nabla \ud^{n}||^4 ||  \phir^{n}||^2, \label{eq: t3_bound} \\
  |T_4| &\leq& C  ||\nabla   \ur^{n}||||\nabla \bfeta^{n}|| ||\nabla    \phir^{n}|| \nonumber \\
  &\leq &  \frac{\Rey\inv}{8}||\nabla   \phir^{n}||^2 + C\Rey ||\nabla   \ur^{n}||^2 ||\nabla \bfeta^{n}||^2. \label{eq: t4_bound}
\end{eqnarray}

Substituting the bounds \eqref{eq: t1_bound} - \eqref{eq: t4_bound} into \eqref{eq: udur_error2}, multiplying by $ 2\delt$, 
summing from $n=  1$ to $ M$, and recalling that $ \norm{\phir^0} =0 $, gives 
 \bea
 & & \norm{\phir^M}^2 + \Rey\inv \delt \sum_{n=1}^{M} \norm{\Grad \phir^n}^2  
  \leq C \Rey\inv \delt \sum_{n=1}^{M} \norm{\Grad \bfeta^n}^2 \nonumber\\
  & & + C \Rey \delt \sum_{n=0}^{M} \norm{\Grad \ud^n}^2 \norm{\Grad \bfeta^n}^2 +
  \delt \sum_{n=1}^{M} \pare{ C \Rey^3 \norm{\Grad \bfu_d^n}^4 }\norm{\phir^n}^2\nonumber \\
  & &+ C\Rey \delt \sum_{n=1}^{M}  ||\nabla   \ur^{n}||^2 ||\nabla \bfeta^{n}||^2.
\eea
We now use the stability bounds from Lemma \ref{lemma: Stability}, yielding 
 \bea
 & & \norm{\phir^M}^2 + \Rey\inv \delt \sum_{n=1}^{M} \norm{\Grad \phir^n}^2  
  \leq C \Rey\inv \delt \sum_{n=1}^{M} \norm{\Grad \bfeta^n}^2 \nonumber\\
  & & + C \Rey^3 \trinorm{\Grad \bfeta}_{\infty,0}^2 +
  \delt \sum_{n=1}^{M} \pare{ C \Rey^3 \norm{\Grad \bfu_d^n}^4 }\norm{\phir^n}^2.
\eea

Assuming a sufficiently small time step ($ d_n \delt :=  C\Rey^3 \norm{\Grad \ud^n }^4  \delt < 1 $), we apply the  discrete Gronwall's inequality in Lemma \ref{lemma: discreteGronwall}, giving  
 \bea \label{eq: ur-ud_last}
 \norm{\phir^M}^2 + \Rey\inv \delt \sum_{n=1}^{M} \norm{ \Grad \phir^n}^2  
  \leq  K C  \bigg(   \Rey\inv \trinorm{\Grad \bfeta}_{2,0}^2  + \Rey^3  \trinorm{\Grad \bfeta}_{\infty,0}^2  \bigg), \ \ 
\eea
where $ K = \exp\p{\delt \sum_{n=1}^{M}\p{ \frac{d_n}{1-\delt d_n} }}$ and $ d_n = C \Rey^{3} \norm{\Grad \bfu_d^{n} }^4 $. 
Applying the triangle inequality gives the result.
\end{proof}
}

Combining Lemma \ref{lem: wr2ur} and Lemma \ref{lem: ur2ud} (i.e., the continuous horizontal and vertical lines in Fig.~\ref{fig: LC}, respectively), we can prove Theorem~\ref{thm:LimitConsistency} (i.e., the diagonal in Fig.~\ref{fig: LC}):

\begin{theorem}
\label{thm:LimitConsistency}
     For $ n = 1, \dots M$, let $ \ud^n \in \bX_d $ and $ \wr^n \in \bX_r $ be the solutions to \eqref{eq: gromu} and \eqref{eq: Les-ROM}, respectively. Then, for a sufficiently small time step, 
     and $ r = 1, \dots,d$, 
    if the closure model, $\uptau^{\mathrm{ROM}}$, satisfies the mean dissipativity condition in Definition \ref{def: mean_d} and Assumption \ref{as: assumption 1}, we have the following error bound:
    \bea\label{eq: ROM_LC}
    \trinorm{\bfu_d - \wr}_{\infty,0}^2 \leq CK \bigg( \trinorm{\bfeta}_{\infty,0}^2 +  \Rey\inv \trinorm{\Grad \bfeta}_{2,0}^2 \nonumber \\
    +  \Rey^3 \trinorm{\Grad \bfeta}_{\infty,0}^2
    +  \trinorm{\uptau^{\mathrm{ROM}}(\ur)}_{2,0}^2 \bigg),
    \eea
    where $\bfeta^n = \ud^n - P_r(\ud^n) $, and $ K$ depends exponentially on $ \Rey^3$. 
\end{theorem}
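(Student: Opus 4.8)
The plan is to recover the diagonal bound of Fig.~\ref{fig: LC} by splitting the total error along the two continuous edges of the diagram and then invoking the two lemmas already established. Concretely, for each time level $n$ I would insert the intermediate $r$-dimensional G-ROM solution $\ur^n$ and write
\begin{equation*}
\ud^n - \wr^n = \left( \ud^n - \ur^n \right) + \left( \ur^n - \wr^n \right),
\end{equation*}
so that the first summand is exactly the quantity controlled by Lemma~\ref{lem: ur2ud} (the vertical edge, quantifying how the $r$-dimensional G-ROM approaches the $d$-dimensional ``true solution''), and the second is the quantity controlled by Lemma~\ref{lem: wr2ur} (the horizontal edge, quantifying how the LES-ROM approaches the G-ROM as $\delta \to 0$). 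No new energy estimate is needed: the content of the theorem is precisely that these two decoupled error directions can be chained together.

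Next I would pass to the $\trinorm{\cdot}_{\infty,0}$ norm. Since $\max_n(a_n + b_n) \le \max_n a_n + \max_n b_n$, the triangle inequality gives $\trinorm{\ud - \wr}_{\infty,0} \le \trinorm{\ud - \ur}_{\infty,0} + \trinorm{\ur - \wr}_{\infty,0}$, and squaring via $(a+b)^2 \le 2a^2 + 2b^2$ yields
\begin{equation*}
\trinorm{\ud - \wr}_{\infty,0}^2 \le 2\,\trinorm{\ud - \ur}_{\infty,0}^2 + 2\,\trinorm{\ur - \wr}_{\infty,0}^2.
\end{equation*}
Into the first term I substitute the bound of Lemma~\ref{lem: ur2ud}, namely $\trinorm{\ud-\ur}_{\infty,0}^2 \le \trinorm{\bfeta}_{\infty,0}^2 + KC\bigl(\Rey\inv \trinorm{\Grad\bfeta}_{2,0}^2 + \Rey^3 \trinorm{\Grad\bfeta}_{\infty,0}^2\bigr)$, and into the second I substitute the bound of Lemma~\ref{lem: wr2ur}, namely $\trinorm{\ur-\wr}_{\infty,0}^2 \le K\,\trinorm{\uptau^{\mathrm{ROM}}(\ur)}_{2,0}^2$. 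Absorbing the factor $2$ together with the various generic constants then produces the claimed estimate \eqref{eq: ROM_LC}.

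The one genuine technical point, and the step I expect to require the most care, is the reconciliation of the two Gronwall factors. Lemma~\ref{lem: ur2ud} produces a $K$ with $d_n = C\Rey^3\norm{\Grad \bfu_d^n}^4$, whereas Lemma~\ref{lem: wr2ur} produces a $K$ with $d_n = 1 + C\Rey^3\norm{\Grad \bfu_d^n}^4$; both are of the form $\exp\bigl(\delt\sum_n d_n/(1-\delt d_n)\bigr)$ with $d_n$ a constant multiple of $\Rey^3\norm{\Grad\bfu_d^n}^4 + 1$. I would therefore fix a single common $d_n = C\bigl(\Rey^3\norm{\Grad\bfu_d^n}^4 + 1\bigr)$ dominating both, impose the one small-time-step requirement $\delt\,d_n \le 1$ needed to apply the discrete Gronwall inequality (Lemma~\ref{lemma: discreteGronwall}), and let $K$ denote the corresponding common exponential, which, as asserted, depends exponentially on $\Rey^3$. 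With this uniform choice both lemmas apply simultaneously and the two contributions combine cleanly. Finally, I would note for interpretation that letting $\delta \to 0$ drives $\trinorm{\uptau^{\mathrm{ROM}}(\ur)}_{2,0}$ to zero by Assumption~\ref{as: assumption 1}, while letting $r \to d$ drives the projection error $\bfeta$ to zero, so that the right-hand side of \eqref{eq: ROM_LC} exhibits the decay demanded by Definition~\ref{definition:limit-consistency}.
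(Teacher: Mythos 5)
Your proposal is correct and follows essentially the same route as the paper's proof: insert the intermediate $r$-dimensional G-ROM solution $\ur$, apply the triangle inequality in the $\trinorm{\cdot}_{\infty,0}$ norm, and invoke Lemma~\ref{lem: ur2ud} and Lemma~\ref{lem: wr2ur} for the two resulting terms. Your explicit handling of the factor of $2$ from squaring and the reconciliation of the two Gronwall factors into a single dominating $d_n$ is slightly more careful than the paper, which absorbs both into the generic constants $C$ and $K$.
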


\begin{proof}
    Considering the difference 
    between the solution 
    to the $d$-dimensional G-ROM, $\ud^n$, and the solution 
    to the $r$-dimensional LES-ROM, $\wr^n$, 
    adding and subtracting 
    the solution 
    to the $r$-dimensional G-ROM, $\ur$,
    and applying the triangle inequality gives  
    \beas
    \trinorm{\bfu_d - \wr}_{\infty,0}^2 &\leq& \trinorm{\ur - \wr}_{\infty,0}^2 + \trinorm{\bfu_d - \ur}_{\infty,0}^2.
    \eeas
    Applying Lemmas \ref{lem: wr2ur} and \ref{lem: ur2ud} for the first and second term on the right-hand side of the above inequality, respectively, 
    yields the result. 
\end{proof}

\begin{remark}
    The error bound \eqref{eq: ROM_LC} is an {\it a priori} bound:
    The right-hand side of the error bound \eqref{eq: ROM_LC} depends on the 
    fixed ROM parameters (e.g., $\Delta t, r, d, \bu_d, \bu_r, \delta$), but does not depend on the LES-ROM solution, $\bw_r$.
    \label{remark:a-priori}
\end{remark}

\begin{theorem}
\label{thm:Thm3}
Assume that $ \Grad \ur^n \in L^{2s+2}(0,T; L^{2s+2}(\ohm))$. 
For $ \mu >0$ and 
$s \geq 0 $,  
the L-ROM and S-ROM are
limit consistent.
\end{theorem}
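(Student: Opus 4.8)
The plan is to deduce the statement directly from Theorem~\ref{thm:LimitConsistency}, which already packages the $d$-dimensional G-ROM to LES-ROM error into the single bound \eqref{eq: ROM_LC}; its right-hand side is the natural candidate for the limit consistency bound $B_{LC}$ of Definition~\ref{definition:limit-consistency}. To apply Theorem~\ref{thm:LimitConsistency} I must verify its two structural hypotheses for the eddy-viscosity closure \eqref{eq: LZ_closure}: mean dissipativity and Assumption~\ref{as: assumption 1}. Mean dissipativity was already established for \eqref{eq: LZ_closure} for all $\mu>0$ and $s\geq 0$ in the proof of Corollary~\ref{cor: verifiability}, via the strong monotonicity Lemma~\ref{lemma: Strong Mono}, so that hypothesis comes for free. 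Since the S-ROM is the special case $s=1$, $\mu=2$, it suffices to argue for the general L-ROM parameters $\mu>0$, $s\geq 0$.

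The heart of the argument is the verification of Assumption~\ref{as: assumption 1}. Reading the closure through its weak (flux) form, $(\uptau^{\mathrm{ROM}}(\ur^n),\vr) = (C_S\delta)^\mu(\norm{\Grad\ur^n}_F^s\Grad\ur^n,\Grad\vr)$, the relevant quantity is the $L^2$ size of the flux, and a direct computation gives
\begin{equation*}
\norm{\norm{\Grad\ur^n}_F^s\Grad\ur^n}^2 = \int_{\ohm}\norm{\Grad\ur^n}_F^{2s+2}\,d\ohm = \norm{\Grad\ur^n}_{2s+2}^{2s+2}.
\end{equation*}
Summing in time and pulling out the lengthscale factor yields
\begin{equation*}
(C_S\delta)^{2\mu}\,\delt\sum_{n=1}^{M}\norm{\norm{\Grad\ur^n}_F^s\Grad\ur^n}^2 = (C_S\delta)^{2\mu}\,\delt\sum_{n=1}^{M}\norm{\Grad\ur^n}_{2s+2}^{2s+2}.
\end{equation*}
The crucial observation is that $\ur^n$ solves the $\delta$-\emph{independent} G-ROM \eqref{eq: gromu-r}, so the discrete space--time sum on the right is a fixed finite number, bounded exactly by the hypothesis $\Grad\ur \in L^{2s+2}(0,T;L^{2s+2}(\ohm))$; this is precisely why the exponent $2s+2$ appears in both the time and space slots of the assumption. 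Since $\mu>0$, the prefactor $(C_S\delta)^{2\mu}\to 0$ as $\delta\to 0$, which establishes Assumption~\ref{as: assumption 1}.

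With both hypotheses in hand, Theorem~\ref{thm:LimitConsistency} delivers \eqref{eq: ROM_LC}, and I take its right-hand side as $B_{LC}$. It remains to check the two conditions of Definition~\ref{definition:limit-consistency}. The only $\delta$-dependent term in $B_{LC}$ is $\trinorm{\uptau^{\mathrm{ROM}}(\ur)}_{2,0}^2 = (C_S\delta)^{2\mu}(\cdots)$, which is nonincreasing as $\delta\to 0$, while the projection-error terms built from $\bfeta^n = \ud^n - P_r(\ud^n)$ are $\delta$-independent and decrease (in aggregate) as $r$ increases by the standard POD approximation theory; this gives condition~(1). For condition~(2) I set $r=d$: then $P_d$ is the identity on $\bX_d$, so $\bfeta\equiv 0$ and all projection-error terms vanish, while $\ur=\ud$ collapses the closure term to $(C_S\delta)^{2\mu}\,\delt\sum_{n}\norm{\Grad\ud^n}_{2s+2}^{2s+2}$. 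Because $K$ is built only from $\Rey$ and the fixed solution $\ud$, it does not depend on $\delta$, so $B_{LC}\to 0$ as $\delta\to 0$ when $r=d$, which is condition~(2). This proves that the L-ROM, and in particular the S-ROM, is limit consistent.

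The step I expect to be most delicate is the rigorous interpretation of $\norm{\uptau^{\mathrm{ROM}}(\ur^n)}$ appearing in Assumption~\ref{as: assumption 1} and Lemma~\ref{lem: wr2ur}: for the eddy-viscosity closures it must be read through the weak flux pairing rather than as the $L^2$ norm of the distributional divergence $\div(\norm{\Grad\ur^n}_F^s\Grad\ur^n)$, since only the former is controlled by the stated $L^{2s+2}$ regularity of $\Grad\ur^n$ (the divergence would require control of second derivatives). One therefore has to re-run the closure estimate in Lemma~\ref{lem: wr2ur} by integrating by parts and absorbing the resulting $\norm{\Grad\bfe^n}$ into the viscous term through Young's inequality; after this bookkeeping the remaining estimates are routine and the conclusion is unchanged.
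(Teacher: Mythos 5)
Your proposal is correct and follows essentially the same route as the paper: you verify Assumption~\ref{as: assumption 1} by the identical computation $\delt\sum_n\norm{\uptau^{\mathrm{ROM}}(\ur^n)}^2=(C_S\delta)^{2\mu}\trinorm{\Grad\ur}_{2s+2,2s+2}^{2s+2}\to 0$, invoke mean dissipativity from Corollary~\ref{cor: verifiability}, and conclude via Theorem~\ref{thm:LimitConsistency}. Your closing remark that $\norm{\uptau^{\mathrm{ROM}}(\ur^n)}$ must be read through the weak flux pairing rather than as the $L^2$ norm of the distributional divergence is a worthwhile clarification that the paper leaves implicit, but it does not change the argument.
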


\begin{proof}
    Because $ \Grad \ur^n \in L^{2s+2}(0,T; L^{2s+2}(\ohm))$, we have
    \bea
        \delt \sum_{n=1}^{M} \norm{\uptau^{\mathrm{ROM}}(\ur^n)}^2 &=& \delt \sum_{n=1}^{M} (C_S\delta)^{2\mu} \int_{\Omega} \norm{\Grad \ur^n}_F^{2s} \abs{\Grad \ur^n}^2 d\bx \nonumber\\
        &=&(C_S\delta)^{2\mu} \delt \sum_{n=1}^{M} \norm{\Grad \ur^n}_{2s+2}^{2s+2} \nonumber\\
        &=&(C_S\delta)^{2\mu} \trinorm{\Grad \ur^n}_{2s+2,2s+2}^{2s+2}. \label{eq: L-ROM_LC}
    \eea

Thus, because 
$(C_S\delta)^{2\mu} \trinorm{\Grad \ur}_{2s+2,2s+2}^{2s+2} \to 0 $  as $ \delta \to 0$, 
the closure model \eqref{eq: LZ_closure} in the L-ROM \eqref{eq: LZ-ROM} satisfies Assumption \ref{as: assumption 1}.
Because the L-ROM is mean dissipative by Corollary~\ref{cor: verifiability}, 
assuming sufficient regularity of $ \ur$ and using 
Theorem~\ref{thm:LimitConsistency} implies that the L-ROM is limit consistent.
To prove limit consistency of the S-ROM, we 
pick $ \mu =2 $ and $ s=1$ in the L-ROM \eqref{eq: LZ-ROM}.
\end{proof}

\begin{remark}
 By Lemma \ref{lem: wr2ur}, \eqref{eq: L-ROM_LC}  gives a convergence rate for $\norm{\wr - \ur}$ as $ O(\delta^\mu)$ for the L-ROM, and $ O(\delta^2)$ for the S-ROM.
   \label{remark:delta-convergence}
\end{remark}

\section{Numerical Results} \label{sec-numres}

In this section, we investigate whether
verifiability (Theorem \ref{thm:verifiability}) and limit consistency (Theorem \ref{thm:LimitConsistency}) are recovered numerically for the S-ROM and L-ROM. 

We investigate the verifiability and the limit consistency of the S-ROM and L-ROM for two test problems: the one-dimensional (1D) Burgers
equation at 
$\nu= 2 \times10^{-3}$ (Section~\ref{subsec:1dburg}), and the two-dimensional (2D) lid-driven cavity
at Reynolds number $Re = 15,000$ (Section~\ref{subsec:2dldc}). 
The Burgers equation is a nonlinear advection-diffusion
problem that we will test in a low viscosity setting, making it a challenging problem for the standard G-ROM to handle in the under-resolved regime. 
The lid-driven cavity problem displays chaotic phenomena at high Reynolds numbers, which can give rise to spurious oscillations and inaccurate solutions in the under-resolved regime, and can yield similar challenges for the G-ROM.

\subsection{
Evaluation Criteria}
For verifiability,  
we use the following metrics. To quantify the error of LES-ROMs, that is, the first term on the left-hand side of inequality \eqref{eq:verifiability}, we use the following average $L^2$ norm: 
\begin{align}
    \ROMError := \frac{1}{M+1}\sum^M_{n=0}\| {P}_{r}(\bu^n_d)-\bw^n_r\|^2,
    \label{eq:error_l2}
\end{align}
where $\bu^n_d$ is the $d$-dimensional G-ROM velocity field, 
$\bw^n_r$ is the $r$-dimensional LES-ROM 
velocity field, {and} 
$P_r$ is the ROM $L^2 $ projection
(Definition \ref{def: ROM_Proj})
onto the $r$-dimensional reduced space. 
We remark that $\ROMError$ is measuring the error between the LES-ROM solution, $\bw_r$, and the large-scale component of the 
true solution, $P_r(\bu_d)$. 

To quantify the closure error of the S-ROM and L-ROM, that is, the term on the right-hand side of inequality (\ref{eq:verifiability}), we use the following metric:
\begin{align}
    \ClosureError := \frac{1}{M+1}\sum^M_{n=0}\|
    P_r\left(\uptau^{\mathrm{FOM}}(\bu_d^{n})\right) -P_r\left(\uptau^{\mathrm{ROM}}(P_r\bu^n_d)\right) 
    \|^2,
    \label{eq:closure_l2}
\end{align}
where $P_r(\uptau^{\mathrm{FOM}})$ and $P_r(\uptau^{\mathrm{ROM}})$ are the projection of $\uptau^{\mathrm{FOM}}$ and $\uptau^{\mathrm{ROM}}$ onto the $r$-dimensional reduced space, respectively, and 
are computed through the following identities: 
\begin{align}
    \left({P}_r\left(\uptau^{\mathrm{FOM}}(\bfu_d^{n})\right),\bphi_i \right) & = \left(P_r( \bfu_d^n \cdot \Grad \bfu_d^n  ) - P_r(\bfu_d^n) \cdot \Grad P_r(\bfu_d^n),\bphi_i \right), \\ 
    \left({P}_r\left(\uptau^{\mathrm{ROM}}(P_r\bfu^n_d)\right), \bphi_i \right) & =\left((C_S\delta)^\mu \norm{\Grad P_r\bfu^{n}_d}^s_F \Grad P_r\bfu^{n}_d,\nabla \bphi_i\right), 
\end{align}
for all $i=1,\ldots,r$. We note that the ROM error $\ROMError$ and the closure error $\ClosureError$ depend on the reduced space dimension, $r$. {To simplify the notation,} we suppress this dependency.

To illustrate numerically that both the S-ROM and L-ROM are verifiable as proven in Theorem~\ref{thm:verifiability}, we need to show that as $\ClosureError$ (\ref{eq:closure_l2}) decreases, so does $\ROMError$ (\ref{eq:error_l2}). Specifically, according to (\ref{eq:verifiability}), we should see $\log(\ClosureError)$ and $\log(\ROMError)$ obey the following relation:
\begin{equation}
    \log(\ROMError) \le \alpha \log(\ClosureError) + \beta, \label{eq:relation}
\end{equation}
with $\alpha \approx 1$ and some $\beta >0$. For a fixed {lengthscale} $\delta$, we aim to show that (\ref{eq:relation}) holds with $\alpha \approx 1$ as $r$ varies.

For limit consistency, 
we use the following metrics. To quantify the error between the $r$-dimensional LES-ROMs and the $ d$-dimensional G-ROM, that is, the left-hand side of inequality \eqref{eq: ROM_LC}, we use the following average $L^2$ norm: 

\begin{align}
    \eplimit := \frac{1}{M+1}\sum^M_{n=0}\|\bu^n_d-\bw^n_r\|^2.
    \label{eq:limit_left-hand side}
\end{align}
To quantify the term on the right-hand side of inequality \eqref{eq: ROM_LC}, we use the following metric: 

\begin{align}
     \etalimit & := \trinorm{\bfeta}_{\infty,0}^2 +  \Rey\inv \trinorm{\Grad \bfeta}_{2,0}^2 
    \nonumber \\ & +  
    \Rey^3 \trinorm{\Grad \bfeta}_{\infty,0}^2
    +  \trinorm{\uptau^{\mathrm{ROM}}(\ur)}_{2,0}^2.
     \label{eq:limit_right-hand side}
\end{align} 
To illustrate numerically that S-ROM and L-ROM are limit consistent as we have defined in Section \ref{sec: Limit_Con} and proven in Theorem~\ref{thm:LimitConsistency}, we 
{will} show 
that, given pairs of $(r, \delta)$ points with $r$ approaching $d$ and $\delta \to 0$, $\eplimit$ is bounded by the a priori bound $\etalimit$, and $\eplimit$ trends towards 0.
We note that, because the 
$Re^3$ factor on the right-hand side of \eqref{eq:limit_right-hand side} is relatively large, in our numerical investigation we suppress the $Re$ dependency when we plot $\etalimit$ to ensure a clearer display of the results. 

\begin{remark}
    We note that the metrics defined in (\ref{eq:error_l2})--(\ref{eq:closure_l2}) and (\ref{eq:limit_left-hand side}) depend on the 
    $d$-dimensional G-ROM solution, $\bu^n_d$. 
    {For computational efficiency, in our numerical investigation} 
    we use the projection of $\bu^n_{\mathrm{FOM}}$ onto the $d$-dimensional reduced space, $P_d(\bu^n_{\mathrm{FOM}})$, for $\bfu^n_d$ to avoid running the $d$-dimensional G-ROM.
\end{remark}

\subsection{S-ROM/L-ROM
Computational Implementation}
\label{sec:implementation}

The L-ROM \eqref{eq: LZ-ROM} {is a discretized version of} the following semi-discrete
system:
\begin{align}
  \de{\uwr}{t} & + \left(\frac{1}{\rm Re}A + \wtA(\bw_r)\right) \uwr  + \left(\frac{1}{\rm Re} \ua_0 + \wtua_0(\bw_r)\right)  
  + (\uwr)^T B\uwr \nonumber \\ & + B_1\uwr + B_2\uwr + \ub_0
  = \uzero,
  \label{eq:NSE_ODE_semi-discretized}
\end{align}

where the reduced operators are defined as
\begin{align}
    A_{ij} & := \left(\nabla \bphi_i, \nabla \bphi_j \right),\quad \wtA_{ij}(\bw_r)  := (C_S \delta)^\mu \left(\nabla \bphi_i, \|\nabla \bw_r \|^s_F 
    \nabla \bphi_j
    \right), \\
    a_{0,i} &:= \left(\nabla \bphi_i, \nabla \bphi_0 \right), \quad \widetilde{a}_{0,i}(\bw_r) := 
    (C_S \delta)^\mu \left(\nabla \bphi_i, \|\nabla \bw_r \|^s_F \nabla \bphi_0 \right), \\ B_{ikj} & := \left(\bphi_i, (\bphi_k \cdot \nabla) \bphi_j\right), \quad B_{1,ij} := \left(\bphi_i, (\bphi_0 \cdot \nabla) \bphi_j\right) \\ 
    B_{2,ik} & := \left(\bphi_i, (\bphi_k \cdot \nabla) \bphi_0\right) \quad  b_{0,i} := \left(\bphi_i, (\bphi_0 \cdot \nabla) \bphi_0 \right),
\end{align}
for all $i,j,k=1,\ldots,r$.
We note that the reduced mass matrix is simply 
{the} identity matrix due to the orthonormality of the POD basis functions.
In addition, the Smagorinsky constant $C_S$ is set to $1$.
We parameterize the S-ROM with $\mu=2$ and $s=1$, and the L-ROM with $\mu = 10/3$ and $s=2$. The parameter choice for L-ROM is drawn from L-FOM results in  \cite{layton2008introduction,reyesgsm}.

\begin{remark}
    We note that the computational cost of solving the S-ROM and L-ROM scales with the {number of} FOM degrees of freedom because the eddy viscosity depends on $\|\nabla \bu_r\|^s_F$. In this study, we do not consider hyper-reduction techniques because the purpose is to demonstrate verifiability and limit consistency.
    {We emphasize, however, that to ensure the efficiency of the S-ROM and L-ROM in realistic applications (e.g., turbulent flow simulations), both ROMs should be equipped with hyper-reduction \cite{chaturantabut2012state}.}
\end{remark}
In the numerical investigation, we use the backward Euler time discretization in the 1D Burgers equation, and a semi-implicit scheme, the BDF3/EXT3 
scheme, in the 2D lid-driven cavity.  
Although the verifiability and the limit consistency of the S-ROM and L-ROM
proven in Theorem~\ref{thm:verifiability} and Theorem~\ref{thm:LimitConsistency}
were using the backward Euler time discretization, we believe that the
mathematical arguments used in the proofs can be extended to higher-order time
discretizations such as the one considered here.

\subsection{1D Burgers Equation}
\label{subsec:1dburg}

In this section, we consider the following 1D viscous Burgers equation: 

\begin{equation}
    u_t - \nu \, u_{xx} + u \, u_x = 0, \quad 
    x \in (0,1), \ t > 0, 
\end{equation}

where $u$ is the unknown and $\nu > 0$ is the diffusion coefficient. We assume that the equation is supplemented with the homogeneous Dirichlet boundary conditions and a suitable initial condition.

Because the unknown is scalar-valued, we will use $u$ instead of $\bu$ for the results pertaining to the Burgers equation.

The clear similarities to the NSE momentum equation (absent the pressure terms) mean that the variational equation to be solved for the Burgers equation is the same as for the standard NSE G-ROM described in \eqref{eq: gromu} cast in one spatial dimension. 
Because of these similarities with the NSE and its relatively easier numerical treatment, the Burgers equation is often used as a first numerical testbed for various 
reduced order modeling ideas and beyond, see, e.g.,~\cite{ahmed2018stabilized,sanfilippo2023approximate,KV99,KV01,Iliescu_al18}. 

To test the results of Theorems \ref{thm:verifiability} and \ref{thm:LimitConsistency} in a challenging setting, we will consider this problem with a relatively small viscosity coefficient $\nu$, for which the solution develops sharp gradients known as viscous shocks.

The semi-discretized problem to be solved for the S-ROM and L-ROM 
is the same as \eqref{eq:NSE_ODE_semi-discretized} of Section \ref{sec:implementation}, replacing $Re$ with $\nu$. Note that, because of the homogeneous boundary conditions, there is no  
zeroth mode to subtract. 
Thus, the terms $\ua_0$, $\wtua_0(\bw_r)$, $B_1$, $B_2$, and $\ub_0$ in \eqref{eq:NSE_ODE_semi-discretized} are absent here. 
We solve the S-ROM and L-ROM equations using a semi-implicit backward Euler scheme.

\begin{figure}[!htb]
    \centering
     \includegraphics[width=0.9\textwidth]{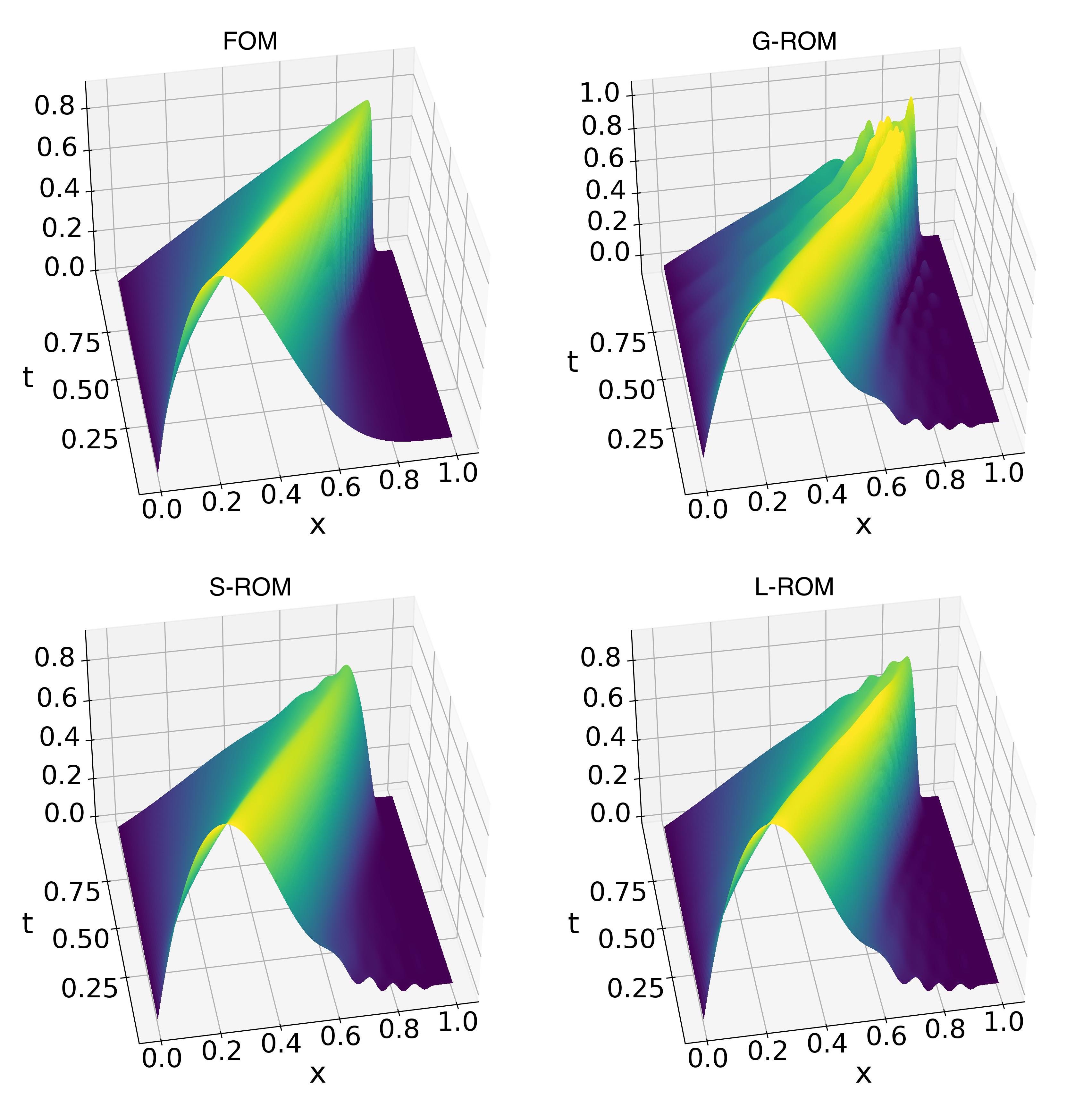}
    \caption{The Burgers equation at $\nu = 2 \times 10^{-3}$. Predicted value of $u$ comparison between the G-ROM, S-ROM, and L-ROM at $r = 10$ with $\delta = 0.04$. 
}
\label{fig:burgers_FOM_ROM_results_3D} 
\end{figure}

For all numerical results reported in this section, the time interval of simulation is set to be $[0,1]$. 
The initial condition is chosen to be 
a smooth profile $u(x,0) = 3\sin(\pi x) (1-x)^3$. By choosing a small viscosity parameter of $\nu = 2 \times 10^{-3}$, we can see in 
Fig.~\ref{fig:burgers_FOM_ROM_results_3D} 
that the FOM solution displays a rapidly steepening wavefront. 
The FOM is a 
finite element model consisting of 1000 evenly spaced linear elements ($h = 10^{-3}$) and a time step $\Delta t = 0.5 \times 10^{-3}$. 
The chosen precision turns out to be sufficient to resolve well the sharp spatial gradients developed during the evolution of the solution profiles. 
From the FOM data, we extract 50 evenly spaced (in time) snapshots to construct the 
POD ROM basis $\{ \varphi_i\}_{i = 1}^r$. All 
the ROMs thus have maximum dimension $d = 50$, and are solved with the same time step as the FOM for comparison purposes.

In Fig.~\ref{fig:burgers_FOM_ROM_results_3D}, we also 
illustrate the various ROM solutions at a dimension of $r = 10$. The G-ROM for this dimension clearly displays nonphysical oscillations and diverges significantly from the FOM data. In contrast, both the S-ROM and L-ROM are, 
as expected, able to recover a much smoother profile. We note that, for this problem, choosing the same $\delta$ value for both the S-ROM and L-ROM leads to a more diffused final solution for S-ROM, whereas the L-ROM maintains a sharper gradient more similar to the FOM. 

\subsubsection{Verifiability}

We now look to demonstrate the scaling result of Theorem \ref{thm:verifiability}. Fig.~\ref{fig:burgers_constant_delta_verifiability} plots the ROM error $\ROMError$ (\ref{eq:error_l2}) and closure error $\ClosureError$ (\ref{eq:closure_l2}) over all $r$ values for both the S-ROM and L-ROM using the mesh size $h$ as the value of $\delta$. 
We note that similar lengthscale scalings are used to construct LES models~\cite{berselli2006mathematics,sagaut2006large}.
Other choices can (and probably should) be used in the ROM setting, e.g., ROM lengthscales that adapt as we change the ROM dimension (see the energy-based ROM lengthscale~\cite{mou2023energy}).
The results generally show that, as $\ClosureError$ decreases, so does $\ROMError$ for both the S-ROM and L-ROM.

Fig.~\ref{fig:burgers_constant_delta_verifiability_regression} demonstrates the scaling results over the $r$ values in the range of $[15,35]$, for which both the ROM error and closure error are decreasing (i.e., the ROM has not reached saturation and the typically variable results of low $r$ values are excluded). Both models provide a linear regression (LR) with a slope somewhat higher than the expected value of 1.

\begin{figure}[!hbt]
    \centering
    \begin{subfigure}{0.49\columnwidth}
    \caption{S-ROM with $\delta = h = 0.001$}
     \includegraphics[width=1\textwidth]{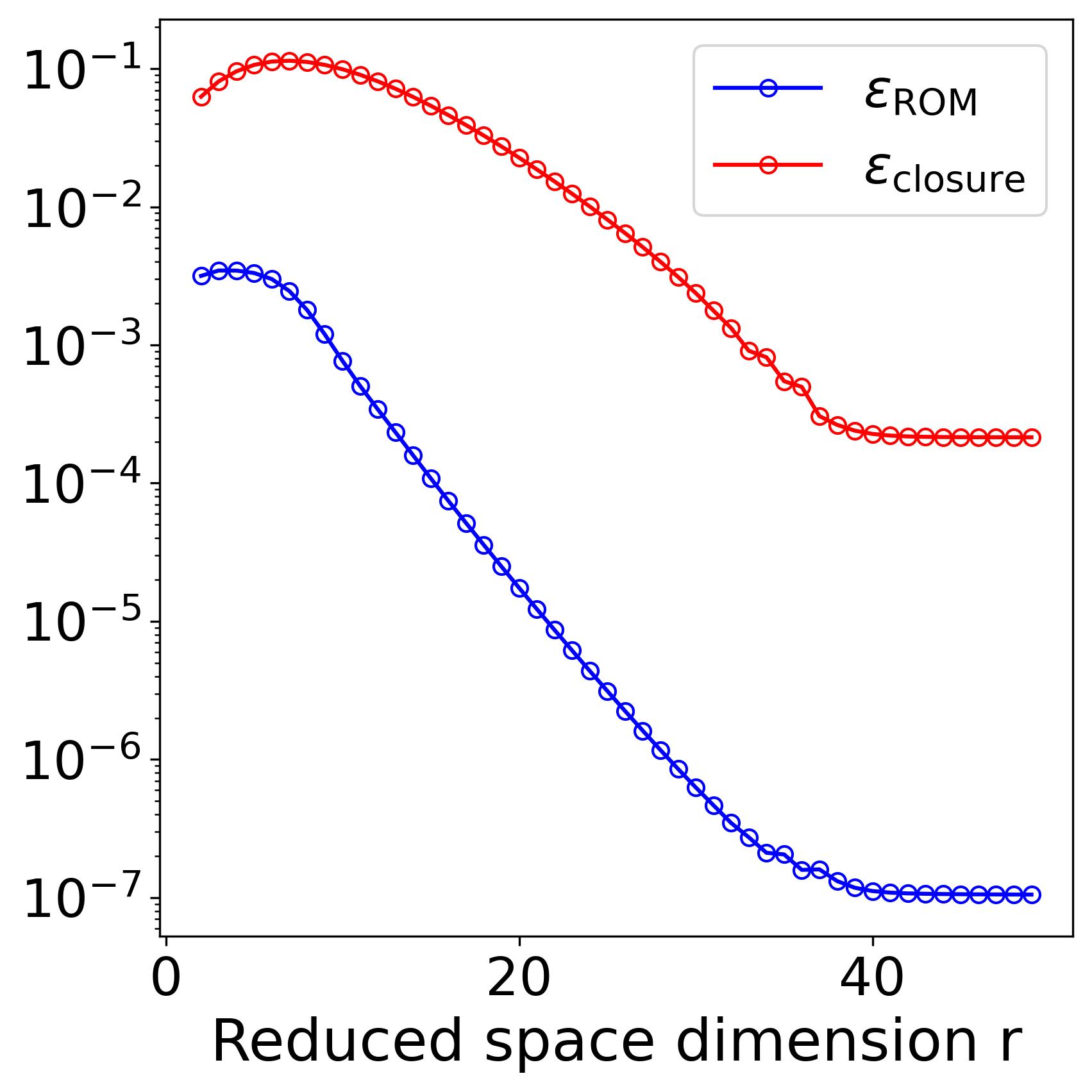}
     
    \end{subfigure}
    \begin{subfigure}{0.49\columnwidth}
    \caption{L-ROM with $\delta = h = 0.001$}
     \includegraphics[width=1\textwidth]{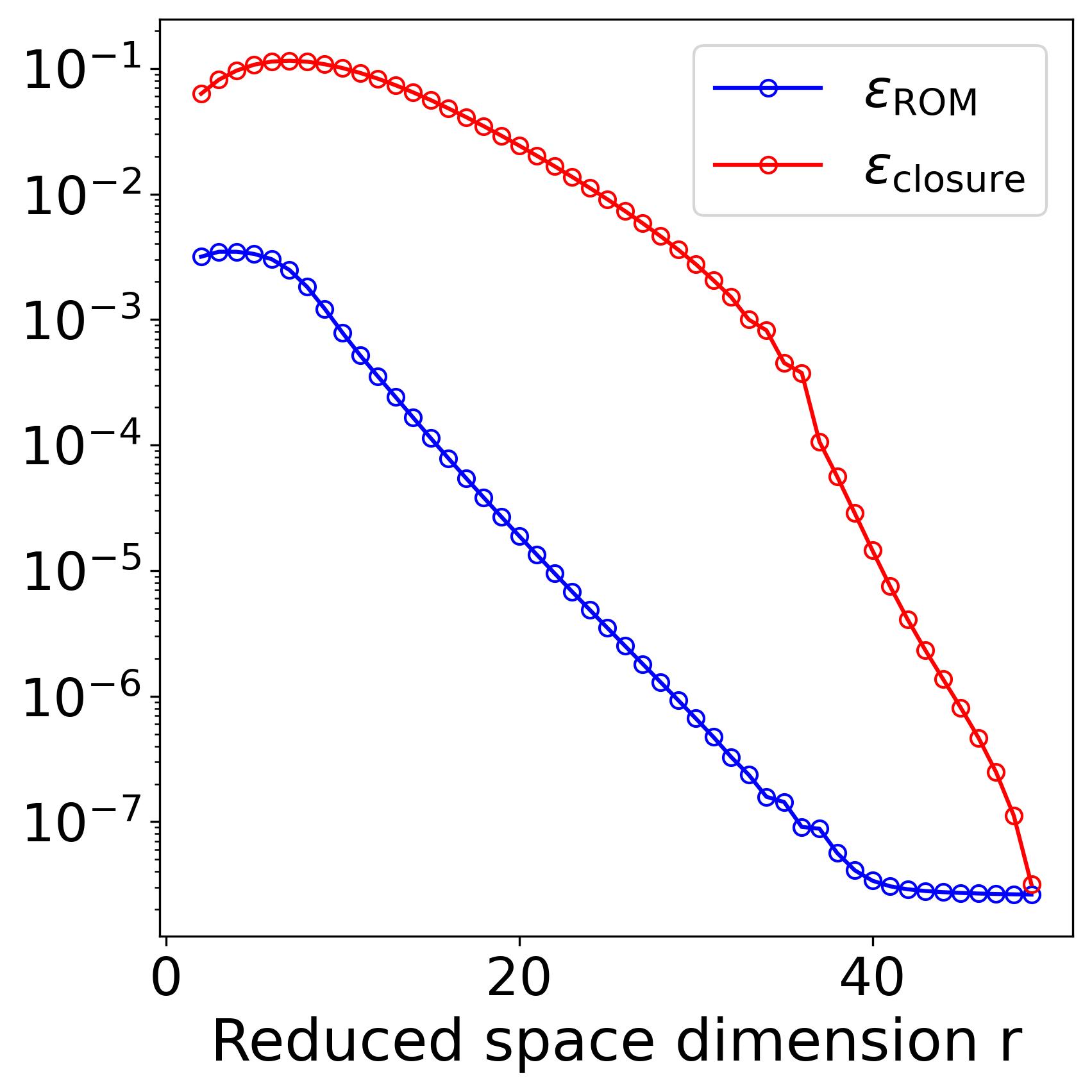}
    \end{subfigure}
    \caption{The Burgers equation at $\nu = 2 \times 10^{-3}$. 
    The ROM error $\ROMError$ and the closure error $\ClosureError$ with respect to different $r$ values for the S-ROM and L-ROM with $C_S=1$ and $\delta =0.001$.}
    \label{fig:burgers_constant_delta_verifiability}
\end{figure}

\begin{figure}[!hbt]
    \centering
    \begin{subfigure}{0.49\columnwidth}
    \caption{S-ROM with $\delta = h = 0.001$}
     \includegraphics[width=1\textwidth]{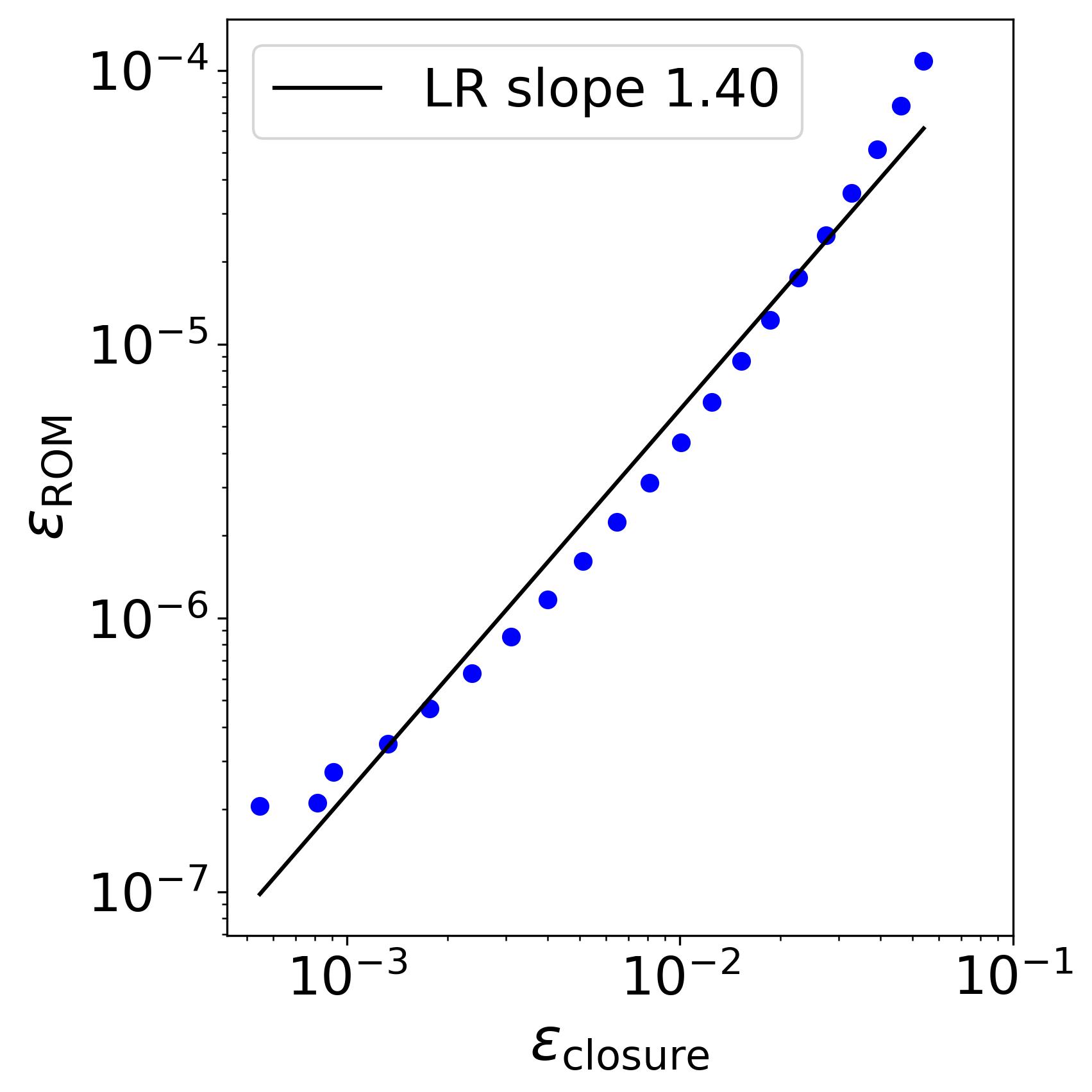}
     
    \end{subfigure}
    \begin{subfigure}{0.49\columnwidth}
    \caption{L-ROM with $\delta = h = 0.001$}
     \includegraphics[width=1\textwidth]{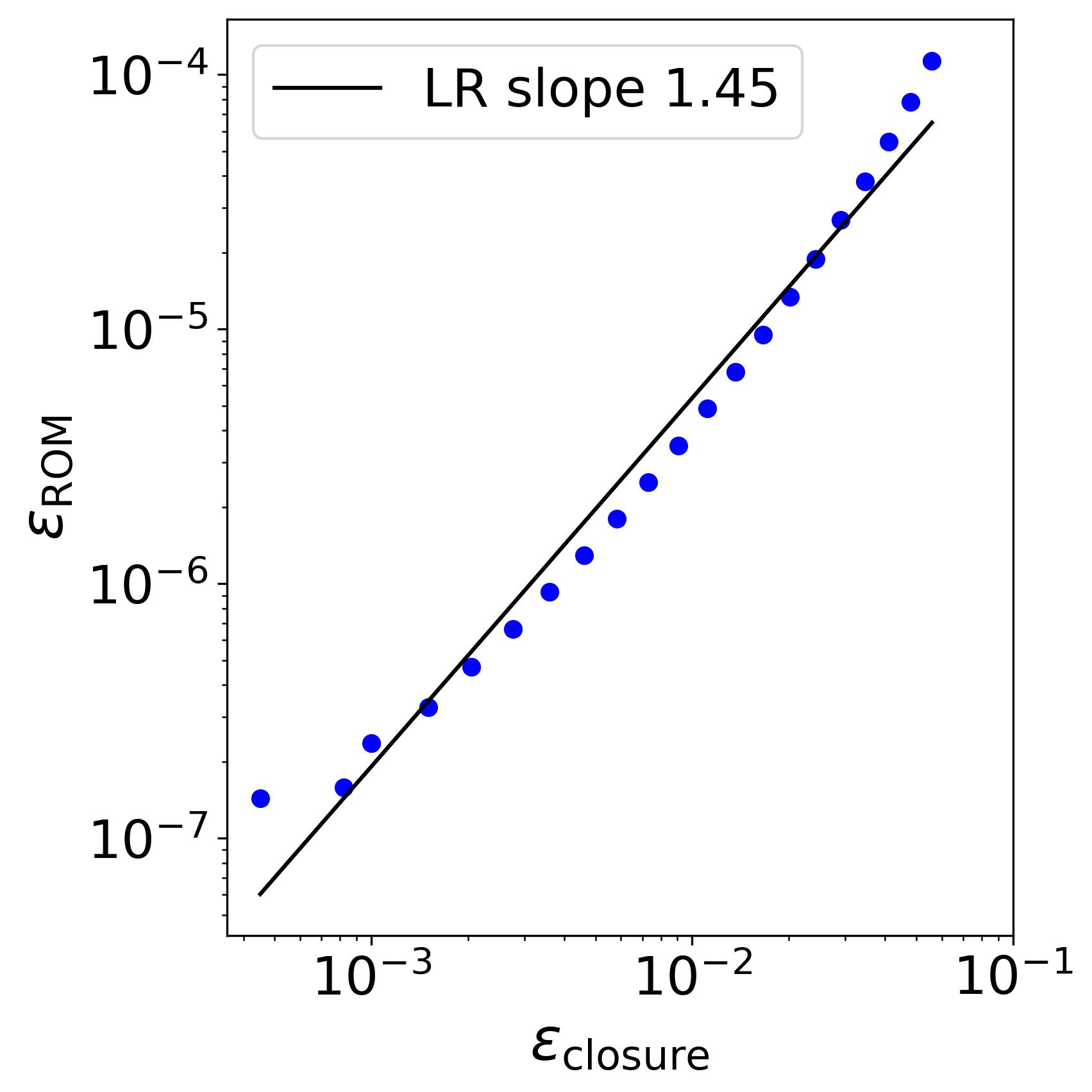}
    \end{subfigure}
    \caption{The Burgers equation at $\nu = 2 \times 10^{-3}$. The behavior of the ROM error $\ROMError$ with respect to the closure error $\ClosureError$. 
    Values shown are for $r = 15, \dots, 35.$ }
    \label{fig:burgers_constant_delta_verifiability_regression}
\end{figure}

\subsubsection{Limit Consistency}

We now proceed to test the limit consistency results of Theorem \ref{thm:LimitConsistency} and Theorem \ref{thm:Thm3}. We recall from Definition \ref{definition:limit-consistency} that an LES-ROM is limit consistent if the LES-ROM solution $w_r$ approaches the d-dimensional G-ROM solution (``true solution") $u_d$ as $\delta \to 0$ and as $r$ increases from $1$ to $d$. Theorem \ref{thm:LimitConsistency} states that $\eplimit $\eqref{eq:limit_left-hand side}, the error between $w_r$ and $u_d$, should be bounded by the consistency error, $ \etalimit$ \eqref{eq:limit_right-hand side}. 
To investigate this numerically, we take a sequence $\{(r_i, \delta_i)\}_{i=1}^8$ such that as $r_i$ approaches $d$, $\delta_i$ approaches 0. 
Specifically, we chose the sequence 
$\{ (15, 1 \times 10^{-2}) , (20, 4.6 \times 10^{-3}), (25, 2.2 \times 10^{-3}) , (30,1 \times 10^{-3})), (35, 4.6 \times 10^{-4})),(40, 2.1 \times 10^{-4})), (45,1 \times 10^{-4}))\}$. Fig.~\ref{fig:burgers_limit_consistency_new} demonstrates the result of this choice, and shows that both the S-ROM and L-ROM errors decay at similar rates as $\etalimit$ shrinks. 

\begin{figure}[!hbt]
    \centering
    \begin{subfigure}{0.49\columnwidth}
    \caption{S-ROM}
     \includegraphics[width=1\textwidth]{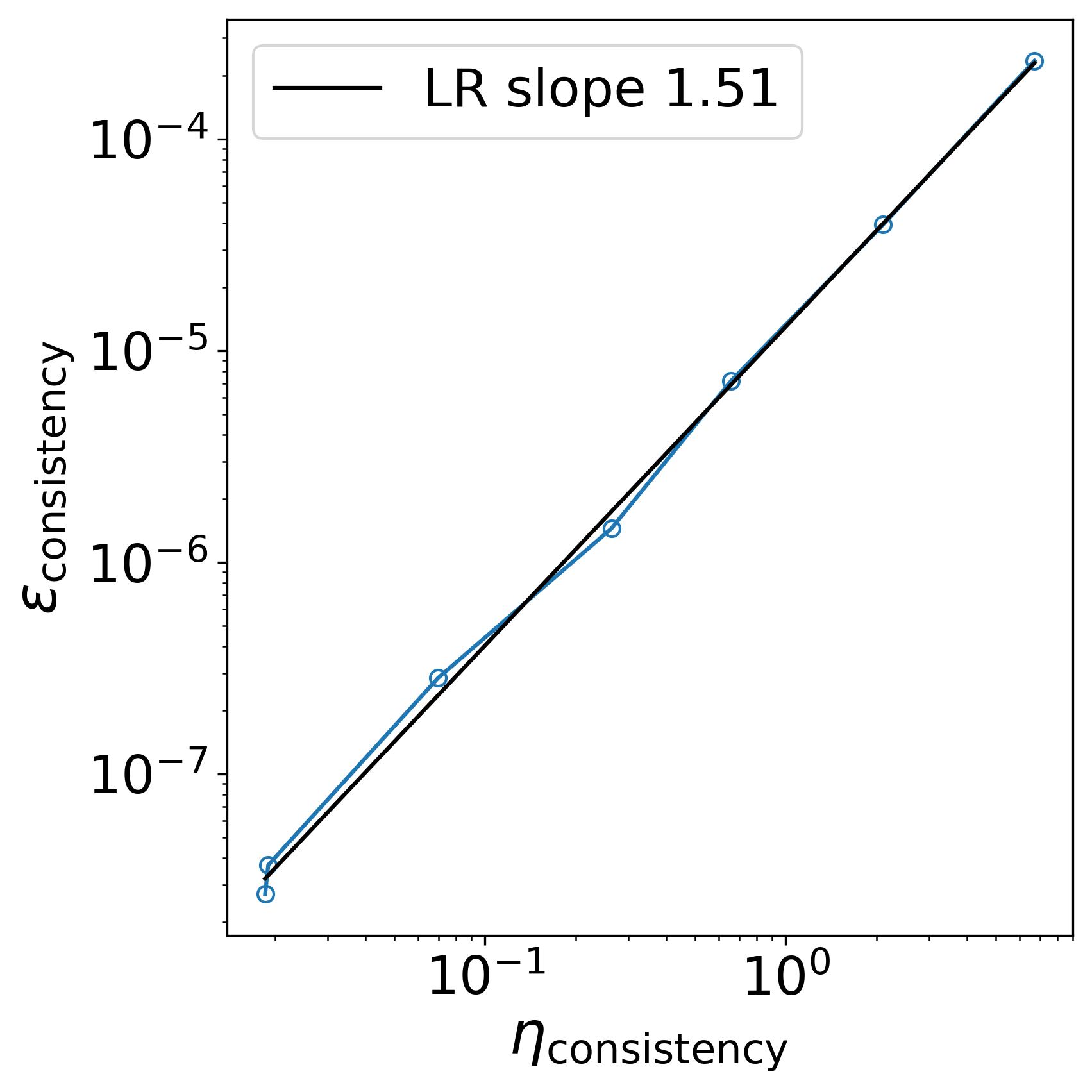}
     
    \end{subfigure}
    \begin{subfigure}{0.49\columnwidth}
    \caption{L-ROM}
     \includegraphics[width=1\textwidth]{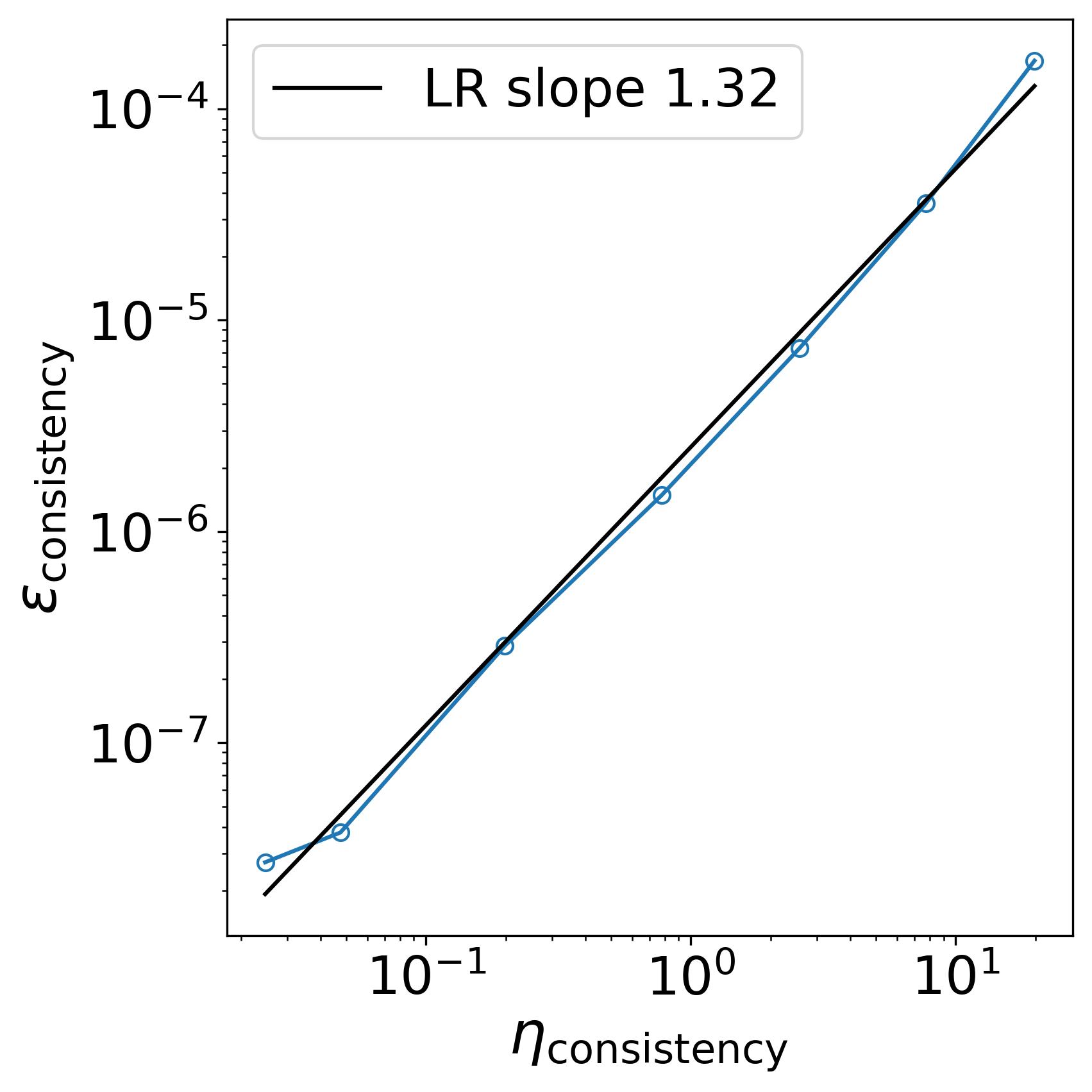}
    \end{subfigure}
    \caption{The Burgers equation at $\nu = 2 \times 10^{-3}$, demonstrating the limit consistency of the S-ROM and L-ROM.
    }
    \label{fig:burgers_limit_consistency_new}
\end{figure}

Finally, following  Remark \ref{remark:delta-convergence}, equation \eqref{eq: L-ROM_LC} in Theorem \ref{thm:Thm3} suggests that $\Delta t\sum_{n=1}^M \| \tau^{\mathrm{ROM}}(u^n_r) \|$ should scale like $\mathcal{O}(\delta^\mu)$. Recall that we have chosen $\mu = 10/3$ for our implementation of the L-ROM, and that $\mu = 2$ for the S-ROM. Lemma \ref{lem: wr2ur} bounds the quantity $\| u_r - w_r\|$  by the norm of the closure term $\tau^{\mathrm{ROM}}$, suggesting that both $\tau^{\mathrm{ROM}}$ and $\| u_r - w_r\|$ should decay at the same rate as $\delta \to 0$.  
We demonstrate this rate in Fig.~\ref{fig:burgers_limit_consistency} for a fixed value of $r = 10$ and taking 10 evenly spaced $\delta$ values between $10^{-4}$ and $10^{-2}$. Noting that the norms and thus the expected rates are squared, we see that the S-ROM results demonstrate a rate of $\mathcal{O}(10^{3.83})$, and the L-ROM demonstrates a rate of $\mathcal{O}(\delta^{6.66})$. The S-ROM results are slightly under the expected theoretical rate of 4, while the L-ROM results are accurate to the theory up to two decimal places. Further results are omitted for brevity, but the authors have found that similar results hold for this problem over varied values of $r$.

\begin{figure}[!hbt]
    \centering
    \begin{subfigure}{0.49\columnwidth}
    \caption{S-ROM with $r = 10$}
     \includegraphics[width=1\textwidth]{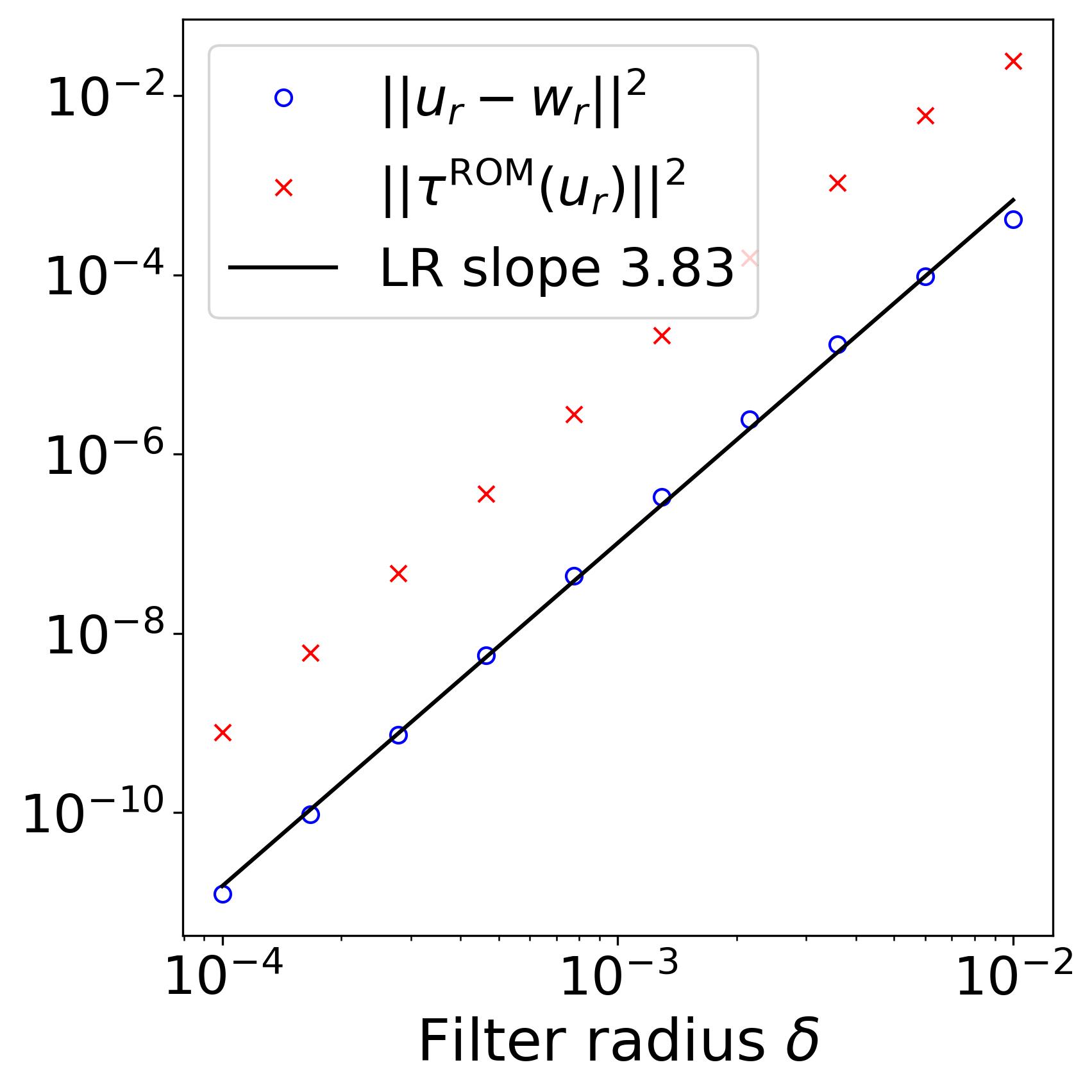}
     
    \end{subfigure}
    \begin{subfigure}{0.49\columnwidth}
    \caption{L-ROM with $r = 10$}
     \includegraphics[width=1\textwidth]{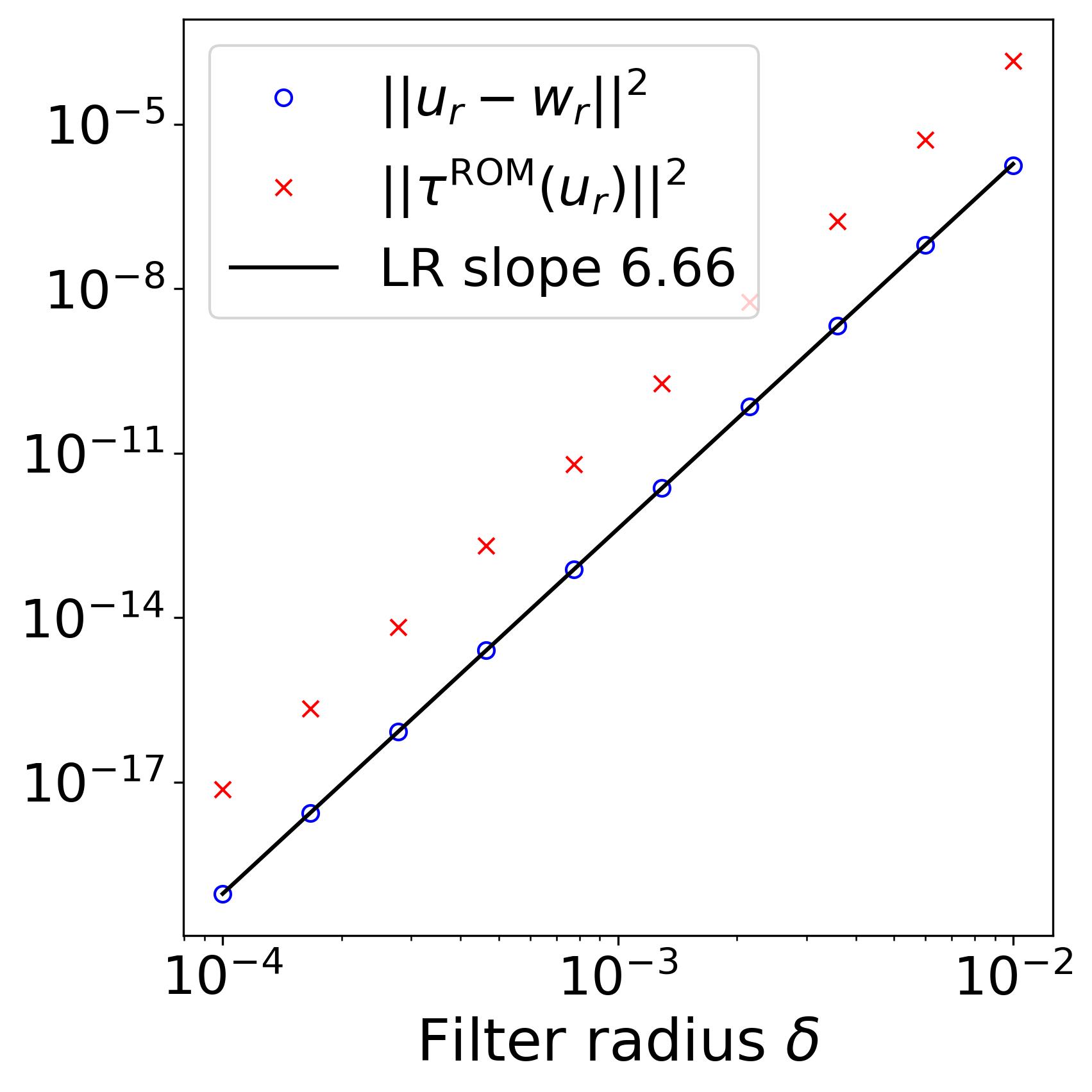}
    \end{subfigure}
    \caption{The Burgers equation at $\nu = 2 \times 10^{-3}$. The regression result demonstrates the rate suggested by Theorem \ref{thm:Thm3}. 
    }\label{fig:burgers_limit_consistency}
\end{figure}
\subsection{2D Lid-Driven Cavity}
\label{subsec:2dldc}

Our next 
{test case} is the 2D lid-driven cavity problem at $\rm Re=15,000$. 
We consider  
{the direct numerical simulation} to be the 
{FOM}. 
A detailed description 
of the FOM setup for this problem can be found in \cite{kaneko2020towards}.
As demonstrated in \cite{tsai2023accelerating}, this model problem requires more than $60$ POD modes for the G-ROM (\ref{eq: gromu-r}) to accurately reconstruct the solutions and 
{quantities of interest.}

To construct the S-ROM and L-ROM, 
we first simulate the FOM until the solutions reach a statistically steady state. We then collect $M=2001$ snapshots $\{\bu^n :=
\bu(\bx,t^n)-\bphi_0\}^M_{n=1}$ 
in the time
interval 
$[6000,~ 6010]$ with a sampling time of $\Delta t_s=0.005$. 
\begin{figure}[!ht]
    \centering
     \includegraphics[width=0.75\textwidth]{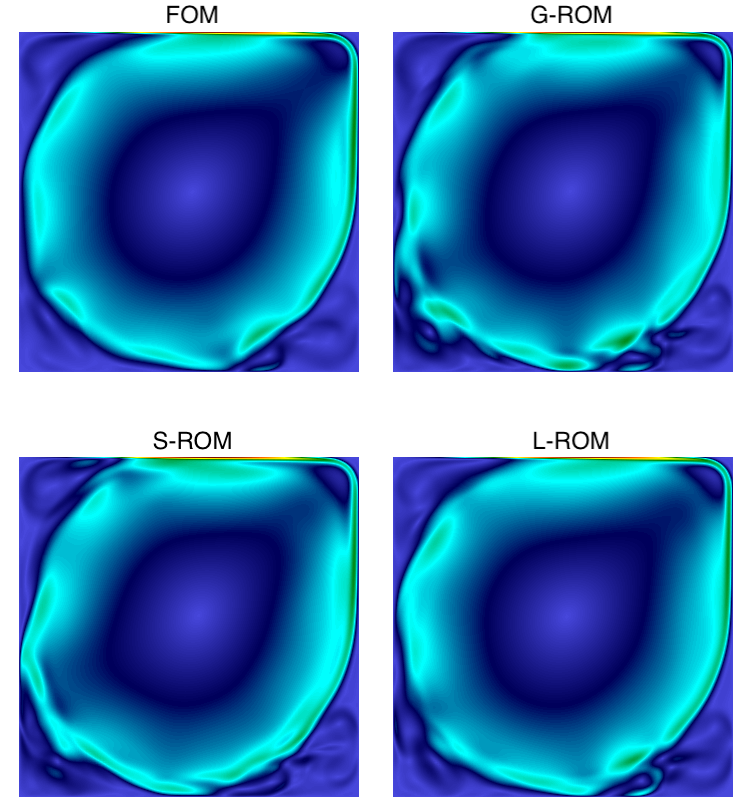}
    \caption{2D lid-driven cavity at $\rm Re=15,000$. 
    Predicted velocity magnitude comparison between the G-ROM, 
    S-ROM, and L-ROM with $\delta=0.006$, along with the FOM velocity magnitude.
}
    \label{fig:2dldc_sol_compare}
\end{figure}
The reduced basis functions $\{\bphi_i\}^r_{i=1}$ are constructed by applying POD
to the snapshot set.
We note that the zeroth
mode, $\bphi_0$, is set to be the FOM velocity field 
{at the initial time instance, i.e., at} $t=6000$. 
{We also note that, although the verifiability and limit consistency results proved in this paper do not directly apply to the case of ROMs that use a centering trajectory (i.e., the zeroth mode), we believe that these theoretical results can be extended to this setting.}

In Fig.~\ref{fig:2dldc_sol_compare}, we show the FOM solution at $t=6025$ and the predicted solution by the G-ROM, the S-ROM, and the L-ROM with $r=8$ and $\delta=0.006$. The G-ROM displays nonphysical oscillations and diverges from the FOM data. The S-ROM with $\delta=0.006$ is overly dissipative and its solution is much smoother than the FOM solution. On the other hand, the L-ROM with $\delta=0.006$ is able to stabilize the flow without being overly dissipative.

\subsubsection{{Verifiability}}

We now look to demonstrate the scaling result of Theorem \ref{thm:verifiability}. 

In Fig.~\ref{fig:2dldc_error_vs_cerror_1}, {we vary the reduced space dimension, $r$, and} we examine how $\ROMError$ (\ref{eq:error_l2}) and $\ClosureError$ (\ref{eq:closure_l2}) vary {for the S-ROM and L-ROM}. 
The filter radius $\delta$ is set to $0.001$ for both the S-ROM and L-ROM.

\begin{figure}[!ht]
    \centering
    \begin{subfigure}{0.49\columnwidth}
    \caption{S-ROM with $\delta=0.001$}
     \includegraphics[width=1\textwidth]{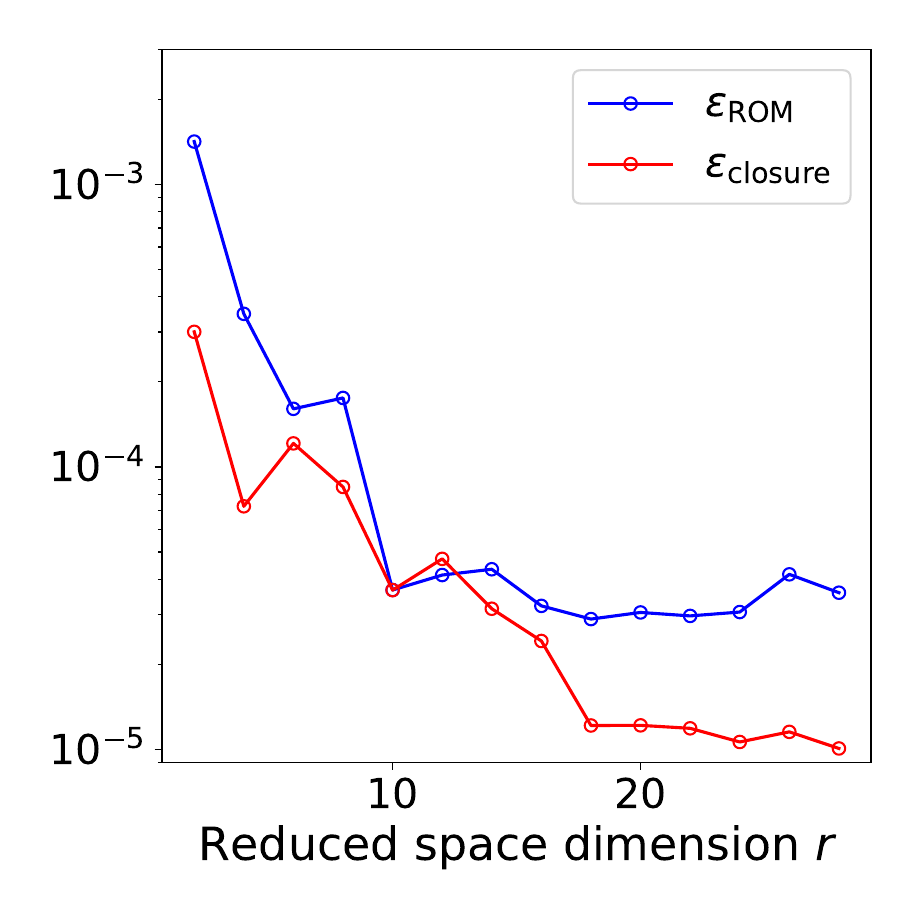}
    \end{subfigure}
    \begin{subfigure}{0.49\columnwidth}
    \caption{L-ROM with $\delta=0.001$}
     \includegraphics[width=1\textwidth]{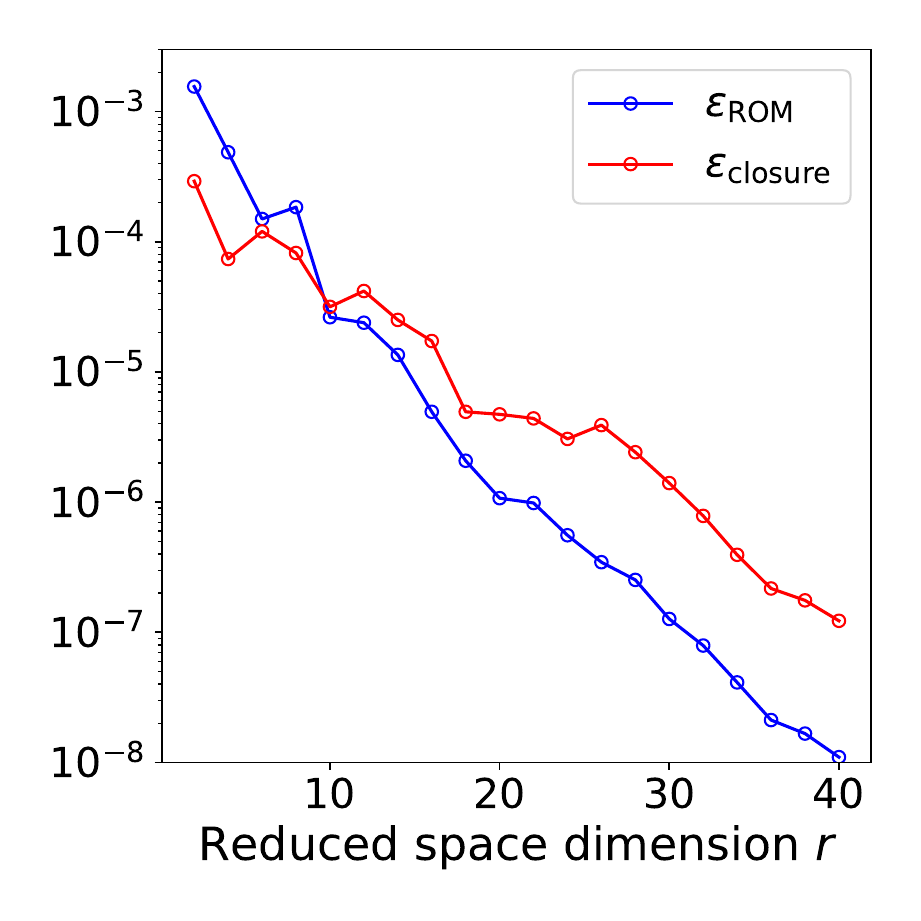}
    \end{subfigure}
    \caption{2D lid-driven cavity at $\rm Re=15,000$. 
    The ROM error $\ROMError$ and the closure error $\ClosureError$ with respect to different $r$ values for the S-ROM and L-ROM with $C_S=1$ and $\delta =0.001$. 
    }
    \label{fig:2dldc_error_vs_cerror_1}
\end{figure}

The results generally show that, as $\ClosureError$ decreases, so does $\ROMError$ for both the S-ROM and L-ROM.
\begin{figure}[!ht]
    \centering
    \begin{subfigure}{0.49\columnwidth}
    \caption{S-ROM with $\delta=0.001$}
    \includegraphics[width=1\textwidth]{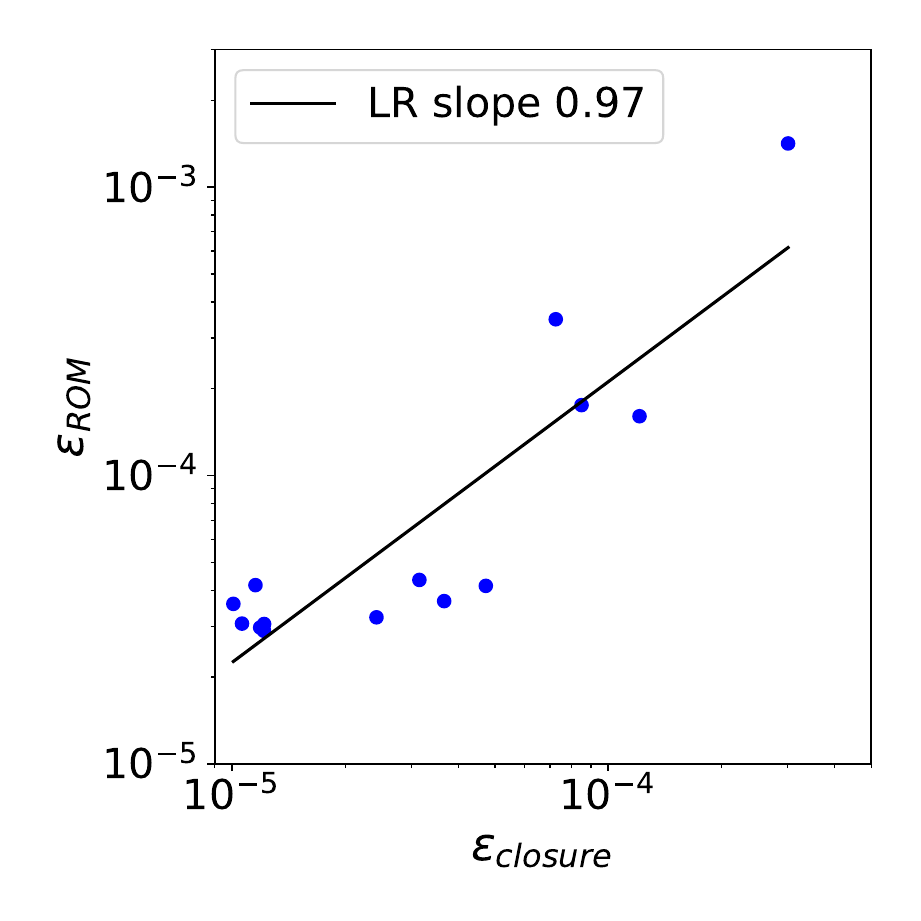}
    \end{subfigure}
    \begin{subfigure}{0.49\columnwidth}
    \caption{L-ROM with $\delta=0.001$}
    \includegraphics[width=1\textwidth]{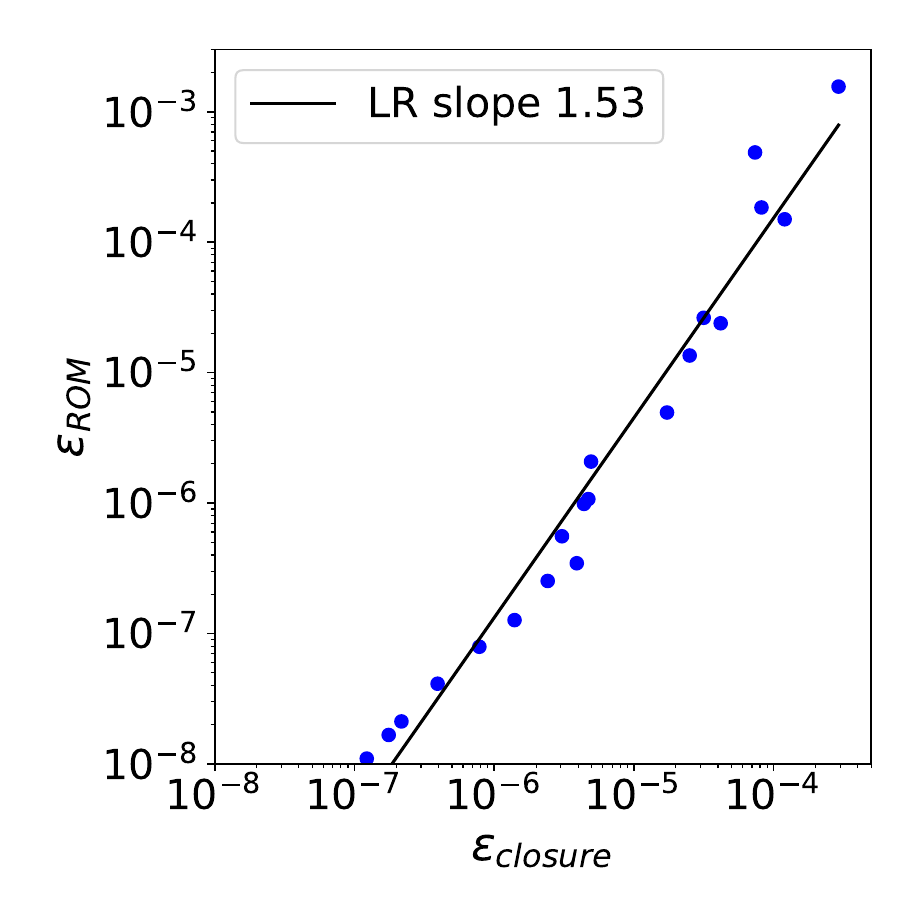}
    \end{subfigure}
    \caption{2D lid-driven cavity at $\rm Re=15,000$. 
    The behavior of the ROM error $\ROMError$ with respect to the closure error $\ClosureError$. 
    }
    \label{fig:2dldc_verif_1}
\end{figure}

Furthermore, we monitor the 
decay rate of $\ROMError$ with respect to $\ClosureError$ as $r$ varies for both the S-ROM and L-ROM.
Specifically, we examine the relation (\ref{eq:relation}) for a fixed $\delta=0.001$ while $r$ is varied. For this purpose, in Fig.~\ref{fig:2dldc_verif_1}, we display the slope obtained from linear regression 
of the $(\ROMError, \ClosureError)$ data pairs.
The LR slope for the S-ROM is computed using data for $r=2,4,\ldots,28$, whereas for the L-ROM, it is based on $r=2,4,\ldots,40$. 
The results show that (\ref{eq:relation}) holds with $\alpha \approx 1$ for the S-ROM, and with $\alpha > 1$ for the L-ROM.

\begin{remark}
    We note that the rate of verifiability depends on the length of the training time interval.
    In particular, we were able to demonstrate the verifiability of the S-ROM and L-ROM for different time interval lengths that are less than $20$. For lengths that are greater than $20$, we found that the slope $\alpha$ was below $1$.
    This is because the ROM error (\ref{eq:error_l2}) is a pointwise measurement in time and as the length of the time interval increases, the ROM solution trajectory could be very different from the FOM solution due to  
    {the chaotic character of the underlying system. This} 
    affects the quality of the ROM error (\ref{eq:error_l2}). 
\end{remark}

\subsubsection{{Limit Consistency}}

As shown in Fig.~\ref{fig: LC}, the LES-ROM is limit consistent if the LES-ROM solution $\bw_r$ approaches the $d$-dimensional G-ROM solution $\bfu_d$ as $\delta \rightarrow 0 $ and $r$ increases from $1$ to $d$. In Theorem~\ref{thm:LimitConsistency}, we prove that 
the error between 
$\bw_r$  
and $\bfu_d$ 
is bounded by $\etalimit$ (\ref{eq:limit_right-hand side}), {if the LES-ROM's closure term satisfies the mean dissipativity {condition}~(\ref{eq: mean_dis}) and Assumption~\ref{as: assumption 1}. 
In Theorem~\ref{thm:Thm3}, we prove that the S-ROM and L-ROM are limit consistent.}

To numerically illustrate the limit consistency, 
we evaluate $\eplimit$ and $\etalimit$ at a sequence of  
$\{(r_i,\delta_i)\}^{N}_{i=1}$ values, and show that as $\etalimit$ decreases, so does $\eplimit$. 
\begin{figure}[!ht]
    \centering
    \begin{subfigure}{0.49\columnwidth}
    \caption{S-ROM}
     \includegraphics[width=1\textwidth]{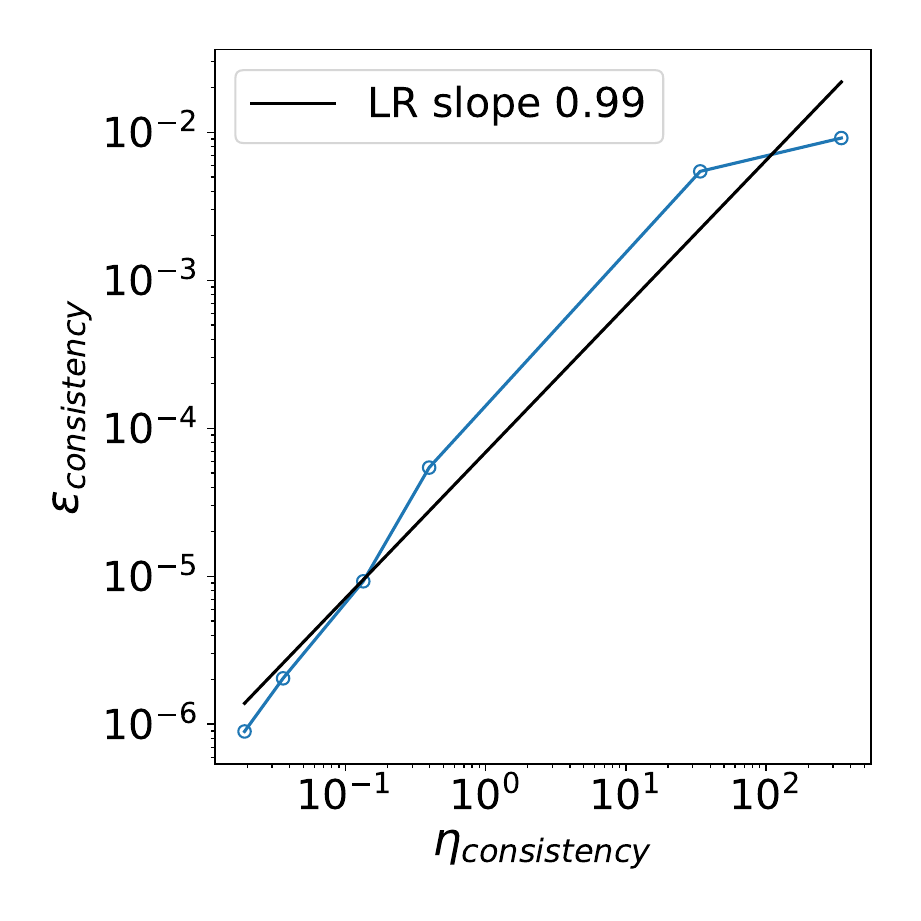}
    \end{subfigure}
    \begin{subfigure}{0.49\columnwidth}
    \caption{L-ROM}
     \includegraphics[width=1\textwidth]{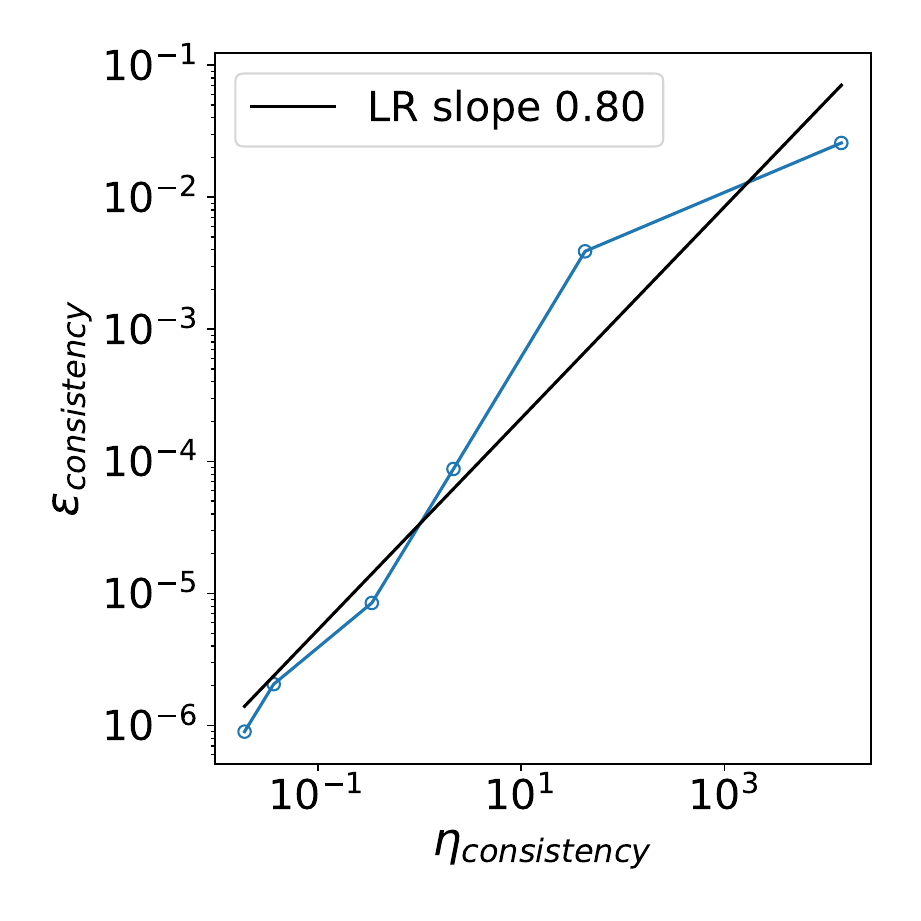}
    \end{subfigure}
    \caption{2D lid-driven cavity at $\rm Re=15,000$. 
    The limit consistency of the S-ROM and L-ROM.
    }
    \label{fig:2dldc_limit_consistency}
\end{figure}
{Specifically,} we consider the sequence $\{
(8,10^{-2}), (10,5 \times 10^{-3}), (12,10^{-3}), (16,5 \times 10^{-4}), (20,10^{-4}), (24, 5 \times 10^{-5})\}$ for the S-ROM, and the sequence $\{ 
(8,2.5 \times 10^{-2}), (10, 10^{-2}), (12,5 \times 10^{-3}), (16,2.5 \times 10^{-3}), (20, 10^{-3})$, $(24,5 \times 10^{-4})\}$ for the L-ROM,
{and plot in Fig.~\ref{fig:2dldc_limit_consistency} the corresponding $\eplimit$ and $\etalimit$ values.
We observe that 
as $\etalimit$ decreases, so does $\eplimit$ in both S-ROM and L-ROM, with a rate of $0.99$ and $0.8$, respectively. 
}

{
Theorem~\ref{thm:Thm3} and Remark \ref{remark:delta-convergence} {show that} the ROM closure, $\Delta t\sum_{n=1}^M \| \tau^{\mathrm{ROM}}(u^n_r) \|$, and the error between the LES-ROM solution, $\bw_r$, and the $r$-dimensional G-ROM solution, $\bfu_r$, should scale like $\mathcal{O}(\delta^\mu)$. }

We demonstrate this rate in Fig.~\ref{fig:2dldc_lc_rate} for a fixed $r = 10$ and taking $8$ 
$\delta$ values between $10^{-2}$ and $5\times 10^{-5}$ for the S-ROM, and $10$ $\delta$ values between $3\times 10^{-2}$ and $10^{-5}$ for the L-ROM. Noting that the norms and thus the expected rates are squared, we observe that the S-ROM results demonstrate a rate of $\mathcal{O}(\delta^{3.71})$, and the L-ROM results demonstrate a rate of $\mathcal{O}(\delta^{6.22})$, which are close to the expected theoretical rates of $4$ and $6.66$, respectively. 
\begin{figure}[!ht]
    \centering
    \begin{subfigure}{0.49\columnwidth}
    \caption{S-ROM with $r=10$}
     \includegraphics[width=1\textwidth]{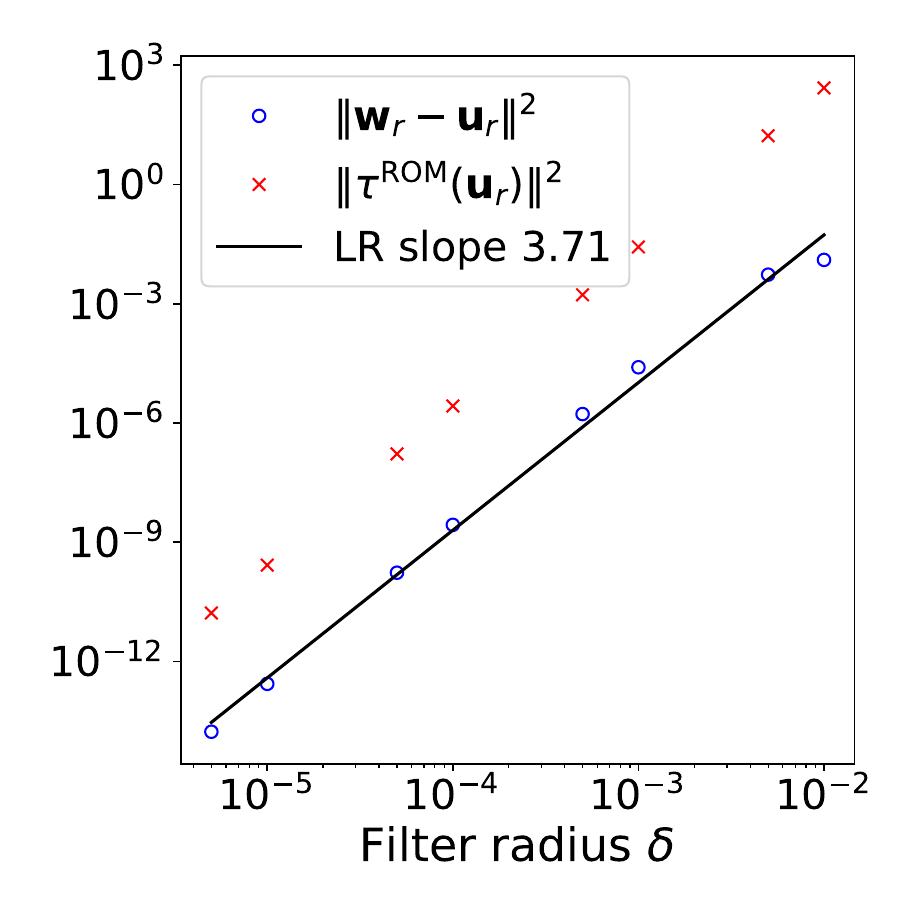}
    \end{subfigure}
    \begin{subfigure}{0.49\columnwidth}
    \caption{L-ROM with $r=10$}
     \includegraphics[width=1\textwidth]{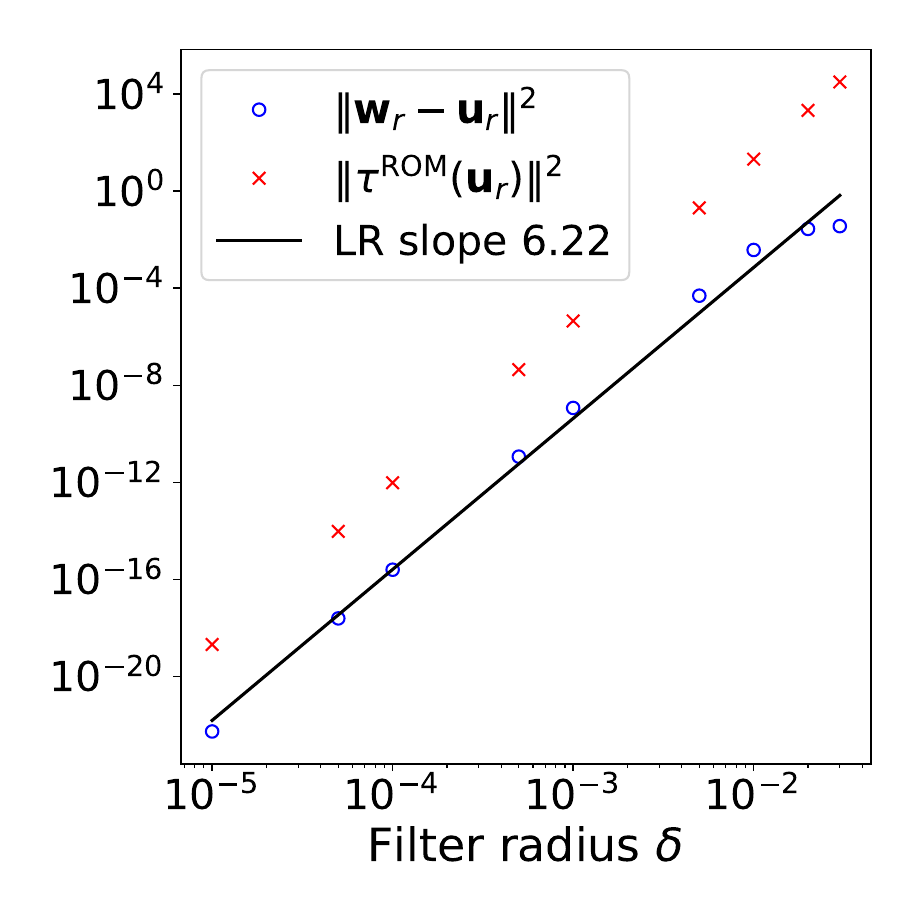}
    \end{subfigure}
    \caption{2D lid-driven cavity at $\rm Re=15,000$. The regression results demonstrate 
    the rate suggested by Theorem~\ref{thm:Thm3} and Remark~\ref{remark:delta-convergence}.} 
    \label{fig:2dldc_lc_rate}
\end{figure}

\section{Conclusions} \label{sec-Con}

In this paper, we proposed 
the Ladyzhenskaya-ROM (L-ROM), which is a new EV LES-ROM for under-resolved, convection-dominated (e.g., turbulent) flows. 
L-ROM, which is a generalization of the classical S-ROM closure, consists of a diffusion-like operator with a viscosity that depends on the Frobenius norm of the gradient of the ROM velocity.
For both the new L-ROM and 
classical S-ROM, we proved verifiability (Theorem~\ref{thm:verifiability}) and, in a discrete sense, limit consistency (Theorem~\ref{thm:LimitConsistency}).
We emphasize that, to our knowledge, this is the first time that the limit consistency concept is used in reduced order modeling.
We also illustrated numerically the verifiability and limit consistency of the L-ROM and S-ROM for two under-resolved convection-dominated problems that display sharp gradients:
(i) the 1D Burgers equation with a small diffusion coefficient; and 
(ii) the 2D lid-driven cavity flow at Reynolds number $Re=15,000$.
The theoretical 
results for verifiability (Theorem~\ref{thm:verifiability}) and limit consistency (Theorem~\ref{thm:LimitConsistency}) 
were recovered for both the L-ROM and S-ROM in our numerical investigation. 

There are several research directions that could be followed to further develop both theoretically and  computationally the new L-ROM and the classical S-ROM.
First, numerical investigation of the new L-ROM in more challenging computational settings (e.g., the turbulent channel flow \cite{tsai2025time}) is required in order to assess the capabilities and limitations of the L-ROM.
For example, the new L-ROM could be compared with other LES-ROMs, e.g., data-driven EV LES-ROMs \cite{prakash2024projection,protas2015optimal}, and also ROM stabilizations \cite{Quaini2024Bridging,tsai2025time}.
One could also further develop the numerical analysis of the L-ROM, including both {\it a priori} bounds (that could be used, e.g., to find parameter scalings~\cite{reyes2024trrom}) and {\it a posteriori} bounds. 
Furthermore, one could adapt the current verifiability and limit consistency investigation to alternative LES-ROM frameworks based on different filters, e.g., by replacing the ROM projection, $P_r$, with a differential filter (whose radius could be used as a lengthscale in the construction of the L-ROM and S-ROM). This filter might additionally or alternatively incorporate a lengthscale that is more relevant to ROMs, such as the energy 
lengthscale formulation given in \cite{mou2023energy}.
Finally, the verifiability and limit consistency of other LES-ROMs should also be performed.

\bibliographystyle{plain}
\bibliography{Ref}

@Article{Iliescu_al18,
AUTHOR = {Iliescu, T. and Liu, H. and Xie, X.},
TITLE = {{Regularized reduced order models for a stochastic Burgers equation}},
JOURNAL = {International Journal of  Numerical Analysis \& Modeling},
volume ={15},
pages ={594--607},
YEAR = {2018}
}

@article{iollo2000stability,
  title={{Stability properties of POD--Galerkin approximations for the compressible Navier--Stokes equations}},
  author={Iollo, A. and Lanteri, S. and D{\'e}sid{\'e}ri, J. A.},
  journal={Theoret. Comput. Fluid Dyn.},
  fjournal={Theoretical and Computational Fluid Dynamics},
  volume={13},
  number={6},
  pages={377--396},
  year={2000},
  publisher={Springer}
}

@book{rebollo2014mathematical,
  title={Mathematical and Numerical Foundations of Turbulence Models and Applications},
  author={Rebollo, T. Chac{\'o}n and Lewandowski, R.},
  year={2014},
  publisher={Springer}
}

@TechReport{noack2002low,
   author = {Noack, B. R. and Papas, P. and Monkewitz, P. A.},
   title = {Low-dimensional {G}alerkin model of a laminar shear-layer},
   institution = {{\'E}cole Polytechnique F{\'e}d{\'e}rale de Lausanne},
   year = {2002},
   number = {2002-01}
}

@article{bergmann2009enablers,
  title={{Enablers for robust POD models}},
  author={Bergmann, M. and Bruneau, C. H. and Iollo, A.},
     JOURNAL = {J. Comput. Phys.},
  FJOURNAL = {Journal of Computational Physics},
  volume={228},
  number={2},
  pages={516--538},
  year={2009},
  publisher={Elsevier}
}

@article{prakash2024projection,
  title={Projection-based reduced order modeling and data-driven artificial viscosity closures for incompressible fluid flows},
  author={Prakash, A. and Zhang, Y. J.},
  journal={Comput. Meth. Appl. Mech. Engrg.},
  fjournal={Computer Methods in Applied Mechanics and Engineering},
  volume={425},
  pages={116930},
  year={2024},
  publisher={Elsevier}
}

@article{tsai2025time,
  title={A time-relaxation reduced order model for the turbulent channel flow},
  author={Tsai, P. H. and Fischer, P. and Iliescu, T.},
  journal={J. Comput. Phys.},
  fjournal={Journal of Computational Physics},
  volume={521},
  pages={113563},
  year={2025},
  publisher={Elsevier}
}

@article {CSB03,
    AUTHOR = {Couplet, M. and Sagaut, P. and Basdevant, C.},
     TITLE = {Intermodal energy transfers in a proper orthogonal
              decomposition--{G}alerkin representation of a turbulent
              separated flow},
   JOURNAL = {J. Fluid Mech.},
  FJOURNAL = {Journal of Fluid Mechanics},
    VOLUME = {491},
      YEAR = {2003},
     PAGES = {275--284},
      ISSN = {0022-1120},
     CODEN = {JFLSA7},
   MRCLASS = {76F99},
  MRNUMBER = {MR2014314},
}

@article{HughesVMS1998,
title = {The variational multiscale method—a paradigm for computational mechanics},
  journal={Comput. Methods Appl. Mech. Engrg.},
  fjournal={Computer Methods in Applied Mechanics and Engineering},
volume = {166},
number = {1},
pages = {3-24},
year = {1998},
note = {Advances in Stabilized Methods in Computational Mechanics},
issn = {0045-7825},
doi = {https://doi.org/10.1016/S0045-7825(98)00079-6},
url = {https://www.sciencedirect.com/science/article/pii/S0045782598000796},
author = {Thomas J.R. Hughes and Gonzalo R. Feijóo and Luca Mazzei and Jean-Baptiste Quincy}
}

@Article{parish2024residual,
  author = 	 {Parish, E. J. and Yano, M. and Tezaur, I. and Iliescu, T.},
  title = 	 {Residual-based stabilized reduced-order models of the transient convection-diffusion-reaction equation obtained through discrete and continuous projection},
  OPTjournal =      {arXiv preprint, \url{http://arxiv.org/abs/2302.09355}},
  journal={Arch. Comput. Methods Eng.},
  fjournal={Archives of Computational Methods in Engineering},
  year = 	 {2024}, 
  OPTkey = 	 {},
  OPTvolume = 	 {},
  OPTnumber = 	 {},
 pages={1--45},
   OPTpages = 	 {},
  OPTmonth = 	 {},
  OPTnote = 	 {to appear},
  OPTannote = 	 {}
}

@article{protas2015optimal,
  title={Optimal nonlinear eddy viscosity in {G}alerkin models of turbulent flows},
  author={Protas, B. and Noack, B. R. and {\"O}sth, J.},
  journal={J. Fluid Mech.},
  fjournal={Journal of Fluid Mechanics},
  volume={766},
  pages={337--367},
  year={2015},
  publisher={Cambridge Univ Press}
}

@article{koc2019commutation,
  title={Commutation error in reduced order modeling of fluid flows},
  author={Koc, B. and Mohebujjaman, M. and Mou, C. and Iliescu, T.},
  journal={Adv. Comput. Math.},
  fjournal={Advances in Computational Mathematics},
  OPTjournal={arXiv preprint, \url{http://arxiv.org/abs/1810.00517}},
  volume={45},
  number={5-6},
  pages={2587--2621},
  year={2019},
  publisher={Springer}
}

@book{layton2008introduction,
	title={Introduction to the numerical analysis of incompressible viscous flows},
	author={Layton, William},
	volume={6},
	year={2008},
	publisher={SIAM}
}

@article {KV99,
    AUTHOR = {Kunisch, K. and Volkwein, S.},
     TITLE = {Control of the {B}urgers equation by a reduced-order approach
              using proper orthogonal decomposition},
      JOURNAL = {J. Optim. Theory Appl.},
  FJOURNAL = {Journal of Optimization Theory and Applications},
    VOLUME = {102},
      YEAR = {1999},
    NUMBER = {2},
     PAGES = {345--371},
      ISSN = {0022-3239},
     CODEN = {JOTABN},
   MRCLASS = {49M27 (49J20 93C20)},
  MRNUMBER = {MR1706822 (2000e:49042)},
MRREVIEWER = {Walter Alt},
}

@article{sanfilippo2023approximate,
title = {Approximate deconvolution {L}eray reduced order model for convection-dominated flows},
journal = {Finite Elem. Anal. Des.},
fjournal = {Finite Elements in Analysis and Design},
volume = {226},
pages = {104021},
year = {2023},
issn = {0168-874X},
doi = {https://doi.org/10.1016/j.finel.2023.104021},
url = {https://www.sciencedirect.com/science/article/pii/S0168874X23001142},
author = {Anna Sanfilippo and Ian Moore and Francesco Ballarin and Traian Iliescu},
keywords = {Reduced order models, Approximate deconvolution, Under-resolved regime, Spatial filter, Regularization, Leray model}
}

@article{ahmed2018stabilized,
  title={Stabilized principal interval decomposition method for model reduction of nonlinear convective systems with moving shocks},
  author={Ahmed, M. and San, O.},
  journal={Comp. Appl. Math.},
  fjournal={Computational and Applied Mathematics},
  volume={37},
  number={5},
  pages={6870--6902},
  year={2018},
  publisher={Springer}
}

@book{john2003large,
  title={Large eddy simulation of turbulent incompressible flows: analytical and numerical results for a class of {LES} models},
  author={John, Volker},
  volume={34},
  year={2003},
  publisher={Springer Science \& Business Media}
}

@article{du1991analysis,
  title={Analysis of a {L}adyzhenskaya model for incompressible viscous flow},
  author={Du, Qiang and Gunzburger, Max D},
journal ={J. Math. Anal. Appl.},
  fjournal={Journal of Mathematical Analysis and Applications},
  volume={155},
  number={1},
  pages={21--45},
  year={1991},
  publisher={Elsevier}
}

@article{du1990finite,
  title={Finite-element approximations of a {L}adyzhenskaya model for stationary incompressible viscous flow},
  author={Du, Qiang and Gunzburger, Max D},
  JOURNAL = {SIAM J. Numer. Anal.},
  FJOURNAL = {SIAM Journal on Numerical Analysis},
  volume={27},
  number={1},
  pages={1--19},
  year={1990},
  publisher={SIAM}
}

@book{sagaut2006large,
  title={Large eddy simulation for incompressible flows: an introduction},
  author={Sagaut, Pierre},
  year={2006},
  publisher={Springer Science \& Business Media}
}

@article{lions1969quelques,
  title={Quelques m{\'e}thodes de r{\'e}solution des problemes aux limites non lin{\'e}aires},
  author={Lions, Jacques Louis},
  year={1969},
  publisher={Dunod}
}

@article{minty1962monotone,
  title={Monotone (nonlinear) operators in {H}ilbert space},
  author={Minty, George J},
   JOURNAL = {Duke Math. J.},
  fjournal={Duke Mathematical Journal},
  volume={29},
  number={3},
  pages={341--346},
  year={1962},
  publisher={Duke University Press}
}

@article{heywood1990finite,
  title={Finite-element approximation of the nonstationary {N}avier--{S}tokes problem. {Part IV}: Error analysis for second-order time discretization},
  author={Heywood, John G and Rannacher, Rolf},
  JOURNAL = {SIAM J. Numer. Anal.},
  FJOURNAL = {SIAM Journal on Numerical Analysis},
  volume={27},
  number={2},
  pages={353--384},
  year={1990},
  publisher={SIAM}
}

@article{cao2022continuous,
  title={Continuous data assimilation for the 3{D} {L}adyzhenskaya model: analysis and computations},
  author={Cao, Yu and Giorgini, Andrea and Jolly, Michael and Pakzad, Ali},
  journal={Nonlinear Analysis: Real World Applications},
  volume={68},
  pages={103659},
  year={2022},
  publisher={Elsevier}
}

@article{reyesgsm,
title = {A generalization of the {S}magorinsky model},
   JOURNAL = {Appl. Math. Comput.},
  FJOURNAL = {Applied Mathematics and Computation},
volume = {469},
pages = {128545},
year = {2024},
issn = {0096-3003},
doi = {10.1016/j.amc.2024.128545},
author = {Shen C. Huang and Adam Johnson and Monika Neda and Jorge Reyes and Hossein Tehrani},
keywords = {Smagorinsky model, Finite element, Navier-Stokes equations, Large eddy simulations}
}

@article{berkooz1993proper,
  title={The proper orthogonal decomposition in the analysis of turbulent flows},
  author={G. Berkooz and P. Holmes and J.L. Lumley},
  journal={Ann. Rev. Fluid Mech.},
  fjournal={Annual Review of Fluid Mechanics},
  volume={25},
  number={1},
  pages={539--575},
  year={1993},
  publisher={Annual Reviews 4139 El Camino Way, PO Box 10139, Palo Alto, CA 94303-0139, USA}
}

@article{tsai2022parametric,
  title={Parametric model-order-reduction development for unsteady convection},
  author={P.H. Tsai and P. Fischer},
  journal={Front. Phys.},
  fjournal={Frontiers in Physics},
  volume={10},
  pages={903169},
  year={2022},
  publisher={Frontiers Media SA}
}

@article{volkwein2013proper,
  title={Proper orthogonal decomposition: Theory and reduced-order modelling},
  author={Volkwein, S.},
  journal={Lecture Notes, University of Konstanz},
  note={\url{http://www.math.uni-konstanz.de/numerik/personen/volkwein/teaching/POD-Book.pdf}},
  year={2013}
}

@article{kaneko2020towards,
  title={Towards model order reduction for fluid-thermal analysis},
  author={Kaneko, K. and Tsai, P.-H. and Fischer, P.},
  journal={Nucl. Eng. Des.},
  fjournal={Nuclear Engineering and Design},
  volume={370},
  pages={110866},
  year={2020},
  publisher={Elsevier}
}

@article{iliescu2014variational,
  author = {Iliescu, T. and Wang, Z.},
  title = {Variational Multiscale Proper Orthogonal Decomposition: 
  	    {N}avier-{S}tokes Equations},
  fjournal={Numerical Methods for Partial Differential Equations},
  journal={Num. Meth. P.D.E.s},
  volume={30},
  number={2},
  pages={641--663},
  year = {2014}
}

@Article{xie2017approximate,
  author = 	 {Xie, X. and Wells, D. and Wang, Z. and Iliescu, T.},
  title = 	 {Approximate Deconvolution Reduced Order Modeling},
  journal = 	 {Comput. Methods Appl. Mech. Engrg.},
  year = 	 {2017},
  volume = 	 {313},
  pages = 	 {512--534}
}

@article{xie2018numerical,
  title={Numerical Analysis of the {L}eray Reduced Order Model},
  author={Xie, X. and Wells, D. and Wang, Z. and Iliescu, T.},
  journal={J. Comput. Appl. Math.},
  fjournal={Journal of Computational and Applied Mathematics},
  volume={328},
  pages={12--29},
  year={2018}
}

@article {KV01,
    AUTHOR = {Kunisch, K. and Volkwein, S.},
     TITLE = {Galerkin proper orthogonal decomposition methods for parabolic
              problems},
   JOURNAL = {Numer. Math.},
  FJOURNAL = {Numerische Mathematik},
    VOLUME = {90},
      YEAR = {2001},
    NUMBER = {1},
     PAGES = {117--148},
      ISSN = {0029-599X},
     CODEN = {NUMMA7},
   MRCLASS = {65M60 (35A35 35K20)},
  MRNUMBER = {MR1868765 (2003g:65118)},
}

@article{koc2022verifiability,
  title={Verifiability of the data-driven variational multiscale reduced order model},
  author={Koc, Birgul and Mou, Changhong and Liu, Honghu and Wang, Zhu and Rozza, Gianluigi and Iliescu, Traian},
  journal={J. Sci. Comput.},
  fjournal={Journal of Scientific Computing},
  volume={93},
  number={2},
  pages={54},
  year={2022},
  publisher={Springer}
}

@article{reyes2024trrom,
  title={A Priori Error Bounds and Parameter Scalings for the Time Relaxation Reduced Order Model},
  author={Reyes, Jorge and Tsai, Ping-Hsuan and Novo, Julia and Iliescu, Traian},
  journal={arXiv preprint arXiv:2411.08986},
  year={2024}
}

@book{temam2001navier,
  title={{N}avier-{S}tokes Equations: {T}heory and Numerical Analysis},
  author={Temam, R.},
  volume={2},
  year={2001},
  publisher={American Mathematical Society}
}

@article{Quaini2024Bridging,
AUTHOR = {Quaini, Annalisa and San, Omer and Veneziani, Alessandro and Iliescu, Traian},
TITLE = {Bridging Large Eddy Simulation and Reduced-Order Modeling of Convection-Dominated Flows through Spatial Filtering: Review and Perspectives},
JOURNAL = {Fluids},
VOLUME = {9},
YEAR = {2024},
NUMBER = {8},
ARTICLE-NUMBER = {178},
URL = {https://www.mdpi.com/2311-5521/9/8/178},
ISSN = {2311-5521},
DOI = {10.3390/fluids9080178}
}

@article{Kaya2002verifiability,
author = {M. Kaya and W. J. Layton},
title = {{On ``verifiability" of models of the motion of large eddies in turbulent flows}},
volume = {15},
journal = {Differential and Integral Equations},
number = {11},
publisher = {Khayyam Publishing, Inc.},
pages = {1395 -- 1407},
year = {2002},
doi = {10.57262/die/1356060729},
URL = {https://doi.org/10.57262/die/1356060729}
}

@article{agdestein2025discretize,
  title={Discretize first, filter next: Learning divergence-consistent closure models for large-eddy simulation},
  author={Agdestein, Syver D{\o}ving and Sanderse, Benjamin},
   JOURNAL = {J. Comput. Phys.},
  FJOURNAL = {Journal of Computational Physics},
  volume={522},
  pages={113577},
  year={2025},
  publisher={Elsevier}
}

@article{wang2012proper,
  title={Proper orthogonal decomposition closure models for turbulent flows: a numerical comparison},
  author={Wang, Zhu and Akhtar, Imran and Borggaard, Jeff and Iliescu, Traian},
  journal={Comput. Meth. Appl. Mech. Engrg.},
  fjournal={Computer Methods in Applied Mechanics and Engineering},
  volume={237},
  pages={10--26},
  year={2012},
  publisher={Elsevier}
}

@article{ahmed2021closures,
  title={On closures for reduced order models—A spectrum of first-principle to machine-learned avenues},
  author={Ahmed, Shady E and Pawar, Suraj and San, Omer and Rasheed, Adil and Iliescu, Traian and Noack, Bernd R},
  journal={Phys. Fluids},
  fjournal={Physics of Fluids},
  volume={33},
  number={9},
  year={2021},
  publisher={AIP Publishing}
}

@book{berselli2006mathematics,
    AUTHOR = {Berselli, L. C. and Iliescu, T. and Layton, W. J.},
     TITLE = {Mathematics of Large Eddy Simulation of Turbulent Flows},
    SERIES = {Scientific Computation},
 PUBLISHER = {Springer-Verlag},
   ADDRESS = {Berlin},
      YEAR = {2006},
     PAGES = {xviii+348},
      ISBN = {978-3-540-26316-6; 3-540-26316-0},
   MRCLASS = {76F65 (35Q30 76D05 76F05)},
  MRNUMBER = {MR2185509 (2006h:76071)},
MRREVIEWER = {Bertrand Michaux},
}

@article{reyes2020projection,
  title={Projection-based reduced order models for flow problems: A variational multiscale approach},
  author={Reyes, Ricardo and Codina, Ramon},
  journal={Comput. Meth. Appl. Mech. Engrg.},
  fjournal={Computer Methods in Applied Mechanics and Engineering},
  volume={363},
  pages={112844},
  year={2020},
  publisher={Elsevier}
}

@article{stabile2019reduced,
  title={A reduced order variational multiscale approach for turbulent flows},
  author={Stabile, Giovanni and Ballarin, Francesco and Zuccarino, Giacomo and Rozza, Gianluigi},
  journal={Adv. Comput. Math.},
  fjournal={Advances in Computational Mathematics},
  volume={45},
  pages={2349--2368},
  year={2019},
  publisher={Springer}
}

@article{rebollo2017certified,
  title={On a certified {S}magorinsky reduced basis turbulence model},
  author={Rebollo, Tom{\'a}s Chac{\'o}n and Avila, Enrique Delgado and M{\'a}rmol, Macarena G{\'o}mez and Ballarin, Francesco and Rozza, Gianluigi},
   JOURNAL = {SIAM J. Numer. Anal.},
  FJOURNAL = {SIAM Journal on Numerical Analysis},
  volume={55},
  number={6},
  pages={3047--3067},
  year={2017},
  publisher={SIAM}
}

@article{ballarin2020certified,
  title={Certified Reduced Basis {VMS}-{S}magorinsky model for natural convection flow in a cavity with variable height},
  author={Ballarin, Francesco and Rebollo, Tom{\'a}s Chac{\'o}n and Avila, Enrique Delgado and M{\'a}rmol, Macarena G{\'o}mez and Rozza, Gianluigi},
  journal={Comput. Math. Appl.},
  journal={Computers \& Mathematics with Applications},
  volume={80},
  number={5},
  pages={973--989},
  year={2020},
  publisher={Elsevier}
}

@inproceedings{lilly1967representation,
  title={The representation of small-scale turbulence in numerical simulation experiments},
  author={Lilly, Douglas K},
  booktitle={Proc. IBM Sci. Comput. Symp. on Environmental Science},
  pages={195--210},
  year={1967}
}

@article{rebollo2023certified,
  title={On a certified {VMS-S}magorinsky reduced basis model with {LPS} pressure stabilisation},
  author={Rebollo, Tom{\'a}s Chac{\'o}n and {\'A}vila, Enrique Delgado and M{\'a}rmol, Macarena G{\'o}mez},
  journal={Appl. Numer. Math.},
  fjournal={Applied Numerical Mathematics},
  volume={185},
  pages={365--385},
  year={2023},
  publisher={Elsevier}
}

@article{ingimarson2022full,
  title={Full and reduced order model consistency of the nonlinearity discretization in incompressible flows},
  author={Ingimarson, Sean and Rebholz, Leo G and Iliescu, Traian},
  journal={Comput. Meth. Appl. Mech. Engrg.},
  fjournal={Computer Methods in Applied Mechanics and Engineering},
  volume={401},
  pages={115620},
  year={2022},
  publisher={Elsevier}
}

@article{strazzullo2022consistency,
  title={Consistency of the full and reduced order models for evolve-filter-relax regularization of convection-dominated, marginally-resolved flows},
  author={Strazzullo, Maria and Girfoglio, Michele and Ballarin, Francesco and Iliescu, Traian and Rozza, Gianluigi},
  journal={Int. J. Num. Meth. Eng.},
  fulljournal={International Journal for Numerical Methods in Engineering},
  volume={123},
  number={14},
  pages={3148--3178},
  year={2022},
  publisher={Wiley Online Library}
}

@article{mou2023energy,
title = {An energy-based lengthscale for reduced order models of turbulent flows},
  journal={Nucl. Eng. Des.},
  fjournal={Nuclear Engineering and Design},
volume = {412},
pages = {112454},
year = {2023},
issn = {0029-5493},
doi = {https://doi.org/10.1016/j.nucengdes.2023.112454},
url = {https://www.sciencedirect.com/science/article/pii/S0029549323003035},
author = {Changhong Mou and Elia Merzari and Omer San and Traian Iliescu},
}

@article{smagorinsky1963general,
  title={General circulation experiments with the primitive equations: I. The basic experiment},
  author={Smagorinsky, Joseph},
  journal={Monthly weather review},
  volume={91},
  number={3},
  pages={99--164},
  year={1963}
}

@book{ladyzhenskaya1985boundary,
  title     = {The Boundary Value Problems of Mathematical Physics},
  author    = {Ladyzhenskaya, O.A.},
  translator = {Lohwater, J.},
  year      = {1985},
  series    = {Applied Mathematical Sciences},
  volume    = {49},
  publisher = {Springer},
  address   = {New York}
}

@article{ladyzhenskaya1969mathematical,
  title={The mathematical theory of viscous incompressible flow},
  author={Ladyzhenskaya, Olga Aleksandrovna},
  journal={Gordon \& Breach},
  year={1969}
}

@article{beirao2005regularity,
  title={On the regularity of flows with {L}adyzhenskaya Shear-dependent viscosity and slip or non-slip boundary conditions},
  author={Beir{\~a}o da Veiga, Hugo},
  journal={Communications on Pure and Applied Mathematics: A Journal Issued by the Courant Institute of Mathematical Sciences},
  volume={58},
  number={4},
  pages={552--577},
  year={2005},
  publisher={Wiley Online Library}
}

@inproceedings{ullmann2010pod,
  title={A {POD}-{G}alerkin reduced model with updated coefficients for {S}magorinsky {LES}},
  author={Ullmann, Sebastian and Lang, Jens},
  booktitle={V {E}uropean conference on computational fluid dynamics, {ECCOMAS} {CFD}},
  pages={2010},
  year={2010}
}

@article{GPMC91,
AUTHOR = {Germano, M. and Piomelli, U. and Moin, P. and Cabot, W.H.},
TITLE = {A dynamic subgrid-scale eddy viscosity  model},
JOURNAL = {Phys. Fluids  A},
VOLUME = {3},
YEAR = {1991},
PAGES = {1760-1765}
}

@article{tsai2023accelerating,
  title={Accelerating the {G}alerkin reduced-order model with the tensor decomposition for turbulent flows},
  author={Tsai, Ping-Hsuan and Fischer, Paul and Solomonik, Edgar},
  journal={arXiv preprint arXiv:2311.03694},
  year={2023}
}

@article{manti2025symbolic,
  title={Symbolic Regression of Data-Driven Reduced Order Model Closures for Under-Resolved, Convection-Dominated Flows},
  author={Manti, Simone and Tsai, Ping-Hsuan and Lucantonio, Alessandro and Iliescu, Traian},
  journal={arXiv preprint arXiv:2502.04703},
  year={2025}
}

@article{chaturantabut2012state,
  title={A state space error estimate for {POD}-{DEIM} nonlinear model reduction},
  author={Chaturantabut, Saifon and Sorensen, Danny C},
  JOURNAL = {SIAM J. Numer. Anal.},
  FJOURNAL = {SIAM Journal on Numerical Analysis},
  volume={50},
  number={1},
  pages={46--63},
  year={2012},
  publisher={SIAM}
}

\end{document}